\newcommand{\xmark}{\ding{55}}
\newcommand{\cred}[1]{\textcolor{black}{#1}}
\definecolor{mygray}{gray}{0.6}
\newenvironment{myfont}[2][]{\csname#2\endcsname[#1]}{}
\newcommand{\stkout}[1]{\ifmmode\text{\sout{\ensuremath{#1}}}\else\sout{#1}\fi}
\newcommand{\bea}{\begin{eqnarray}}
\newcommand{\eea}{\end{eqnarray}}
\def\be{\begin{equation}}
\def\ee{\end{equation}}
\newcommand{\e}{\hspace{1pt}\mathrm{e}}
\newcommand{\ii}{\hspace{1pt}\mathrm{i}\hspace{1pt}}
\definecolor{red}{rgb}{1,0,0}
\definecolor{blue}{rgb}{0,0,1}
\definecolor{dblue}{rgb}{0,0,0.4}
\definecolor{green}{rgb}{0,1,0}
\definecolor{black}{rgb}{0,0,0}
\definecolor{white}{rgb}{1,1,1}
\definecolor{brn}{rgb}{.8,.4,.0}
\definecolor{redo}{rgb}{1,.5,.0}
\definecolor{ddgrn}{rgb}{0,0.4,0}
\definecolor{dgrn}{rgb}{0,0.55,0}
\definecolor{dbl}{rgb}{0,0,0.5}
\newcommand{\Z}{\mathbb{Z}}
\newcommand{\C}{\mathbb{C}}
\newcommand{\R}{\mathbb{R}}
\newcommand{\M}{\mathbb{M}}
\newcommand{\bH}{\mathbb{H}}
\newcommand{\eq}[1]{eq.~(\ref{#1})}
\newcommand{\prt}{\partial}
\newcommand{\bpm}{\begin{pmatrix}}
\newcommand{\epm}{\end{pmatrix}}
\newcommand{\bmm}{\begin{matrix}}
\newcommand{\emm}{\end{matrix}}
\newcommand{\cL}{ {\cal L} }
\def\Z{{\mathbb{Z}}}
\def\R{{\mathbb{R}}}
\def\C{{\mathbb{C}}}
\def \Hom{\operatorname{Hom}}
\def \H{\operatorname{H}}
\def \Z{\mathbb{Z}}
\newcommand {\emptycomment}[1]{}
\def\B{\mathrm{B}}
\newcommand{\SO}{{\rm SO}}
\newcommand{\Spin}{{\rm Spin}}
\newcommand{\U}{{\rm U}}
\newcommand{\SU}{{\rm SU}}
\newcommand{\Pin}{{\rm Pin}}
\newcommand{\rO}{{\rm O}}
\newcommand{\nn}{{\nonumber}}
\newcommand{\Sec}[1]{Sec.~\ref{#1}} 
\newcommand{\App}[1]{App.~\ref{#1}}
\newcommand{\diag}{{\rm diag}}
\newcommand{\Fig}[1]{Fig.~\ref{#1}} 
\newcommand{\Table}[1]{Table \ref{#1}}
\newtheorem{theorem}{Theorem}[section]
\newtheorem{lemma}[theorem]{Lemma}
\newtheorem{definition}[theorem]{Definition}
\newtheorem{proposition}[theorem]{Proposition}
\def\bD{{\mathbb{D}}}
\newcommand{\rC}{{\rm C}}
\newcommand{\rP}{{\rm P}}
\newcommand{\rT}{{\rm T}}
\newcommand{\rR}{{\rm R}}
\newcommand{\rF}{{\rm F}}
\newcommand{\GL}{{\rm GL}}
\newcommand{\Cl}{\mathrm{Cl}}
\newcommand{\CCl}{\mathbb{C}\mathrm{l}}
\newcommand{\Ad}{\mathrm{Ad}}
\newcommand{\Aut}{\mathrm{Aut}}
\newcommand{\jw}[1]{\textcolor{black}{#1}}
\begin{document}


\title{C-R-T Fractionalization, 
Fermions, 
and Mod 8 Periodicity
} 
%

\author{Zheyan Wan}
\email{wanzheyan@bimsa.cn}
\affiliation{Beijing Institute of Mathematical Sciences and Applications, Beijing 101408, China}

\author{Juven Wang 
}
\email{jw@cmsa.fas.harvard.edu}
\homepage{http://sns.ias.edu/~juven/}
\affiliation{Center of Mathematical Sciences and Applications, Harvard University, MA 02138, USA}
\affiliation{London Institute for Mathematical Sciences, Royal Institution, W1S 4BS, UK}

\author{Shing-Tung Yau}
\email{styau@tsinghua.edu.cn}

\affiliation{Yau Mathematical Sciences Center, Tsinghua University, Beijing 100084, China}
\affiliation{Beijing Institute of Mathematical Sciences and Applications, Beijing 101408, China}
\author{Yi-Zhuang You}
\email{yzyou@physics.ucsd.edu}
\affiliation{Department of Physics, University of California San Diego, CA 92093, USA}
\begin{abstract} 

Charge conjugation (C), mirror reflection (R), time reversal (T), and fermion parity $(-1)^{\rm F}$ are basic discrete spacetime and internal symmetries of the Dirac fermions. In this article, we determine the group, called the C-R-T fractionalization, which is a group extension of $\mathbb{Z}_2^{\rm C}\times\mathbb{Z}_2^{\rm R}\times\mathbb{Z}_2^{\rm T}$ by the fermion parity $\mathbb{Z}_2^{\rm F}$, and its extension class in all spacetime dimensions $d$, for a single-particle fermion theory. For Dirac fermions, with the canonical CRT symmetry $\mathbb{Z}_2^{\rm CRT}$, the C-R-T fractionalization has two possibilities that only depend on spacetime dimensions $d$ modulo 8, which are order-16 nonabelian groups, including the famous Pauli group. For Majorana fermions, we determine the R-T fractionalization in all spacetime dimensions $d=0,1,2,3,4\mod8$, an order-8 abelian or nonabelian group. For Weyl fermions, we determine the C or T fractionalization in all even spacetime dimensions $d$, which is an order-4 abelian group. We only have an order-2 $\mathbb{Z}_2^{\rm F}$ group for Majorana-Weyl fermions.
We determine the maximal number of linearly independent Dirac and Majorana mass terms and construct them explicitly. We also discuss how the conventional Dirac and Majorana mass terms break the symmetries C, R, or T. 
We study the domain wall dimensional reduction of the fermions and their C-R-T fractionalization: from $d$-dim Dirac to $(d-1)$-dim Dirac or Weyl and from $d$-dim Majorana to $(d-1)$-dim Majorana or Majorana-Weyl.

\end{abstract}


\maketitle


 \tableofcontents

\section{Introduction and summary}

The study of symmetries and their breaking patterns is a cornerstone of modern theoretical physics, providing deep insights into the laws that govern the behavior of fundamental particles and fields. 
Among the symmetries of particular interest are charge conjugation (C), mirror reflection (R), and time reversal (T), each representing a fundamental operation that can be applied to physical systems
\cite{SchwingerPhysRev.82.914, Pauli1955, Pauli1957, Luders1954, LUDERS19571, StreaterWightman1989}. 
Charge conjugation inverts the charge of particles, mirror reflection flips one of the spatial coordinates, and time reversal switches the direction of time's flow. These discrete symmetries, and their conservation or violation, have profound implications for our understanding of the universe at its most fundamental level.

The time-reversal T and mirror reflection R are discrete spacetime \jw{symmetry transformations} that are not part of the proper orthochronous continuous Lorentz symmetry group $\SO^+(d-1,1)$. As passive transformations on the spacetime coordinates $(t,x)=(t,x_1,\dots,x_{d-1})$,
\bea
\rT(t,x)\rT^{-1}&=&(-t,x),\nn\\
\rR_j(t,x)\rR_j^{-1}&=&(t,x_1,\dots,x_{j-1},-x_j,x_{j+1},\dots,x_{d-1}),
\eea
where T flips the time coordinate and R$_j$ flips the spatial coordinate $x_j$.

However, the charge conjugation C is not a spacetime symmetry but an internal symmetry. It can only manifest itself under an active transformation on a particle or field, such as a complex-valued Lorentz scalar boson $\phi$ (which is a function of the spacetime coordinates). The C flips between particles and anti-particles, or more generally, between energetic excitations and anti-excitations
\bea
\rC(\text{excitations})\rC^{-1}=(\text{anti-excitations})
\eea
which involves the complex conjugate (denoted *). The active transformations act on this Lorentz scalar boson $\phi$ as
\bea
\rC\phi\rC^{-1}&=&\phi^*,\nn\\
\rR_j\phi\rR_j^{-1}&=&\phi(t,x_1,\dots,x_{j-1},-x_j,x_{j+1},\dots,x_{d-1}),\nn\\
\rT\phi\rT^{-1}&=&\phi(-t,x).
\eea

In this scalar boson example, C, R$_j$, and T are $\Z_2$ involutions and commutative with each other. Thus the C-R-T symmetry form a direct product group $G_{\phi}=\Z_2^{\rC}\times\Z_2^{\rR_j}\times\Z_2^{\rT}$.
However, this is not the case in general. We find that all these discrete C, R, and T symmetries can form a nonabelian finite group for Dirac fermions in all spacetime dimensions.

In recent times, the concept of symmetry fractionalization has gained traction, offering a nuanced view of how symmetries behave in certain quantum systems. 
By \cite{2109.15320}, the symmetry fractionalization means the following: the matter field is not in the linear representation of the original symmetry group $G$, but in the projective representation of $G$ and in the linear representation of the extended group $\tilde{G}$. Namely, there is a group extension $1\to N\to\tilde{G}\to G\to1$ where $N$ is the normal subgroup and $G$ is the quotient group of the total group $\tilde{G}$, so $\tilde{G}/N=G$. A well-known example is the gapped 1+1d isospin-1 Haldane chain with $G=\SO(3)$ symmetry \cite{Affleck1988nt1989QuantumSpinChainsHaldaneGap}, whose 0+1d boundary can have a two-fold degenerate isospin-1/2 doublet with $\tilde{G}=\SU(2)$ symmetry and $N=\Z_2$. This doublet is in a projective representation of $G=\SO(3)$, also in a linear representation of $\tilde{G}=\SU(2)$.

In this article, we will find the analogous C-R-T symmetry fractionalization. In contrast to a scalar boson's $G_{\phi}=\Z_2^{\rC}\times\Z_2^{\rR_j}\times\Z_2^{\rT}$, we discover an order 16 nonabelian $\tilde{G}_{\psi}={G}_{\rm D}$ for a Dirac fermion in all spacetime dimensions. 

\begin{definition}
    The C-R$_j$-T fractionalization for a Dirac fermion in any spacetime dimension $d$ is the group $\tilde{G}_{\psi}={G}_{\rm D}$ generated by the symmetry transformations C, R$_j$, and T which is a group extension
\bea\label{eq:extension}
1\to\Z_2^{\rF}\to \tilde{G}_{\psi}={G}_{\rm D}\to G_{\phi}=\Z_2^{\rC}\times\Z_2^{\rR_j}\times\Z_2^{\rT}\to1.
\eea 
\end{definition}
The fermion parity $\Z_2^{\rF}$ generated by $(-1)^{\rF}:\psi\mapsto-\psi$ plays a crucial role in the group extension \eqref{eq:extension}. 
The Dirac fermion $\psi$ is in a projective representation of $G_{\phi}$, also in an (anti-)linear representation of $\tilde{G}_{\psi}$. (It is anti-linear because $\tilde{G}_{\psi}$ contains the anti-linear time-reversal symmetry.)

This group extension \eqref{eq:extension} is classified by its extension class in $\H^2(\B(\Z_2^{\rC}\times\Z_2^{\rR_j}\times\Z_2^{\rT}),\Z_2^{\rF})=\Z_2^6$. We find that only 8 of the 64 kinds of group extensions are realized as the C-R$_j$-T fractionalization with the canonical CRT for a Dirac fermion. See \Table{tab:CPRT-group-dd-1} and \Table{tab:CPRT-group-dd-2}. We require that the combined CRT symmetry is canonical \cite{HasonKomargodskiThorngren1910.14039,1910.14046, Wang:2019obe1910.14664}. This constraint rules out some possibilities of the C-R$_j$-T fractionalization. 

A Majorana fermion is a Dirac fermion with the trivial charge conjugation symmetry C. A Majorana fermion only exists in spacetime dimension $d=0,1,2,3,4\mod8$.
\begin{definition}
    The R$_j$-T fractionalization for a Majorana fermion in any spacetime dimension $d=0,1,2,3,4\mod8$ is the group ${G}_{\rm M}$ generated by the symmetry transformations R$_j$ and T which is a group extension
\bea\label{eq:extension-Majorana}
1\to\Z_2^{\rF}\to {G}_{\rm M}\to \Z_2^{\rR_j}\times\Z_2^{\rT}\to1.
\eea 
\end{definition}
This group extension \eqref{eq:extension-Majorana} is classified by its extension class in $\H^2(\B(\Z_2^{\rR_j}\times\Z_2^{\rT}),\Z_2^{\rF})=\Z_2^3$. We find that only 4 of the 8 kinds of group extensions are realized as the R$_j$-T fractionalization for a Majorana fermion. See \Table{tab:CPRT-group-dd-Majorana}.

If and only if the spacetime dimension $d$ is even, a Dirac fermion can be decomposed as two Weyl fermions. For $d=2\mod4$, a Weyl fermion has only the C symmetry. For $d=0\mod4$, a Weyl fermion has only the T symmetry. See \Table{table:CPT}.
\begin{definition}
    The C or T fractionalization for a Weyl fermion in any even spacetime dimension $d$ is the group ${G}_{\rm W}$ generated by the symmetry transformations C or T which is a group extension
\bea\label{eq:extension-Weyl}
1\to\Z_2^{\rF}\to {G}_{\rm W}\to \Z_2^{\rC\text{ or }\rT}\to1.
\eea 
\end{definition}
This group extension \eqref{eq:extension-Weyl} is classified by its extension class in $\H^2(\B(\Z_2^{\rC\text{ or }\rT}),\Z_2^{\rF})=\Z_2$. We find that all of the 2 kinds of group extensions are realized as the C or T fractionalization for a Weyl fermion. See \Table{table:CT-group-dd-Weyl}.

A Majorana-Weyl fermion is a fermion which is both Majorana and Weyl. A Majorana-Weyl fermion only exists in spacetime dimension $d=2\mod8$. A Majorana-Weyl fermion has only the trivial C symmetry. 
\begin{definition}
    The trivial C fractionalization for a Majorana-Weyl fermion in any spacetime dimension $d=2\mod8$ is the group ${G}_{\rm MW}=\Z_2^{\rF}$.
\end{definition}

In \cite{2109.15320}, the second author studied the analogous C-P-T fractionalization in 3+1d and 1+1d. In this article, we consider the mirror reflection R instead of the parity P. The reasons are as follows:
\begin{itemize}
    \item 
The P is a composition of several mirror reflections, so R is more basic.
In odd dimensional spacetime, $\rP\subset \SO(d-1)\subset\SO^+(d-1,1)$, so P is not an independent discrete symmetry. We should replace P with R. For example, the
CPT theorem \cite{SchwingerPhysRev.82.914, Pauli1955, Pauli1957, Luders1954, LUDERS19571, StreaterWightman1989} 
 should be called the CRT theorem \cite{Witten1508.04715, Freed1604.06527} in any general spacetime dimension.

   \item 
The CRT theorem asserts that the symmetry $H$ of any relativistic QFT extends to a larger symmetry $G$ such that $H$ is a subgroup of index 2 of $G$, elements of $G\setminus H$ act antilinearly, and CRT is an element of $G\setminus H$ \cite{Freed1604.06527}. So there is a natural CRT symmetry for any relativistic QFT but it is not unique.
 However, there is a unique (up to rotations) canonical CRT which is obtained by breaking all the internal global symmetries, it is unbreakable \cite{HasonKomargodskiThorngren1910.14039}.
The domain wall dimensional reduction requires that the combined symmetry CRT is canonical \cite{HasonKomargodskiThorngren1910.14039,1910.14046, Wang:2019obe1910.14664}, so we will only consider the case when CRT is canonical. We observe that CPT can not be canonical in some dimensions. For example, the CPT of the C-P-T fractionalization in 3+1d in \cite{2109.15320} can not be modified to be canonical. We will see later that CRT can be chosen to be canonical in any dimension.
So we consider R instead of P.
\end{itemize}

This article aims to dissect the intricate concept of C-R-T fractionalization, a group extension that has profound implications in various spacetime dimensions and for different types of fermionic fields. Our results are multifaceted and encompass a range of scenarios:
\begin{itemize}
    \item 
    We determine the C-R-T fractionalization with the canonical CRT and its extension class for Dirac fermions in any spacetime dimension, and we find an 8-periodicity. See Theorem \ref{main}, \Table{tab:CPRT-group-dd-1}, and \Table{tab:CPRT-group-dd-2}.
    
\item 
We determine the R-T fractionalization for Majorana fermions in any spacetime dimension $d=0,1,2,3,4\mod8$. See \Table{tab:CPRT-group-dd-Majorana}.

\item
We determine the C or T fractionalization for Weyl fermions in any even spacetime dimension. See \Table{table:CT-group-dd-Weyl}.

\item 
We determine the maximal number of linearly independent Dirac and Majorana mass terms and construct them explicitly. We also discuss how the conventional Dirac and Majorana mass terms break the symmetries C, R, or T. See \Table{table:CRT-break-Dirac} and \Table{table:CRT-break-Majorana}.

\item
\jw{We study the domain wall dimensional reduction of the C-R-T fractionalization and R-T fractionalization. See \eqref{eq:reduction-odd} and \eqref{eq:reduction-even}. We also study the domain wall dimensional reduction of fermions. These results are consistent. See Theorem \ref{thm-DW}, \Fig{fig:domain-wall-fermion}, and \Table{table:domain-wall-fermion}. }
\end{itemize}

In \cite{LWWYY}, we study similar problems in a different Hamiltonian approach. We determine the group structure generated by C-R-T symmetry and the full internal symmetry (not only the fermion parity $\Z_2^{\rF}$). We also study the symplectic Majorana fermions \cite{Stone:2020vva2009.00518}. We study all the Dirac and Majorana mass terms and their interplay with the symmetries C, R, and T. 

In this article, we determine the maximal number of linearly independent Dirac and Majorana mass terms and construct them explicitly. However, we will only consider the conventional Dirac and Majorana mass term $\psi^{\dagger}\Gamma^0\psi$.

In this article, we mainly consider the Lorentz signature version of spacetime symmetry, such as $\SO^+(d-1,1)$ and $\Spin(d-1,1)$. However, the readers may wonder what is the relation between the Lorentz signature and the Euclidean signature of our theory.
By \cite{Freed1604.06527}, the unitarity of Lorentz is equivalent to the reflection positivity of Euclidean under Wick rotation. The (anti)-unitary C, R, and T symmetries in Lorentz become reflection-positive C and R symmetries in Euclidean under Wick rotation.

In \Sec{sec:Clifford}, we introduce the mathematical definitions of Dirac fermions, Weyl fermions, Majorana fermions, Majorana-Weyl fermions, charge conjugation, mirror reflection, and time reversal.
In \Sec{sec:Dirac},
we determine the C-R-T fractionalization with the canonical CRT and its extension class for Dirac fermions in any spacetime dimension, and we find an 8-periodicity.
In \Sec{sec:Majorana}, we determine the R-T fractionalization for Majorana fermions in any spacetime dimension $d=0,1,2,3,4\mod8$.
We also determine the C or T fractionalization for Weyl fermions in any even spacetime dimension.
In \Sec{sec:mass}, we determine the maximal number of Dirac and Majorana mass terms and construct them explicitly. We also discuss how the conventional Dirac and Majorana mass terms break the symmetries C, R, or T.
In \Sec{sec:domain-wall}, 
we study the domain wall dimensional reduction of the C-R-T fractionalization, R-T fractionalization, and fermions.
In \Sec{sec:conclusion}, we summarize the results in this article.

\section{Clifford algebras and representations}\label{sec:Clifford}

 A fermion is a particle that follows Fermi–Dirac statistics. A fermion in a Lorentz invariant field theory is a Grassmann-valued Lorentz covariant spinor. Namely, the fermion is both a spinor and a Grassmann number satisfying the anti-commutation condition. Upon the field quantization, a fermion becomes a field operator, which satisfies the anti-commutation condition of the fermion field operator. 
There are many varieties of fermions, with the four most common types being:
\begin{itemize}
    \item
Dirac fermions, 
    \item 
    Weyl fermions,

\item
Majorana fermions
, and
\item 
Majorana-Weyl fermions.
\end{itemize}

Mathematically, Dirac
fermions, Weyl fermions, Majorana fermions, and Majorana-Weyl fermions are defined in Definitions \ref{def-Dirac} (or \ref{def-Dirac'}), \ref{def-Weyl}, and \ref{def-Majorana}.
(these definitions are consistent with those given in physics textbooks \cite{Peskin1995, Weinberg1995, Zee2003, Srednicki2007, Polchinski:1998rr}). 

More precisely, Dirac fermions are defined as maps from the Minkowski spacetime $\R^{d-1,1}$ to the complex spin representation of the Spin group $\Spin_{d-1,1}$ or maps from the Minkowski spacetime $\R^{d-1,1}$ to an irreducible complex representation of the Clifford algebra $\Cl_{d-1,1}$. Mathematically, a representation of a group $G$ (or an algebra $A$) is a vector space $V$ together with a homomorphism $G\to\GL(V)$ ($A\to\text{End}(V)$) where $\GL(V)$ is the group of invertible linear transformations of $V$ ($\text{End}(V)$ is the algebra of endomorphisms of $V$), see \Sec{sec:representation}. The irreducible complex representation of the Clifford algebra $\Cl_{d-1,1}$ restricts as the complex spin representation of the Spin group $\Spin_{d-1,1}$: the Spin group $\Spin_{d-1,1}$ is a group which is a subset of the Clifford algebra $\Cl_{d-1,1}$, see Definition \ref{def-Pin}. For even $d$, the complex spin representation of the Spin group $\Spin_{d-1,1}$ is the direct sum of two complex semi-spin representations and maps from the Minkowski spacetime $\R^{d-1,1}$ to the two complex semi-spin representations are defined as left and right Weyl fermions respectively. Majorana fermions are defined as Dirac fermions with the trivial charge conjugation symmetry. Majorana-Weyl fermions are defined as fermions which are both Majorana and Weyl.
In the language of fiber bundle theory, Dirac fermions, Weyl fermions, Majorana fermions, and Majorana-Weyl fermions are sections of certain trivial fiber bundles over the Minkowski spacetime $\R^{d-1,1}$.

In order to give the mathematical definitions of Dirac fermions, Weyl fermions, Majorana fermions, and Majorana-Weyl fermions, we first introduce some mathematical notions, such as Clifford algebras, Pin and Spin groups, their representations, and bundle maps.
The main reference is \cite{Spin-geometry}.

\subsection{Clifford algebras}

Mathematically, a vector space over a field ($\R$ or $\C$) is a set with two operations: addition and scalar multiplication (the scalar is in the field) which satisfy certain conditions; a ring is a set with two operations: addition and multiplication which satisfy certain conditions; an algebra over a field ($\R$ or $\C$) is a set with three operations: addition, scalar multiplication (the scalar is in the field), and multiplication such that addition and scalar multiplication form a vector space, addition and multiplication form a ring, and they satisfy certain compatibility conditions.

Let $V$ be a real vector space of dimension $n$ with a \jw{nondegenerate quadratic form $Q$}
of signature $(p,q)$, $p+q=n$. \jw{The Clifford algebra $\Cl_{p,q}(V)$ is the algebra generated by $V$} with the relation
\bea
v^2=-Q(v)1,\;\;\;\forall v\in V.
\eea
This is equivalent to 
\bea
vw+wv=-2f_Q(v,w)1,\;\;\;\forall v,w\in V
\eea
where 
\bea
f_Q(v,w)=\frac{1}{2}(Q(v+w)-Q(v)-Q(w))
\eea
is the associated bilinear form of $Q$. $Q$ is defined to be nondegenerate if $f_Q$ is nondegenerate, namely $f_Q(v,w)=0$ for all $w\in V$ implies that $v=0$. $Q$ is defined to be of signature $(p,q)$ if $p$ is the dimension of the maximal vector subspace of $V$ on which $Q>0$ and $q$ is the dimension of the maximal vector subspace of $V$ on which $Q<0$.
The algebra generated by $V$ is defined as the direct sum of the tensor products $V^{\otimes n}$ of $n$ copies of $V$ for all $n\ge0$ as a vector space, and the multiplication is given by the tensor product.

If $e_1,\dots, e_n$ is an orthogonal basis for $V$ with $e_i^2=1$ for $1\le i\le p$, $e_i^2=-1$ for $p+1\le i\le n$, then $e_ie_j=-e_je_i$, $i\ne j$. The elements $e_I=e_{i_1}\cdots e_{i_k}$, $I=\{1\le i_1<i_2<\cdots<i_k\le n\}$ form a basis for $\Cl_{p,q}(V)$. So $\dim\Cl_{p,q}(V)=2^n$.
Note that as vector spaces, $\Cl_{p,q}(V)$ is isomorphic to the exterior algebra generated by $V$, but the multiplications are different, e.g. $e_ie_i=\pm1\ne0=e_i\wedge e_i$.

We define 
an algebra homomorphism $\alpha(e_I)=(-1)^ke_I=(-1)^{|I|}e_I$ and extend linearly to $\Cl_{p,q}(V)$. Then $\Cl_{p,q}(V)$ is a $\Z_2$-graded algebra
\bea
\Cl_{p,q}(V)=\Cl_{p,q}^0(V)\oplus \Cl_{p,q}^1(V)
\eea
where $\Cl_{p,q}^i(V)=\{x\in \Cl_{p,q}(V)| \alpha(x)=(-1)^ix\}$ and
\bea
\Cl_{p,q}^i(V)\cdot \Cl_{p,q}^j(V)\subseteq \Cl_{p,q}^{i+j}(V).
\eea
We write $\Cl_{p,q}\equiv\Cl_{p,q}(V)$ where $V=\R^{p+q}$.

\subsection{The groups Pin and Spin}

We now consider the multiplicative group of units in the Clifford algebra,
which is defined to be the subset
\bea
\Cl_{p,q}^*(V):=\{x\in \Cl_{p,q}(V)\mid\exists x^{-1}\text{ with }x^{-1}x=xx^{-1}=1\}.
\eea
The group of units always acts naturally as automorphisms of the algebra. That is, there is a homomorphism
\bea
\Ad: \Cl_{p,q}^*(V)\to \Aut(\Cl_{p,q}(V))
\eea
called the adjoint representation, which is given by 
\bea
\Ad_{x}(y):=x y x^{-1}.
\eea
\begin{proposition}(\cite[Proposition I.2.2]{Spin-geometry})\label{adjoint}
Let $v\in V\subset \Cl_{p,q}(V)$ be an element with $Q(v)\ne 0$. Then $\Ad_v(V)=V$. In fact, for all $w\in V$, the following equation holds:
\bea\label{Ad}
-\Ad_v(w)=w-2\frac{f_Q(v,w)}{Q(v)}v.
\eea
\end{proposition}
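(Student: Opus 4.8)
The plan is to verify the identity \eqref{Ad} by direct computation in the Clifford algebra, and then read off that $\Ad_v(V)=V$ as an immediate consequence. First I would observe that since $Q(v)\ne 0$, the element $v$ is invertible: from the defining relation $v^2=-Q(v)1$ we get $v^{-1}=-v/Q(v)$. This is the only place invertibility is used, and it reduces everything to algebra in $\Cl_{p,q}(V)$.

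Next, for an arbitrary $w\in V$, I would compute $vwv^{-1}=-\tfrac{1}{Q(v)}\,vwv$. The key input is the polarized Clifford relation $vw+wv=-2f_Q(v,w)1$, which lets me rewrite $vw=-2f_Q(v,w)1-wv$. Substituting,
\bea
vwv=(-2f_Q(v,w)1-wv)v=-2f_Q(v,w)v-wv^2=-2f_Q(v,w)v+Q(v)w,
\eea
using $v^2=-Q(v)1$ in the last step. Therefore
\bea
\Ad_v(w)=vwv^{-1}=-\frac{1}{Q(v)}\bigl(-2f_Q(v,w)v+Q(v)w\bigr)=\frac{2f_Q(v,w)}{Q(v)}v-w,
\eea
which is exactly $-\Ad_v(w)=w-2\frac{f_Q(v,w)}{Q(v)}v$, as claimed in \eqref{Ad}.

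Finally, since the right-hand side of \eqref{Ad} is a real linear combination of $v\in V$ and $w\in V$, it lies in $V$; hence $\Ad_v(w)\in V$ for every $w\in V$, so $\Ad_v(V)\subseteq V$. Applying the same argument to $v^{-1}$ (or simply noting $\Ad_v$ is an algebra automorphism with inverse $\Ad_{v^{-1}}$, which also preserves $V$ by the same formula) gives the reverse inclusion, so $\Ad_v(V)=V$. There is no real obstacle here; the only point requiring a little care is bookkeeping the signs from $\alpha$ and from $v^2=-Q(v)1$ versus the conventions in which the quadratic form enters, and making sure the stated formula carries the correct overall minus sign — which is why the proposition is phrased in terms of $-\Ad_v(w)$ rather than $\Ad_v(w)$ itself.
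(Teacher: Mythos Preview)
Your proof is correct. The paper itself does not supply a proof of this proposition; it simply quotes the statement from \cite[Proposition I.2.2]{Spin-geometry}, and your computation is exactly the standard one found there. One small remark: your closing comment about ``bookkeeping the signs from $\alpha$'' is a bit misleading, since the grading automorphism $\alpha$ plays no role in this calculation --- the only sign conventions that matter are $v^2=-Q(v)1$ and $vw+wv=-2f_Q(v,w)1$, both of which you handled correctly.
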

This leads us naturally to consider the subgroup of elements $x\in \Cl_{p,q}^*(V)$ such that $\Ad_{x}(V)=V$. By Proposition \ref{adjoint}, this group contains all elements $v\in V$ with $Q(v)\ne0$. Furthermore, we see from \eqref{Ad} that whenever $Q(v)\ne0$, the transformation $\Ad_v$ preserves the quadratic form $Q$. That is, $(\Ad_v^*Q)(w)=Q(\Ad_v(w))=Q(w)$ for all $w\in V$. Therefore, we define $P_{p,q}(V)$ to be the subgroup of $\Cl_{p,q}^*(V)$ generated by the elements $v\in V$ with $Q(v)\ne0$, and observe that there is a homomorphism 
\bea
P_{p,q}(V)\xrightarrow{\Ad}\rO_{p,q}(V)
\eea
where $\rO_{p,q}(V)$ is the group of invertible linear transformations of $V$ that preserve the quadratic form $Q$.

\begin{definition}(\cite[Definition I.2.3]{Spin-geometry})\label{def-Pin}
Let $\Pin_{p,q}(V)$ be the set of elements of $\Cl_{p,q}(V)$ which can be written in the form $v_1v_2\cdots v_k$ where $v_i\in V$ with $Q(v_i)=\pm1$. Those elements $v_1v_2\cdots v_k\in \Pin_{p,q}(V)$ for which $k$ is even form $\Spin_{p,q}(V)$. Note that 
$\Spin_{p,q}(V)=\Pin_{p,q}(V)\cap \Cl_{p,q}^0(V)$.
\end{definition}

We write $\Pin_{p,q}\equiv\Pin_{p,q}(V)$ and $\Spin_{p,q}\equiv\Spin_{p,q}(V)$ where $V=\R^{p+q}$.

\subsection{Representations}\label{sec:representation}

Mathematically, a representation of a group $G$ (or an algebra $A$) is defined as a vector space $V$ together with a homomorphism $G\to\GL(V)$ 
(or $A\to\text{End}(V)$) where $\GL(V)$ is the group of invertible linear transformations of $V$ (or $\text{End}(V)$ is the algebra of endomorphisms of $V$).\\

By \cite[Theorem I.3.7 and Table II]{Spin-geometry},
\bea\label{eq:Clifford}
\Spin_{d-1,1}\subset\Cl_{d-1,1}^0\cong\Cl_{d-2,1}\cong\left\{\begin{array}{cccccccc}\C(2^{\frac{d}{2}-1})&d=0\mod8\\
\R(2^{\frac{d-1}{2}})&d=1\mod8\\
\R(2^{\frac{d}{2}-1})\oplus\R(2^{\frac{d}{2}-1}) &d=2\mod8\\
\R(2^{\frac{d-1}{2}})&d=3\mod8\\
\C(2^{\frac{d}{2}-1})&d=4\mod8\\
\bH(2^{\frac{d-3}{2}})&d=5\mod8\\
\bH(2^{\frac{d}{2}-2})\oplus\bH(2^{\frac{d}{2}-2})&d=6\mod8\\
\bH(2^{\frac{d-3}{2}})&d=7\mod8
\end{array}
\right.
\eea
where $A(n)$ is the $n\times n$ matrix algebra over the algebra $A$ for $A=\R,\C,\bH$.

 By the proof of \cite[Theorem I.3.7]{Spin-geometry}, the isomorphism $f:\Cl_{d-2,1}\to\Cl_{d-1,1}^0$ is given by $f(e_{\mu})=e_{d-1}e_{\mu}$ for $\mu=0,1,\dots,d-2$ and extending it linearly where $e_{\mu}$ ($\mu=0,1,\dots,d-1$) form an orthogonal basis of $V=\R^d$ with $e_0^2=1$ and $e_j^2=-1$ ($j=1,2,\dots,d-1$).

 Any complex representation of $\Cl_{p,q}$ automatically extends to a representation of $\Cl_{p,q}\otimes_{\R}\C\cong\CCl_{p+q}$, the complexification of the real Clifford algebra $\Cl_{p,q}$.

By \eqref{eq:Clifford} or the classification of Clifford algebras \cite[Theorem I.5.8]{Spin-geometry}, 
\bea
\Cl_{d-2,1}\otimes_{\R}\C\cong\CCl_{d-1}\cong\left\{\begin{array}{ll} \C(2^{\frac{d-1}{2}}),&\text{for odd }d,\\
\C(2^{\frac{d}{2}-1})\oplus \C(2^{\frac{d}{2}-1}),&\text{for even }d.
\end{array} \right.
\eea

\begin{definition}\label{def-spin-rep}
\begin{enumerate}
\item
When $d$ is even, $\Cl_{d-2,1}$ has two irreducible complex representations. 
The complex semi-spin representation $S_{\pm}$ of $\Spin_{d-1,1}$ is the homomorphism
\bea
\Delta_{d-1,1,\pm}^{\C}:\Spin_{d-1,1}\to \GL_{\C}(S_{\pm})
\eea
given by restricting an irreducible 
complex representation $\Cl_{d-2,1}\to\Hom_{\C}(S_{\pm},S_{\pm})$ to $\Spin_{d-1,1}\subset\Cl_{d-1,1}^0\cong\Cl_{d-2,1}$. 
The complex spin representation $S$ of $\Spin_{d-1,1}$ is the direct sum of the two complex semi-spin representations $S_{+}$ and $S_{-}$. Namely,
\bea
\Delta_{d-1,1}^{\C}=\Delta_{d-1,1,+}^{\C}\oplus \Delta_{d-1,1,-}^{\C}.
\eea
\item 
When $d$ is odd, $\Cl_{d-2,1}$ has only one irreducible complex representation.
The complex spin representation $S$ of $\Spin_{d-1,1}$ is the homomorphism
\bea
\Delta_{d-1,1}^{\C}:\Spin_{d-1,1}\to \GL_{\C}(S)
\eea
given by restricting the irreducible 
complex representation $\Cl_{d-2,1}\to\Hom_{\C}(S,S)$ to $\Spin_{d-1,1}\subset\Cl_{d-1,1}^0\cong\Cl_{d-2,1}$. 
\end{enumerate}
\end{definition}

Mathematically, a fiber bundle \cite{Nakahara} is a quintuple $(E,\pi,M,F,G)$ satisfying the local triviality condition where $E$ is the total space, $M$ is the base space, $\pi:E\to M$ is the projection, $F$ is the fiber, and $G$ is the structure group. A section of a fiber bundle $(E,\pi,M,F,G)$ is a map $s:M\to E$ such that $\pi\circ s=\text{id}_{M}$.
Since the Minkowski spacetime $\R^{d-1,1}$ is contractible, any fiber bundle over the Minkowski spacetime is trivial \cite{Nakahara}, namely the product of the Minkowski spacetime and the fiber. A section of a trivial fiber bundle can be regarded as a map from the base space to the fiber.
We can construct the complex (semi-)spin bundle over the Minkowski spacetime $\R^{d-1,1}$ whose structure group is $\Spin_{d-1,1}$ and fiber is the complex (semi-)spin representation of $\Spin_{d-1,1}$.

\begin{definition}\label{def-Dirac}
    The Dirac fermions are sections of the complex spin bundle over the Minkowski spacetime $\R^{d-1,1}$. 
\end{definition}
We will give another equivalent definition for Dirac fermions later, see Definition \ref{def-Dirac'}.

\begin{definition}\label{def-Weyl}
    For even $d$, the complex spin representation of $\Spin_{d-1,1}$ is a direct sum of two complex semi-spin representations, the left and right Weyl fermions are sections of the two complex semi-spin bundles over the Minkowski spacetime $\R^{d-1,1}$ respectively. Namely, left and right Weyl fermions $\psi_L$ and $\psi_R$ are the components of the Dirac fermion $\psi=\left(\begin{array}{cc}\psi_L\\\psi_R\end{array}\right)$.
\end{definition}

The Dirac Lagrangian $\cL=\bar{\psi}(\ii\Gamma^{\mu}\partial_{\mu}-m)\psi$ where $\bar{\psi}\equiv\psi^{\dagger}\Gamma^0$, $\bar{\psi}\ii\Gamma^{\mu}\partial_{\mu}\psi$ is the kinetic term, and $\bar{\psi}m\psi$ is the mass term.  
Free Dirac fermions with mass $m$ satisfy the Dirac equation 
\bea\label{eq:Dirac}
(\ii\Gamma^{\mu}\partial_{\mu}-m)\psi=0.
\eea
It involves the Gamma matrices which represent the real Clifford algebra $\Cl_{d-1,1}$ with
\bea\label{eq:anticommutator}
\{\Gamma^{\mu},\Gamma^{\nu}\}=2\eta^{\mu\nu}.
\eea
Here $\{,\}$ means the anticommutator. We consider the Lorentz signature $\eta^{\mu\nu}=\diag(+,-,-,\dots,-)$.
Let $d$ denote the total spacetime dimension. 
We shall use $\mu=0,1,2,\dots, d-1$ to denote the spacetime index and $j=1,2,\dots,d-1$ to denote the space index.

Dirac fermions can also be regarded as maps from the Minkowski spacetime $\R^{d-1,1}$ to an irreducible complex representation of $\Cl_{d-1,1}$. We have the following alternative definition for Dirac fermions.

\begin{definition}\label{def-Dirac'}
    The Dirac fermions are sections of the Clifford module bundle\footnote{A Clifford module bundle is a vector bundle whose fibers are Clifford modules, the representations of Clifford algebras.} over the Minkowski spacetime $\R^{d-1,1}$ with fiber an irreducible complex representation of $\Cl_{d-1,1}$. For even $d$, $\Cl_{d-1,1}$ has a unique irreducible complex representation. For odd $d$, $\Cl_{d-1,1}$ has two non-isomorphic irreducible complex representations.
    We choose one of them such that the image of the volume element in the representation is $-1$, see Lemma \ref{volume-element}.
\end{definition}
The following lemma is well-known, but we still include its proof here.
\begin{lemma}\label{volume-element}
For odd $d=2k+3$, the two non-isomorphic irreducible complex representations of $\Cl_{d-1,1}$ can be distinguished by the image of the volume element $\omega=\ii^{k+1}e_0e_1\cdots e_{2k+2}$ in the representation where $e_{\mu}$ ($\mu=0,1,\dots,2k+2$) are generators of $\Cl_{d-1,1}$ with $e_0^2=1$ and $e_j^2=-1$ for $j=1,\dots,2k+2$.
The image of the volume element in the representation is either 1 or $-1$. If $(S,\Delta:\Cl_{d-1,1}\to\text{End}_{\C}(S))$ is one of the two irreducible complex representations, then the other one is $(S,\Delta':\Cl_{d-1,1}\to\text{End}_{\C}(S))$ where $\Delta'(e_{\mu})=\Delta(-e_{\mu})$.
\end{lemma}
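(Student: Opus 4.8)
The plan is to analyze the structure of $\CCl_{d-1}$ for odd $d=2k+3$, which by the classification is the single matrix algebra $\C(2^{\frac{d-1}{2}}) = \C(2^{k+1})$. First I would recall that an irreducible complex representation of $\Cl_{d-1,1}$ extends to $\CCl_{d-1,1}\cong\CCl_d$, but more usefully, since $d-1 = 2k+2$ is even, the even subalgebra relation $\Cl_{d-1,1}^0 \cong \Cl_{d-2,1}$ together with the complexification gives that $\Cl_{d-1,1}$ has exactly two non-isomorphic irreducible complex representations (reflecting the two factors of $\CCl_{d-1}\cong\C(2^{k+1})\oplus\C(2^{k+1})$ when we pass through the even/odd dimension shift — or more directly, for odd-dimensional Clifford algebras the center is two-dimensional, spanned by $1$ and the volume element). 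The key computation is that the volume element $\omega = \ii^{k+1}e_0 e_1 \cdots e_{2k+2}$ is central in $\Cl_{d-1,1}$ when $d-1$ is even (i.e.\ $d$ odd), and squares to $+1$.

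The main steps, in order: (1) Verify $\omega$ is central: compute $\omega e_\mu$ versus $e_\mu \omega$ using the anticommutation relations $e_i e_j = -e_j e_i$ for $i\ne j$ and the total number of generators, $2k+3$, being odd — moving $e_\mu$ past the product $e_0\cdots e_{2k+2}$ picks up a sign $(-1)^{2k+2} = +1$ from the $2k+2$ generators other than $e_\mu$ itself, so $\omega$ commutes with every generator, hence with all of $\Cl_{d-1,1}$. (2) Compute $\omega^2$: reordering $(e_0 e_1\cdots e_{2k+2})^2$ to pair up each $e_\mu$ with itself requires a count of transpositions; combined with the factor $\ii^{2(k+1)} = (-1)^{k+1}$ and the product of the $e_\mu^2 = (+1)(-1)^{2k+2} = -1$ (one $+1$ from $e_0$ and $2k+2$ factors of $-1$), one finds $\omega^2 = 1$. (3) Since $\omega$ is central and $\omega^2 = 1$, in any irreducible complex representation $\Delta$, Schur's lemma forces $\Delta(\omega)$ to be a scalar, and that scalar is $\pm 1$. (4) Exhibit the two representations explicitly: given one irrep $(S,\Delta)$, define $\Delta'(e_\mu) := \Delta(-e_\mu) = -\Delta(e_\mu)$; check this still satisfies the Clifford relations (the defining relations are quadratic in the generators, so the overall sign cancels), and compute $\Delta'(\omega) = \ii^{k+1}(-1)^{2k+3}\Delta(e_0\cdots e_{2k+2}) = -\Delta(\omega)$ since $2k+3$ is odd — so the two representations have opposite volume-element eigenvalues and are therefore non-isomorphic. (5) Conclude these are the only two: any irrep is determined up to isomorphism by the central character, which is determined by $\Delta(\omega)\in\{+1,-1\}$, and both values occur.

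The main obstacle I expect is bookkeeping the signs in steps (1) and (2) — in particular getting the parity of the permutation that reorders $(e_0 e_1 \cdots e_{2k+2})^2$ correct, and correctly tracking the interplay between that combinatorial sign, the $\ii^{k+1}$ prefactor, and the signature-dependent product $\prod_\mu e_\mu^2$. A clean way to handle this is to induct on $k$ or to use the standard identity for the square of the chirality/volume element in signature $(p,q)$: for $\omega = \ii^{(p+q+1)/2 \text{-type power}} e_1\cdots e_{p+q}$ in the odd case one has $\omega^2 = (-1)^{(p+q)(p+q+1)/2 + q} \cdot (\text{power of }\ii)^2$, and one just plugs in $p = 2k+2$, $q=1$; the $\ii$-power $k+1$ is precisely the one chosen in the lemma statement to make this come out to $+1$. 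Beyond that, steps (3)–(5) are immediate applications of Schur's lemma and the classification already quoted from \cite{Spin-geometry}, so there is no real conceptual difficulty — the lemma is "well-known" as stated, and the proof is essentially this sign computation together with the explicit construction $\Delta' = \Delta\circ(-\id_V)$.
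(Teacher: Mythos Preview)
Your proposal is correct and follows essentially the same approach as the paper: establish that $\omega$ is central with $\omega^2=1$, deduce $\Delta(\omega)=\pm1$ in any irrep, and use $\Delta'(e_\mu)=-\Delta(e_\mu)$ to exhibit the other irrep. The only cosmetic difference is that you invoke Schur's lemma directly to get $\Delta(\omega)$ scalar, whereas the paper phrases this as the eigenspace decomposition $S=S_+\oplus S_-$ with one summand forced to vanish by irreducibility; you also spell out the sign bookkeeping for centrality and $\omega^2=1$ that the paper simply asserts.
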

\begin{proof}
    By the relations in the Clifford algebra, $\omega^2=1$ and $\omega$ lies in the center of $\Cl_{d-1,1}$. Let $\Gamma=\Delta(\omega)$ be the image of $\omega$ in the irreducible complex representation $(S,\Delta:\Cl_{d-1,1}\to\text{End}_{\C}(S))$, then $\Gamma^2=1$ and $S$ decomposes as the direct sum $S=S_+\oplus S_-$ where $S_{\pm}=\ker(\Gamma\mp1)$. Since $\omega$ lies in the center of $\Cl_{d-1,1}$, $S_{\pm}$ are $\Cl_{d-1,1}$-invariant subspaces of $S$. By the irreducibility of $S$, $S=S_+$ or $S=S_-$. Therefore, $\Gamma=\pm1$. If $\Delta(\omega)=1$, then $\Delta'(\omega)=-1$. Therefore, $(S,\Delta)$ and $(S,\Delta')$ are not isomorphic. By \eqref{eq:complex-Clifford}, these are the only two irreducible representations (up to isomorphism).
\end{proof}

\cred{For both $d=2k+2$ and $d=2k+3$, we can choose the Gamma matrices as complex $2^{k+1}\times2^{k+1}$ matrices (see \App{app:Weyl}). }
Since 
\bea\label{eq:complex-Clifford}
\Cl_{d-1,1}\otimes_{\R}\C\cong\CCl_{d}\cong\left\{\begin{array}{ll} \C(2^{\frac{d}{2}}),&\text{for even }d,\\
\C(2^{\frac{d-1}{2}})\oplus \C(2^{\frac{d-1}{2}}),&\text{for odd }d,
\end{array} \right.
\eea
this amounts to choosing 
the Gamma matrices in only one direct summand of $\C(2^{\frac{d-1}{2}})\oplus \C(2^{\frac{d-1}{2}})$ for odd $d$.
Namely, we choose 
an irreducible complex representation $S'$ of $\Cl_{d-1,1}$:
\bea\label{eq:Clifford-rep}
{\Delta'}^{\C}_{d-1,1}:\Cl_{d-1,1}\to\text{End}_{\C}(S').
\eea
The restriction of $S'$ to $\Spin_{d-1,1}$ is
\bea
S'|_{\Spin_{d-1,1}}=\left\{\begin{array}{cc}
   S=S_+\oplus S_-,  & d\text{ even}, \\
   S,  & d\text{ odd}
\end{array}\right.
\eea
where $S_{\pm}$ are the complex semi-spin representations of $\Spin_{d-1,1}$ and $S$ is the complex spin representation of $\Spin_{d-1,1}$. Namely, we have
\bea\label{eq:restriction}
\xymatrix{ \Spin_{d-1,1}  \ar@{^{(}->}[d]\ar[r]^{\Delta_{d-1,1}^{\C}}&\GL_{\C}(S)\ar@{^{(}->}[d]\\
\Cl_{d-1,1}\ar[r]^{{\Delta'}^{\C}_{d-1,1}}&\text{End}_{\C}(S').}
\eea
Therefore, Definitions \ref{def-Dirac} and \ref{def-Dirac'} are indeed equivalent.

\subsection{C, R, and T as bundle maps}
The symmetries of the Dirac equation (the transformations that preserve the equation) include charge conjugation C, mirror reflection $\rR_j$, time-reversal T, and fermion parity $(-1)^{\rF}$.

If we regard Dirac fermions as sections of a Clifford module bundle $M\times S'$ over the Minkowski spacetime $M=\R^{d-1,1}$ where $S'$ is the irreducible complex representation of $\Cl_{d-1,1}$ (see Definition \ref{def-Dirac'}), then
the charge conjugation sends the Clifford module bundle to its complex conjugate, and the mirror reflection and time-reversal send the Clifford module bundle to itself. Mathematically, they are bundle maps between Clifford module bundles.\footnote{Let $E\xrightarrow{\pi} M$ and $E'\xrightarrow{\pi'} M'$ be fiber bundles. A map $\overline{f}:E\to E'$
is called a bundle map if it maps each fiber $\pi^{-1}(p)$ of $E$ onto $\pi'^{-1}(q)$ of $E'$. Then $\overline{f}$ naturally induces a map $f : M\to M'$ such that $f (p) = q$.}
More precisely, we have
\bea
\xymatrix{E\ar[r]^{\rC}\ar[d]^{\pi}&E'\ar[d]^{\pi'}\\M\ar[r]^{\text{id}_M}&M,}
\eea
\bea
\xymatrixcolsep{12pc}\xymatrix{E\ar[r]^{\rR_j}\ar[d]^{\pi}&E\ar[d]^{\pi}\\M\ar[r]^{(t,x)\mapsto(t,x_1,\dots,x_{j-1},-x_j,x_{j+1},\dots,x_{d-1})}&M,}
\eea
and 
\bea
\xymatrixcolsep{5pc}\xymatrix{E\ar[r]^{\rT}\ar[d]^{\pi}&E\ar[d]^{\pi}\\M\ar[r]^{(t,x)\mapsto(-t,x)}&M}
\eea
where $E=M\times S'$ is the Clifford module bundle over the Minkowski spacetime $M=\R^{d-1,1}$, $S'$ is the irreducible complex representation of $\Cl_{d-1,1}$, $E'=M\times \overline{S'}$ is the complex conjugate of the Clifford module bundle $E$, and $\overline{S'}$ is the complex conjugate of the irreducible complex representation $S'$,\footnote{Mathematically, if $A$ is an algebra and $\rho:A\to\text{End}(V)$ is a representation of it over the complex vector space $V$, then the complex conjugate representation $\overline{\rho}$ is defined over the complex conjugate vector space $\overline{V}$ as follows:
$\overline{\rho}(a)$ is the complex conjugate of $\rho(a)$ for all $a$ in $A$.\label{complex-conjugate}
}. 
We should consider $S'$ and $\overline{S'}$ simultaneously and regard C, R$_j$, and T as bundle automorphisms of $E\oplus E'$ (the Whitney sum of $E$ and $E'$).
Namely,
\bea
\xymatrix{E\oplus E'\ar[r]^{\rC}\ar[d]^{\pi''}&E\oplus E'\ar[d]^{\pi''}\\M\ar[r]^{\text{id}_M}&M,}
\eea
\bea
\xymatrixcolsep{12pc}\xymatrix{E\oplus E'\ar[r]^{\rR_j}\ar[d]^{\pi''}&E\oplus E'\ar[d]^{\pi''}\\M\ar[r]^{(t,x)\mapsto(t,x_1,\dots,x_{j-1},-x_j,x_{j+1},\dots,x_{d-1})}&M,}
\eea
and 
\bea
\xymatrixcolsep{5pc}\xymatrix{E\oplus E'\ar[r]^{\rT}\ar[d]^{\pi''}&E\oplus E'\ar[d]^{\pi''}\\M\ar[r]^{(t,x)\mapsto(-t,x)}&M.}
\eea

For a Dirac fermion $\psi$, the symmetry transformations are (see \eqref{eq:composition-matrix1} and \eqref{eq:composition-matrix2})
\bea
\rC&:&\left(\begin{array}{cc}\psi\\\psi^*\end{array}\right)\mapsto \left(\begin{array}{cc}\psi\\\psi^*\end{array}\right)^{\rC}=\left(\begin{array}{cc}0&M_{\rC}\\M_{\rC}^*&0\end{array}\right)\left(\begin{array}{cc}\psi\\\psi^*\end{array}\right)\nn\\
\rR_j&:&\left(\begin{array}{cc}\psi\\\psi^*\end{array}\right)\mapsto \left(\begin{array}{cc}\psi\\\psi^*\end{array}\right)^{\rR_j}=\left(\begin{array}{cc}M_{\rR_j}&0\\0&M_{\rR_j}^*\end{array}\right)\left(\begin{array}{cc}\psi\\\psi^*\end{array}\right)(t,x_1,\dots,x_{j-1},-x_j,x_{j+1},\dots,x_{d-1})\nn\\
\rT&:&\left(\begin{array}{cc}\psi\\\psi^*\end{array}\right)\mapsto \left(\begin{array}{cc}\psi\\\psi^*\end{array}\right)^{\rT}=\left(\begin{array}{cc}M_{\rT}&0\\0&M_{\rT}^*\end{array}\right)\left(\begin{array}{cc}\psi\\\psi^*\end{array}\right)(-t,x)\nn\\
(-1)^{\rF}&:&\left(\begin{array}{cc}\psi\\\psi^*\end{array}\right)\mapsto-\left(\begin{array}{cc}\psi\\\psi^*\end{array}\right).
\eea
Here $\psi^*$ is the complex conjugate of $\psi$ but it does take the adjoint of the components of the field operator. (The complex conjugate of a field operator is not defined in any representation-independent way. The right concept here is adjoint, not complex conjugate.) Here $M^*$ is the complex conjugate of the matrix $M$, see footnote~\ref{complex-conjugate},
$M_{\rC}$ is an invertible matrix in $\text{End}_{\C}(\overline{S'})$, $M_{\rR_j}$ and $M_{\rT}$ are invertible matrices in $\text{End}_{\C}(S')$, hence $M_{\rC}^*$, $M_{\rR_j}$, and $M_{\rT}$ are in $\text{End}_{\C}(S')$.

The matrices $M_{\rC}$, $M_{\rR_j}$, and $M_{\rT}$ satisfy (see \eqref{eq:charge}, \eqref{eq:reflection}, and \eqref{eq:time})
\bea\label{eq:CPRT-condition}
   M_{\rC}\Gamma^{\mu*}M_{\rC}^{-1}&=&-\Gamma^{\mu},\;\;\;\mu=0,1,\dots,d-1,\nn\\
M_{\rR_j}\Gamma^0M_{\rR_j}^{-1}&=&\Gamma^0,\nn\\
M_{\rR_j}\Gamma^jM_{\rR_j}^{-1}&=&-\Gamma^j,\nn\\
M_{\rR_j}\Gamma^iM_{\rR_j}^{-1}&=&\Gamma^i, \;\;\; i=1,\dots,j-1,j+1,\dots,d-1,\nn\\
M_{\rT}\Gamma^{0*}M_{\rT}^{-1}&=&\Gamma^0,\nn\\
M_{\rT}\Gamma^{j*}M_{\rT}^{-1}&=&-\Gamma^j,\;\;\;j=1,\dots,d-1.
\eea

\begin{definition}\label{def-Majorana}
    Majorana fermions are the Dirac fermions $\psi$ with the trivial C symmetry, namely $\psi^{\rC}=\psi$.
    Majorana-Weyl fermions are the fermions which are both Majorana and Weyl.
\end{definition}
 We will see later that Majorana fermions only exist for $d=0,1,2,3,4\mod8$ and Majorana-Weyl fermions only exist for $d=2\mod8$. 
We may give another \emph{inequivalent} definition of Majorana fermions and Majorana-Weyl fermions: 

\begin{definition}\label{def-another-Majorana}
Majorana(-Weyl) fermions are sections of a real (semi-)spin bundle over the Minkowski spacetime $\R^{d-1,1}$, namely maps from $\R^{d-1,1}$ to a real (semi-)spin representation of $\Spin_{d-1,1}$.  
\end{definition}

 By \eqref{eq:Clifford}, the real spin representation of $\Spin_{d-1,1}$ exists for all $d$. However, for $d=5,6,7\mod8$, the real spin representation of $\Spin_{d-1,1}$ is of quaternionic type. For $d=5,6,7\mod8$, if we define Majorana fermions as in Definition \ref{def-another-Majorana}, then a Majorana fermion is a \emph{pair} of two Dirac fermions with the trivial charge conjugation symmetry \cite{Stone:2020vva2009.00518}, not a single Dirac fermion with the trivial charge conjugation symmetry.
 If we define Majorana fermions as in Definition \ref{def-another-Majorana},
 for $d=5,6,7 \mod 8$, since $\bH_{\C}=\C(2)$ and $\bH(n)_{\C}=\C(2n)$, the real representation $\bH^n$ becomes $\C^{2n}$ after complexification. So the real dimension does not change, we must impose the trivial charge conjugation condition on two Dirac fermions to match the dimension.
While for $d=0,1,2,3,4 \mod 8$, the real dimension becomes twice after complexification. So, it is sufficient to impose the trivial charge conjugation condition on a single Dirac fermion to match the dimension. 
For $d=5,6,7\mod8$, 
the corresponding Majorana fermions in Definition \ref{def-another-Majorana} are called \emph{symplectic Majorana} fermions \cite{Stone:2020vva2009.00518} that we will not study in this article.
We will see later that the matrix of charge conjugation can always be chosen to be the identity matrix, so the Majorana condition $\psi^{\rC}=\psi$ implies the real condition on Dirac fermions. Therefore, the Majorana(-Weyl) fermions in Definition \ref{def-Majorana} are always sections of a real (semi-)spin bundle over the Minkowski spacetime $\R^{d-1,1}$, namely Definition \ref{def-Majorana} is more restrictive than Definition \ref{def-another-Majorana}.

\section{C-R-T fractionalization for Dirac fermions}\label{sec:Dirac}
In this section, we determine the C-R-T fractionalization with the canonical CRT and its extension class for Dirac
fermions in any spacetime dimension, and we find an 8-periodicity.

Our main result is the following theorem.
 \begin{theorem}\label{main}
\begin{enumerate}
    \item 
    For a Dirac fermion and each $d$, there are two possibilities of the C-R$_j$-T fractionalization with the canonical CRT.
    \item
    For a Dirac fermion, the two possibilities of the C-R$_j$-T fractionalization with the canonical CRT depend only on the remainder of the spacetime dimension $d$ modulo 8.
\end{enumerate}

\end{theorem}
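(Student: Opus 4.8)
The plan is to work directly with the defining conditions \eqref{eq:CPRT-condition} for the matrices $M_{\rC}$, $M_{\rR_j}$, $M_{\rT}$ acting on an irreducible complex representation $S'$ of $\Cl_{d-1,1}$, and to extract the group law of the lifts $\rC,\rR_j,\rT,(-1)^{\rF}$ from the algebra they generate. First I would observe that each of $M_{\rC}$, $M_{\rR_j}$, $M_{\rT}$ can be built as a product of the Gamma matrices (or their complex conjugates): $M_{\rR_j}$ is essentially $\Gamma^j$ up to a scalar, $M_{\rT}$ is the product of all the spacelike $\Gamma^j$'s (or all the timelike ones, depending on convention), and $M_{\rC}$ is the product picked out by the condition $M_{\rC}\Gamma^{\mu*}M_{\rC}^{-1}=-\Gamma^\mu$, whose existence and form depend on whether the $\Gamma^{\mu*}$ are conjugate to $\pm\Gamma^\mu$. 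Because \eqref{eq:CPRT-condition} determines each matrix only up to a nonzero complex scalar, and because the fermionic nature forces us to track signs in $\Z_2^{\rF}$, the actual content of the theorem is: (i) the products and squares of these matrices land in $\{\pm 1\}$ times the identity (after suitably normalizing, e.g. fixing phases by unitarity and by the reality/quaternionic structure), so the generated group has order $16$; and (ii) the resulting signs — i.e. the commutators $[\rC,\rR_j]$, $[\rC,\rT]$, $[\rR_j,\rT]$ and the squares $\rC^2,\rR_j^2,\rT^2$ relative to $(-1)^{\rF}$ — are computed by counting how many Gamma matrices are moved past how many others, which is governed by the number $d$ of generators.

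The key steps, in order: (1) Using Lemma \ref{volume-element} and \eqref{eq:complex-Clifford}, fix a concrete model of $S'$ and of the Gamma matrices (the explicit $2^{k+1}\times 2^{k+1}$ construction referenced in App.~\ref{app:Weyl}), so that complex conjugation $*$ acts in a controlled way. (2) Solve \eqref{eq:CPRT-condition} for $M_{\rC}$, $M_{\rR_j}$, $M_{\rT}$ as explicit monomials in the $\Gamma$'s, and pin down their normalizations using unitarity; here the canonical CRT requirement enters as the constraint that the product $M_{\rC}M_{\rR_j}M_{\rT}$ (the CRT lift) be the specific unbreakable one — which removes the freedom that would otherwise multiply each generator independently by a sign, leaving exactly a $\Z_2$ ambiguity in the remaining free generators. (3) Compute all pairwise (anti)commutation relations and all squares by the standard sign bookkeeping $\Gamma^\mu\Gamma^\nu=-\Gamma^\nu\Gamma^\mu$ for $\mu\neq\nu$, keeping track of extra signs from the complex conjugations (which flip the sign of the imaginary unit appearing in the volume element or in the construction of $M_{\rC}$, $M_{\rT}$). (4) Read off that the generated group is one of the order-$16$ groups (the Pauli group or its relatives), and tabulate the extension class in $\H^2(\B(\Z_2^{\rC}\times\Z_2^{\rR_j}\times\Z_2^{\rT}),\Z_2^{\rF})=\Z_2^6$. (5) Since every sign produced in step (3) is a function of $d$ only through quantities like $\binom{d}{2}$, $d-1$, $\lfloor d/2\rfloor$, and the residue governing $\CCl_d$ in \eqref{eq:complex-Clifford}, all of which are periodic mod $8$, conclude part 2; and since the only undetermined datum left after fixing the canonical CRT in step (2) is a single $\Z_2$, conclude part 1.

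For part 1, the cleaner argument is structural rather than computational: by the CRT theorem as stated in the introduction \cite{Freed1604.06527, HasonKomargodskiThorngren1910.14039}, once the combined CRT is fixed to be the canonical (unbreakable) one, the remaining ambiguity in lifting C, R$_j$, T individually to act on $S'\oplus\overline{S'}$ is precisely the freedom to multiply the generators by fermion-parity elements subject to their relations being preserved; one checks this freedom is a single $\Z_2$ (for instance, rescaling $M_{\rC}$ by $\im$ versus $1$ while compensating in $M_{\rT}$), giving exactly two possibilities. I would present this first, then verify it against the explicit computation.

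I expect the main obstacle to be step (2)–(3): carefully disentangling the scalar/phase ambiguities so that the $\Z_2^{\rF}$ signs are unambiguous. The conditions \eqref{eq:CPRT-condition} are homogeneous, so a naive reading gives a group that is only defined up to a central torus; the physics input (unitarity of the symmetry operators, reality structure of the Clifford algebra, and canonicality of CRT) must be used precisely to cut this down to the finite group, and it is easy to miscount a sign when conjugating a monomial in the $\Gamma$'s through a complex conjugation. Organizing this as a uniform mod-$8$ case analysis — one representative $d$ per residue class, with the periodicity of \eqref{eq:Clifford} and \eqref{eq:complex-Clifford} doing the bookkeeping — is what makes both parts of the theorem fall out at once.
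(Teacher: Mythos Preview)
Your overall strategy matches the paper's: solve \eqref{eq:CPRT-condition} for $M_{\rC},M_{\rR_j},M_{\rT}$ as monomials in the $\Gamma$'s, compute the six squares $\rC^2,\rR_j^2,\rT^2,(\rC\rR_j)^2,(\rR_j\rT)^2,(\rC\rT)^2$, and read off the extension class; the mod-$8$ periodicity then follows because every sign appearing is of the form $(-1)^{k(k\pm1)/2}$ with $d=2k+2$ or $2k+3$. (A minor inaccuracy: $M_{\rR_j}\propto\Gamma^j$ only for odd $d$; for even $d$ it is the complementary product $\Gamma^0\Gamma^1\cdots\hat\Gamma^j\cdots\Gamma^{d-1}$, and $M_{\rT}$ is a product of alternating-index Gammas, not of all spacelike ones.)

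However, your identification of the residual $\Z_2$ in part 1 is wrong, and the ``cleaner structural argument'' would fail as stated. Multiplying generators by fermion-parity elements (i.e.\ by $\pm1$) changes no square and no commutator, so it does not produce distinct fractionalizations. The actual mechanism, which the paper isolates explicitly, is this: because $\rC$ and $\rT$ each involve complex conjugation of $\psi$, the phases $\eta_{\rC},\eta_{\rT}$ cancel against their conjugates in $\rC^2=M_{\rC}M_{\rC}^*$, $\rT^2=M_{\rT}^*M_{\rT}$, $(\rC\rR_j)^2$, and $(\rR_j\rT)^2$; these four squares are phase-independent. Only $\rR_j^2\propto\eta_{\rR_j}^2$ and $(\rC\rT)^2\propto(\eta_{\rC}^*\eta_{\rT}^*)^2$ retain phase dependence. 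Lemma~\ref{order} forces both to lie in $\{\pm1\}$, and the canonical CRT condition $(\rC\rR_j\rT)^2=1$ gives $(\eta_{\rC}\eta_{\rR_j}\eta_{\rT})^2=1$, which correlates them. The two possibilities are $\eta_{\rR_j}\in\{\pm1\}$ versus $\eta_{\rR_j}\in\{\pm\ii\}$. Your proposed example --- rescaling $M_{\rC}$ by $\ii$ while compensating in $M_{\rT}$ --- is a null move: it leaves both $\eta_{\rC}\eta_{\rT}$ and $\eta_{\rR_j}$ unchanged, hence produces the same fractionalization.
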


By \cite{HasonKomargodskiThorngren1910.14039}, the canonical CRT satisfies the following properties:
\begin{enumerate}
    \item 
    Any unitary internal symmetry U commutes with the canonical CRT:
    \bea\label{eq:U-CPT}
\U\cdot \rC\rR_j\rT=\rC\rR_j\rT\cdot \U.
    \eea
    \item 
    Any time reversal symmetry T commutes with the canonical CRT up to the fermion parity:
    \bea\label{eq:T-CPT}
\rT\cdot \rC\rR_j\rT=(-1)^{\rF}\rC\rR_j\rT\cdot \rT.
    \eea
    \item 
    The canonical CRT is of order 2:
    \bea\label{eq:CPT}
    (\rC\rR_j\rT)^2=1.
    \eea
\end{enumerate}
We check that the CRT for our two possibilities of the C-R$_j$-T fractionalization satisfies \eqref{eq:CPT}, \eqref{eq:T-CPT}, and \eqref{eq:U-CPT} for $\U=\rC$ (see the discussion around \eqref{eq:CRT2}) and the internal charge-like and isospin-like U(1) symmetries (see \App{app:further}), but we do not check \eqref{eq:U-CPT} for other U.

To prove Theorem \ref{main}, we prove the following lemmas.
\begin{lemma}\label{order}
    The order of elements of $\tilde{G}_{\psi}$ in the group extension 
\eqref{eq:extension}
divides 4.
\end{lemma}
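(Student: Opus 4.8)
The plan is to show that every generator of $\tilde{G}_{\psi}$ squares to an element of $\Z_2^{\rF}=\{1,(-1)^{\rF}\}$, which immediately forces every element to have order dividing $4$ once we know the group structure is controlled by the defining relations \eqref{eq:CPRT-condition}. Concretely, the group $\tilde{G}_{\psi}$ is generated by the operators realizing $\rC$, $\rR_j$, $\rT$, and $(-1)^{\rF}$, and each of the first three is implemented on the doubled space $\big(\begin{smallmatrix}\psi\\\psi^*\end{smallmatrix}\big)$ by a block matrix built from $M_{\rC}$, $M_{\rR_j}$, $M_{\rT}$ together with a complex-conjugation (antilinear) operation for $\rC$ and $\rT$. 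So the first step is to compute the square of each of these three operators as an operator on the Clifford module, using only the intertwining relations \eqref{eq:CPRT-condition}.

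First I would treat $\rR_j$, which is linear: $\rR_j^2$ acts as $M_{\rR_j}^2$, and since $M_{\rR_j}$ conjugates $\Gamma^0,\Gamma^i$ ($i\ne j$) to themselves and $\Gamma^j$ to $-\Gamma^j$, the matrix $M_{\rR_j}^2$ commutes with \emph{all} the $\Gamma^\mu$; by Schur's lemma (the representation $S'$ is irreducible, cf.\ Definition \ref{def-Dirac'}) it is a scalar, and one checks this scalar is $\pm1$, i.e.\ $\rR_j^2\in\Z_2^{\rF}$. Next, for $\rT$ (antilinear): $\rT^2$ acts as $M_{\rT}M_{\rT}^*$, and the relations $M_{\rT}\Gamma^{\mu *}M_{\rT}^{-1}=\pm\Gamma^\mu$ give, after conjugating twice and taking a conjugate in between, that $M_{\rT}M_{\rT}^*$ again commutes with every $\Gamma^\mu$, hence is a scalar; since $(\rT^2)$ must be $\pm1$ (it is, up to scalar, idempotent under a further application of the antilinear structure), $\rT^2\in\Z_2^{\rF}$. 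The same argument applied to $M_{\rC}M_{\rC}^*$ using $M_{\rC}\Gamma^{\mu *}M_{\rC}^{-1}=-\Gamma^\mu$ gives $\rC^2\in\Z_2^{\rF}$. Finally $(-1)^{\rF}$ is central of order $2$ by construction.

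Having established that the four generators each square into the central $\Z_2^{\rF}$, I would conclude as follows: the quotient $\tilde{G}_\psi/\Z_2^{\rF}\cong\Z_2^{\rC}\times\Z_2^{\rR_j}\times\Z_2^{\rT}$ is elementary abelian of exponent $2$, so for any $g\in\tilde G_\psi$ the image $\bar g$ satisfies $\bar g^2=1$, meaning $g^2\in\Z_2^{\rF}$; therefore $g^4=(g^2)^2=1$. Hence the order of every element divides $4$.

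The main obstacle is the antilinear bookkeeping: for $\rC$ and $\rT$ one is squaring a product of a matrix and a complex-conjugation, so $(\text{conj}\circ M)^2 = M^* M$ (or $MM^*$ depending on ordering conventions), and one must be careful that the relation used to collapse $M_{\rC}M_{\rC}^*$ (resp.\ $M_{\rT}M_{\rT}^*$) to a scalar is applied consistently — in particular that the scalar is real and equal to $\pm1$, not merely a phase. For $d$ where $\Cl_{d-1,1}$ has two inequivalent complex irreducibles (odd $d$), one also needs that $M_{\rR_j}^2$, $M_{\rC}M_{\rC}^*$, $M_{\rT}M_{\rT}^*$ remain scalars on the chosen summand $S'$ — this is fine because each of these operators manifestly preserves $S'$ (it commutes with all $\Gamma^\mu$) and $S'$ is irreducible. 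Once that is pinned down, the group-theoretic conclusion is immediate.
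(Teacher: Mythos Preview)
Your final paragraph is the complete proof and is essentially the paper's argument: since the quotient $\tilde G_\psi/\Z_2^{\rF}\cong(\Z_2)^3$ has exponent $2$, every $g\in\tilde G_\psi$ satisfies $g^2\in\Z_2^{\rF}$ and hence $g^4=1$. The paper phrases this in cocycle language --- writing elements as pairs $(a,b)$ with $(a,b)^2=(0,f(a,a))$ and $(a,b)^4=(0,0)$ --- but the content is identical.

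Everything preceding that paragraph is unnecessary for this lemma, and in fact contains a gap. Your Schur's lemma argument correctly shows that $M_{\rR_j}^2$, $M_{\rC}M_{\rC}^*$, $M_{\rT}^*M_{\rT}$ are scalars, but the assertion ``one checks this scalar is $\pm1$'' is not justified for $\rR_j$: if $M_{\rR_j}\mapsto\eta_{\rR_j}M_{\rR_j}$ then $\rR_j^2\mapsto\eta_{\rR_j}^2\,\rR_j^2$, so the scalar can be an arbitrary phase unless $\eta_{\rR_j}$ is constrained. (By contrast, for $\rC$ and $\rT$ the phase drops out because $\eta\eta^*=1$.) The paper in fact exploits exactly this freedom later: the two possible C-R-T fractionalizations in each dimension correspond to $\eta_{\rR_j}\in\{\pm1\}$ versus $\eta_{\rR_j}\in\{\pm\ii\}$. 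So your generator-by-generator computation is not what pins down the orders here; the purely group-theoretic argument in your last paragraph is, and it stands on its own. You should simply delete the first two paragraphs of the proposal.
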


\begin{proof}
    The elements of $\tilde{G}_{\psi}$ in the group extension \eqref{eq:extension} are expressed as pairs $(a,b)$ where $a\in \Z_2^{\rC}\times\Z_2^{\rR_j}\times\Z_2^{\rT}$ and $b\in \Z_2^{\rF}$. Then $(a,b)^2=(a+a,b+b+f(a,a))=(0,f(a,a))$ and $(a,b)^4=(0,0)$ where $f$ is a 2-cocycle on $\Z_2^{\rC}\times\Z_2^{\rR_j}\times\Z_2^{\rT}$ valued in $\Z_2^{\rF}$.  
\end{proof}

\begin{lemma}\label{phase-factor}
The matrices of C, R$_j$, and T for a Dirac fermion are uniquely determined up to a phase factor in U(1). 
\end{lemma}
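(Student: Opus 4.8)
The plan is to use Schur's lemma applied to the irreducible complex representation $S'$ of the Clifford algebra $\Cl_{d-1,1}$. Recall that the matrices $M_{\rC}$, $M_{\rR_j}$, $M_{\rT}$ are required to satisfy the intertwining relations \eqref{eq:CPRT-condition}. For a fixed choice of Gamma matrices $\Gamma^\mu$, suppose $M$ and $M'$ are two matrices both satisfying, say, the reflection conditions $M\Gamma^0 M^{-1}=\Gamma^0$, $M\Gamma^j M^{-1}=-\Gamma^j$, $M\Gamma^i M^{-1}=\Gamma^i$ (for $i\neq 0,j$). Then $M^{-1}M'$ commutes with every $\Gamma^\mu$, hence with the entire image of $\Cl_{d-1,1}$ in $\text{End}_\C(S')$. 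Since $S'$ is an irreducible complex representation, Schur's lemma forces $M^{-1}M'=\lambda\cdot\text{id}$ for some $\lambda\in\C^\times$, so $M'=\lambda M$. The same argument applies verbatim to $M_{\rT}$ (whose defining relations, after the complex conjugation built into $\Gamma^{\mu*}$, still say that $M_{\rT}^{-1}M_{\rT}'$ commutes with all $\Gamma^\mu$) and to $M_{\rC}$ (where one checks that $M_{\rC}^{-1}M_{\rC}'$ commutes with all $\Gamma^\mu$ as well, using that conjugation by both matrices implements the same automorphism $\Gamma^{\mu*}\mapsto-\Gamma^\mu$).

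The remaining point is to reduce the phase $\lambda\in\C^\times$ to a phase in $\U(1)$. This follows from the fact that C, R$_j$, T act as (anti-)unitary operators on the single-particle Hilbert space, or equivalently that one normalizes the Gamma matrices to be unitary ($\Gamma^0$ Hermitian, $\Gamma^j$ anti-Hermitian, so that $(\Gamma^\mu)^\dagger = \Gamma^0\Gamma^\mu\Gamma^0$); then the intertwining relations \eqref{eq:CPRT-condition} together with unitarity of the representation force $M_{\rC}$, $M_{\rR_j}$, $M_{\rT}$ to be unitary up to an overall positive scalar, which can be absorbed, so that the residual freedom is exactly multiplication by $\e^{\im\theta}\in\U(1)$. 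Concretely, if $M$ and $\lambda M$ are both unitary then $|\lambda|=1$.

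For odd $d$, where $\Cl_{d-1,1}$ has two inequivalent irreducible complex representations, the argument is unchanged: we have fixed one of them (the one on which the volume element acts as $-1$, per Definition \ref{def-Dirac'} and Lemma \ref{volume-element}), and Schur's lemma applies to that single irreducible representation; the intertwiners $M_{\rR_j}$, $M_{\rT}$ land in $\text{End}_\C(S')$ and $M_{\rC}^*$ likewise, as noted after \eqref{eq:CPRT-condition}, so the same commutation-with-the-Clifford-image reasoning goes through.

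The main obstacle is not the Schur's lemma step itself but verifying that matrices $M_{\rC}$, $M_{\rR_j}$, $M_{\rT}$ satisfying \eqref{eq:CPRT-condition} actually exist in the first place (so that the "up to a phase" statement is non-vacuous) and that the $\C^\times$ ambiguity genuinely collapses to $\U(1)$ rather than being further constrained; existence will be handled by the explicit construction of the Gamma matrices and these intertwiners in \App{app:Weyl}, and the collapse to $\U(1)$ relies on the unitarity conventions for the Gamma matrices, which we take as part of the setup.
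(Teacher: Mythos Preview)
Your proof is correct and follows essentially the same approach as the paper: apply Schur's lemma to the irreducible complex representation $S'$ to conclude that any two solutions of the intertwining relations \eqref{eq:CPRT-condition} differ by a scalar in $\C^\times$, then use (anti-)unitarity of C, R$_j$, T (the paper phrases this as ``preserve the norm of $\psi$'') to reduce the scalar to $\U(1)$. The paper's proof additionally invokes the fact that each $\Gamma^\mu$ is real or imaginary in the chiral representation, but this is inessential to the Schur step, and your representation-independent phrasing is in fact cleaner.
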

\begin{proof}
   Let $M_{\rC}$, $M_{\rR_j}$, and $M_{\rT}$ be the matrices of C, R$_j$, and T, then $M_{\rC}^*$, $M_{\rR_j}$, and $M_{\rT}$ are invertible matrices in $\text{End}_{\C}(S')$.
   The matrices of C, R$_j$, and T act on Gamma matrices by conjugation, see
   \eqref{eq:CPRT-condition}. 
Note that $\Gamma^{\mu}$ is either real or imaginary, so \eqref{eq:CPRT-condition} implies that $M_{\rC}^*$, $M_{\rR_j}$, and $M_{\rT}$ commute with some Gamma matrices while anti-commute with other Gamma matrices.

If $M_{\rC}$ and $M_{\rC}'$ are two solutions to \eqref{eq:CPRT-condition}, 
then $M_{\rC}'M_{\rC}^{-1}$ commutes with all Gamma matrices, so does $M_{\rC}'^*{M_{\rC}^*}^{-1}$.
Since $S'$ is an irreducible complex representation of $\Cl_{d-1,1}$, by Schur's lemma, $M_{\rC}'M_{\rC}^{-1}=\eta_{\rC}\cdot I$ where $\eta_{\rC}\in \C$ and $I$ is the identity matrix. Since C, R$_j$, and T preserve the norm of $\psi$, we have $\eta_{\rC}\in \U(1)$. Similarly for R$_j$ and T.

\end{proof}

The proof of Theorem \ref{main} consists of the following steps:
\begin{itemize}
\item 
To determine the C-R$_j$-T fractionalization, we only need to determine the squares of C, R$_j$, T, CR$_j$, R$_j$T, and CT, hence we obtain a group presentation of the C-R$_j$-T fractionalization.
    \item 
    We observe that 
    the order of the elements in the group $\tilde{G}_{\psi}$ divides 4, see Lemma \ref{order}.
\item 
By Schur's lemma, we can prove that the matrices of C, R$_j$, and T for a Dirac fermion are unique up to a phase factor in U(1), see Lemma \ref{phase-factor}. 
Moreover, we can determine the matrices of C, R$_j$, and T for a Dirac fermion up to a phase factor, see the discussion below.

\item 
Based on the composition rules \eqref{eq:composition-matrix1} and \eqref{eq:composition-matrix2}, we can prove that among the squares of C, R$_j$, T, CR$_j$, R$_j$T, and CT, only the squares of R$_j$ and CT depend on the phase factors. 
More precisely,
\bea
\rC^2\psi\rC^{-2}&=&M_{\rC}M_{\rC}^*\psi,\nn\\
\rR_j^2\psi\rR_j^{-2}&=&M_{\rR_j}^2\psi,\nn\\
\rT^2\psi\rT^{-2}&=&M_{\rT}^*M_{\rT}\psi,\nn\\
(\rC\rR_j)^2\psi(\rC\rR_j)^{-2}&=&M_{\rR_j}M_{\rC}M_{\rR_j}^*M_{\rC}^*\psi,\nn\\
(\rR_j\rT)^2\psi(\rR_j\rT)^{-2}&=&(M_{\rT}M_{\rR_j})^*M_{\rT}M_{\rR_j}\psi,\nn\\
(\rC\rT)^2\psi(\rC\rT)^{-2}&=&(M_{\rT}M_{\rC})^*M_{\rT}^*M_{\rC}^*\psi
\eea
where $M_{\rC}$, $M_{\rR_j}$, and $M_{\rT}$ are the matrices of C, R$_j$, and T respectively.
In particular, the phase factors of R$_j$ and CT can only be $\pm1$ or $\pm\ii$ since their order divides 4.
\item 
Since the order of the canonical CRT is 2, see \eqref{eq:CPT}, we can determine the product of the phase factors of C, R$_j$, and T if we consider the canonical CRT.

\item 
In summary, there are two possibilities of the C-R$_j$-T fractionalization with the canonical CRT corresponding to the two choices of the phase factor of R$_j$: $\pm1$ and $\pm\ii$.
\end{itemize}

Now we determine the matrices of C, R$_j$, and T for a Dirac fermion up to a phase factor.
\begin{enumerate}
\item{Charge conjugation}

The Dirac equation interacting with an external electromagnetic field is
\bea\label{eq:Dirac1}
(\ii\Gamma^{\mu}\partial_{\mu}-e\Gamma^{\mu}A_{\mu}-m)\psi=0
\eea
and hence the charge-conjugated field equation is
\bea\label{eq:Dirac2}
(\ii\Gamma^{\mu}\partial_{\mu}+e\Gamma^{\mu}A_{\mu}-m)\psi^{\rC}=0.
\eea
We find that if $\psi$ satisfies \eqref{eq:Dirac1}, then $\psi^{\rC}=M_{\rC}\psi^*$ satisfies \eqref{eq:Dirac2} provided that
\bea\label{eq:charge}
M_{\rC}\Gamma^{\mu*}M_{\rC}^{-1}=-\Gamma^{\mu}.
\eea

First, we consider $d=2k+2$.
Note that only $\Gamma^2,\Gamma^4,\dots,\Gamma^{2k}$ are imaginary, this implies that $M_{\rC}$ commutes with $\Gamma^2,\Gamma^4,\dots,\Gamma^{2k}$, anticommutes with $\Gamma^0,\Gamma^1,\Gamma^3,\dots,\Gamma^{2k+1}$. Therefore, 
\bea
M_{\rC}=\left\{\begin{array}{ll}\Gamma^2\Gamma^4\cdots\Gamma^{2k}&k\text{ odd,}\\
\Gamma^0\Gamma^1\Gamma^3\cdots\Gamma^{2k+1}&k\text{ even}\end{array}\right.
\eea
is a solution to \eqref{eq:charge}. 

Next, we consider $d=2k+3$.
Recall that $\Gamma^{2k+2}=\ii^{k+1}\Gamma^0\Gamma^1\cdots\Gamma^{2k+1}$. So if $M_{\rC}\Gamma^{\mu*}M_{\rC}^{-1}=-\Gamma^{\mu}$ for $\mu=0,1,\dots,2k+1$, then $M_{\rC}\Gamma^{2k+2*}M_{\rC}^{-1}=(-1)^{k+1}\Gamma^{2k+2}.$
Let $\Gamma=\Gamma^0\Gamma^1\cdots\Gamma^{2k+1}$. 

For even $k$, we can choose $M_{\rC}$ as we did for $d=2k+2$ such that \eqref{eq:charge} is satisfied. So $\psi^{\rC}=M_{\rC}\psi^*=\Gamma^0\Gamma^1\Gamma^3\cdots\Gamma^{2k+1}\psi^*$.

For odd $k$, if we choose $M_{\rC}$ as we did for $d=2k+2$, then \eqref{eq:charge} can not be satisfied, while $(\Gamma M_{\rC})\Gamma^{\mu*}(\Gamma M_{\rC})^{-1}=\Gamma^{\mu}$ for $\mu=0,1,\dots,2k+2$. Therefore, the mass term breaks the C symmetry in this case. For the massless case, $\psi^{\rC}=\Gamma M_{\rC}\psi^*=\Gamma^0\Gamma^1\Gamma^3\cdots\Gamma^{2k+1}\psi^*$.

\item{Mirror reflection}

The mirror reflected Dirac equation is
\bea
&&(\ii\Gamma^0\partial_0+\ii\Gamma^1\partial_1+\cdots+\ii\Gamma^{j-1}\partial_{j-1}-\ii\Gamma^j\partial_j+\ii\Gamma^{j+1}\partial_{j+1}+\cdots+\ii\Gamma^{d-1}\partial_{d-1}-m)\cdot\nn\\
&&\psi^{\rR_j}(t,x_1,\dots,x_{j-1},-x_j,x_{j+1},\dots,x_{d-1})=0.
\eea

Let $\psi^{\rR_j}(t,x)=M_{\rR_j}\psi(t,x_1,\dots,x_{j-1},-x_j,x_{j+1},\dots,x_{d-1})$, then by comparing with \eqref{eq:Dirac}, we get
\bea
&&M_{\rR_j}(\ii\Gamma^0\partial_0+\ii\Gamma^1\partial_1+\cdots+\ii\Gamma^{d-1}\partial_{d-1}-m)M_{\rR_j}^{-1}\nn\\
&=&\ii\Gamma^0\partial_0+\ii\Gamma^1\partial_1+\cdots+\ii\Gamma^{j-1}\partial_{j-1}-\ii\Gamma^j\partial_j+\ii\Gamma^{j+1}\partial_{j+1}+\cdots+\ii\Gamma^{d-1}\partial_{d-1}-m.
\eea
Since $\rR_j$ is unitary, we get
\bea\label{eq:reflection}
M_{\rR_j}\Gamma^0M_{\rR_j}^{-1}&=&\Gamma^0,\nn\\
M_{\rR_j}\Gamma^jM_{\rR_j}^{-1}&=&-\Gamma^j,\nn\\
M_{\rR_j}\Gamma^iM_{\rR_j}^{-1}&=&\Gamma^i, \;\;\; i=1,\dots,j-1,j+1,\dots,d-1.
\eea

For $d=2k+2$,
we find that $M_{\rR_j}=\Gamma^0\Gamma^1\cdots\hat{\Gamma}^j\cdots\Gamma^{2k+1}$ is a solution to \eqref{eq:reflection} and $\psi^{\rR_j}(t,x)=\Gamma^0\Gamma^1\cdots\hat{\Gamma}^j\cdots\Gamma^{2k+1}\psi(t,x_1,\dots,x_{j-1},-x_j,x_{j+1},\dots,x_{d-1})$ where $\hat{\Gamma}^j$ means that $\Gamma^j$ is omitted.

For $d=2k+3$,
recall that $\Gamma^{2k+2}=\ii^{k+1}\Gamma^0\Gamma^1\cdots\Gamma^{2k+1}$. If $j<2k+2$ and $M_{\rR_j}\Gamma^0M_{\rR_j}^{-1}=\Gamma^0$, 
$M_{\rR_j}\Gamma^jM_{\rR_j}^{-1}=-\Gamma^j$, and
$M_{\rR_j}\Gamma^iM_{\rR_j}^{-1}=\Gamma^i$ for $i=1,\dots,j-1,j+1,\dots,2k+1$, then $M_{\rR_j}\Gamma^{2k+2}M_{\rR_j}^{-1}=-\Gamma^{2k+2}$. So \eqref{eq:reflection} can not be satisfied. Therefore, the mass term breaks the $\rR_j$ symmetry in odd dimensional spacetime. For the massless case, $M_{\rR_j}=\Gamma^j$ and $\psi^{\rR_j}(t,x)=\Gamma^j\psi(t,x_1,\dots,x_{j-1},-x_j,x_{j+1},\dots,x_{d-1})$.
If $j=2k+2$ and $M_{\rR_{2k+2}}\Gamma^0M_{\rR_{2k+2}}^{-1}=\Gamma^0$, 
$M_{\rR_{2k+2}}\Gamma^iM_{\rR_{2k+2}}^{-1}=\Gamma^i$ for $i=1,\dots,2k+1$, then
$M_{\rR_{2k+2}}\Gamma^{2k+2}M_{\rR_{2k+2}}^{-1}=\Gamma^{2k+2}$. So \eqref{eq:reflection} can not be satisfied. Therefore, the mass term breaks the $\rR_{2k+2}$ symmetry in odd dimensional spacetime. For the massless case, $M_{\rR_{2k+2}}=\Gamma^{2k+2}$ and $\psi^{\rR_{2k+2}}(t,x)=\Gamma^{2k+2}\psi(t,x_1,\dots,x_{2k+1},-x_{2k+2})$.

\item{Time-reversal}

Since T is anti-unitary, $\rT\ii \rT^{-1}=-\ii$. 
The time-reversed Dirac equation is
\bea
(-\ii\Gamma^0\partial_0+\ii\Gamma^j\partial_j-m)^*\psi^{\rT}(-t,x)=0.
\eea

Let $\psi^{\rT}(t,x)=M_{\rT}\psi(-t,x)$, then by comparing with \eqref{eq:Dirac}, we get 
\bea
M_{\rT}(\ii\Gamma^0\partial_0+\ii\Gamma^j\partial_j-m)^*M_{\rT}^{-1}=-\ii\Gamma^0\partial_0+\ii\Gamma^j\partial_j-m.
\eea
 Then we get
\bea\label{eq:time}
M_{\rT}\Gamma^{0*}M_{\rT}^{-1}&=&\Gamma^0,\nn\\
M_{\rT}\Gamma^{j*}M_{\rT}^{-1}&=&-\Gamma^j.
\eea

First, we consider $d=2k+2$.
Note that only $\Gamma^2,\Gamma^4,\dots,\Gamma^{2k}$ are imaginary, this implies that $M_{\rT}$ commutes with $\Gamma^0,\Gamma^2,\dots,\Gamma^{2k}$, anticommutes with $\Gamma^1,\Gamma^3,\dots,\Gamma^{2k+1}$. Therefore, 
\bea
M_{\rT}=\left\{\begin{array}{ll} \Gamma^1\Gamma^3\cdots\Gamma^{2k+1}&k\text{ odd,}\\
\Gamma^0\Gamma^2\cdots\Gamma^{2k}&k\text{ even}\end{array}\right.
\eea
is a solution to \eqref{eq:time}. 

Next, we consider $d=2k+3$.
Recall that $\Gamma^{2k+2}=\ii^{k+1}\Gamma^0\Gamma^1\cdots\Gamma^{2k+1}$. So if $M_{\rT}\Gamma^{0*}M_{\rT}^{-1}=\Gamma^0$ and 
$M_{\rT}\Gamma^{j*}M_{\rT}^{-1}=-\Gamma^j$ for $j=1,2,\dots,2k+1$, then $M_{\rT}\Gamma^{2k+2*}M_{\rT}^{-1}=(-1)^k\Gamma^{2k+2}$.
Recall that $\Gamma=\Gamma^0\Gamma^1\cdots\Gamma^{2k+1}$. 

For odd $k$, we can choose $M_{\rT}$ as we did for $d=2k+2$ such that \eqref{eq:time} is satisfied. So $\psi^{\rT}(t,x)=M_{\rT}\psi(-t,x)=\Gamma^1\Gamma^3\cdots\Gamma^{2k+1}\psi(-t,x)$.

For even $k$, if we choose $M_{\rT}$ as we did for $d=2k+2$, then \eqref{eq:time} can not be satisfied, while $(\Gamma M_{\rT})\Gamma^{0*}(\Gamma M_{\rT})^{-1}=-\Gamma^0$ and $(\Gamma M_{\rT})\Gamma^{j*}(\Gamma M_{\rT})^{-1}=\Gamma^j$ for $j=1,2,\dots,2k+2$. Therefore, the mass term breaks the T symmetry in this case. For the massless case,
$\psi^{\rT}(t,x)=\Gamma M_{\rT}\psi(-t,x)=\Gamma^1\Gamma^3\cdots\Gamma^{2k+1}\psi(-t,x)$.

\end{enumerate}

 We summarize the $\rC$, $\rR_j$, $\rT$ transformations on a Dirac fermion in $d$d for all $d$ in \Table{tab:CPRT-trans-dd}.\footnote{Similar results were obtained in \cite{shimizu1985} but P instead of R was considered and only one choice of the phase factors was specified there.}
\begin{table}[htp]
\begin{center}
\hspace*{-1cm}
\begin{tabular}{c|c|c | c}
\hline
Dimension & $M_{\rC}\psi^*$ & $M_{\rR_j}\psi(t,x_1,\dots,x_{j-1},-x_j,x_{j+1},\dots,x_{d-1})$ & $M_{\rT}\psi(-t,x)$ \\
\hline
$d=2k+2$ ($k$ odd) &  $\eta_{\rC}\Gamma^2\Gamma^4\cdots\Gamma^{2k}\psi^*$ & $\eta_{\rR_j}\Gamma^0\Gamma^1\cdots\hat{\Gamma}^j\cdots\Gamma^{2k+1}\psi(t,x_1,\dots,x_{j-1},-x_j,x_{j+1},\dots,x_{d-1})$ & $\eta_{\rT}\Gamma^1\Gamma^3\cdots\Gamma^{2k+1}\psi(-t,x)$\\
$d=2k+2$ ($k$ even)  & $\eta_{\rC}\Gamma^0\Gamma^1\Gamma^3\cdots\Gamma^{2k+1}\psi^*$ & $\eta_{\rR_j}\Gamma^0\Gamma^1\cdots\hat{\Gamma}^j\cdots\Gamma^{2k+1}\psi(t,x_1,\dots,x_{j-1},-x_j,x_{j+1},\dots,x_{d-1})$ & $\eta_{\rT}\Gamma^0\Gamma^2\cdots\Gamma^{2k}\psi(-t,x)$ \\
$d=2k+3$  & $\eta_{\rC}\Gamma^0\Gamma^1\Gamma^3\cdots\Gamma^{2k+1}\psi^*$ & $\eta_{\rR_j}\Gamma^j\psi(t,x_1,\dots,x_{j-1},-x_j,x_{j+1},\dots,x_{d-1}) $ & $\eta_{\rT}\Gamma^1\Gamma^3\cdots\Gamma^{2k+1}\psi(-t,x)$ \\
\hline
\end{tabular}
\caption{The $\rC$, $\rR_j$, $\rT$ transformations on a Dirac fermion in $d$d for all $d$. Here the phase factors $\eta_{\rC}$, $\eta_{\rR_j}$, and $\eta_{\rT}$ are to be determined. Here $\hat{\Gamma}^j$ means $\Gamma^j$ is omitted in the product of gamma matrices. See \App{app:Weyl} for the explicit form of the Gamma matrices.  }
\label{tab:CPRT-trans-dd}
\end{center}
\end{table} 
We find that the \cred{Dirac} mass term breaks the C symmetry for $d=2k+3$ ($k$ odd), it breaks the T symmetry for $d=2k+3$ ($k$ even), and it breaks the R$_j$ symmetry for $d=2k+3$ and $j=1,2,\dots,2k+2$, see \Table{table:CRT-break-Dirac}.

Now we determine the group structure of the C-R$_j$-T fractionalization for Dirac fermions. We need to determine the squares of C, R$_j$, T, CR$_j$, R$_j$T, and CT.
Because C and R$_j$ are linear and unitary, while T is anti-linear and anti-unitary, the composition obeys the following rule:
\bea\label{eq:composition}
\rC(z\psi)\rC^{-1}&=&zM_{\rC}\psi^*,\nn\\
\rR_j(z\psi)\rR_j^{-1}&=&zM_{\rR_j}\psi(t,x_1,\dots,x_{j-1},-x_j,x_{j+1},\dots,x_{d-1}),\nn\\
\rT(z\psi)\rT^{-1}&=&z^*M_{\rT}\psi(-t,x)
\eea
where $z$ is a complex number and $M_{\rC}$, $M_{\rR_j}$, and $M_{\rT}$ are the matrices of C, R$_j$, and T transformations respectively.

\cred{In general, for a matrix $M$,
\bea\label{eq:composition-matrix1}
\rC(M\psi)\rC^{-1}&=&MM_{\rC} \psi^*,\nn\\
\rR_j(M \psi)\rR_j^{-1}&=&MM_{\rR_j}\psi(t,x_1,\dots,x_{j-1},-x_j,x_{j+1},\dots,x_{d-1}),\nn\\
\rT(M \psi)\rT^{-1}&=&M^*M_{\rT}\psi(-t,x).
\eea
}
If $M$ is a scalar matrix, then these rules are reduced to \eq{eq:composition}.

Moreover,
 $\psi^*$ actually means the adjoint of the components of $\psi$. 
 
 Since C is unitary, $\rC \rC^{-1}= \rC \rC^{\dagger}=1$, 
 we have $\rC\psi^{\dagger}\rC^{-1}=(\rC\psi\rC^{-1})^{\dagger}$,
and
 $\rC\psi^*\rC^{-1}=(\rC\psi\rC^{-1})^*=M_{\rC}^*\psi$.
 
 Since C is linear, we get $\rC(M\psi^*)\rC^{-1}=MM_{\rC}^*\psi$. 
 
 Similarly, R$_j$ is linear and unitary, and T is anti-linear and anti-unitary.
Therefore,
\bea\label{eq:composition-matrix2}
\rC(M\psi^*)\rC^{-1}&=&MM_{\rC}^*\psi,\nn\\
\rR_j(M \psi^*)\rR_j^{-1}&=&MM_{\rR_j}^*\psi^*(t,x_1,\dots,x_{j-1},-x_j,x_{j+1},\dots,x_{d-1})\nn\\
\rT(M\psi^*)\rT^{-1}&=&M^*M_{\rT}^*\psi^*(-t,x).
\eea

The full matrices of C, R$_j$, and T in $\text{End}_{\C}(S'\oplus\overline{S'})$ are
$\left(\begin{array}{cc}
  0   &  M_{\rC}\\
  M_{\rC}^*   & 0
\end{array}\right)$,
$\left(\begin{array}{cc}
  M_{\rR_j}   &  0\\
  0   & M_{\rR_j}^*
\end{array}\right)$, and
$\left(\begin{array}{cc}
  M_{\rT}   &  0\\
  0   & M_{\rT}^*
\end{array}\right)$ respectively.
Based on these rules, we do the following computation.

For $d=2k+2$ ($k$ odd), $\rC\psi\rC^{-1}=\eta_{\rC}\Gamma^2\Gamma^4\cdots\Gamma^{2k}\psi^*$, then
 \bea
  \rC^2\psi\rC^{-2}&=&\Gamma^2\Gamma^4\cdots\Gamma^{2k}(\Gamma^2\Gamma^4\cdots\Gamma^{2k})^*\psi\nn\\
  &=&(-1)^k\Gamma^2\Gamma^4\cdots\Gamma^{2k}\Gamma^2\Gamma^4\cdots\Gamma^{2k}\psi\nn\\
  &=&(-1)^k(-1)^{k-1}(-1)(-1)^{k-2}(-1)\cdots(-1)\psi\nn\\
  &=&(-1)^{\frac{k(k-1)}{2}}\psi\nn\\
  &=&\left\{\begin{array}{ll}
  \psi &\text{if }k=4l,4l+1,\\
 -\psi&\text{if }k=4l+2,4l+3.
  \end{array}\right.
 \eea

For $d=2k+2$ ($k$ even) and $d=2k+3$, $\rC\psi\rC^{-1}=\eta_{\rC}\Gamma^0\Gamma^1\Gamma^3\cdots\Gamma^{2k+1}\psi^*$, then
 \bea
  \rC^2\psi\rC^{-2}&=&\Gamma^0\Gamma^1\Gamma^3\cdots\Gamma^{2k+1}(\Gamma^0\Gamma^1\Gamma^3\cdots\Gamma^{2k+1})^*\psi\nn\\
  &=&(-1)^{k+1}(-1)^{k}(-1)(-1)^{k-1}(-1)\cdots(-1)\psi\nn\\
  &=&(-1)^{\frac{k(k+1)}{2}}\psi\nn\\
  &=&\left\{\begin{array}{ll}
  \psi &\text{if }k=4l,4l+3,\\
  -\psi&\text{if }k=4l+1,4l+2.
  \end{array}\right.
 \eea

Note that the phase factor of C does not affect the order of C.
 The square of C in $d$d for all $d$ is summarized in \Table{tab:C-square-dd}.
 
  \begin{table}[htp]
\begin{center}
\begin{tabular}{c|c|c}
\hline
Dimension &  & $\rC^2\psi\rC^{-2}$ \\
\hline
$d\equiv 0\mod 8$ & $d=2k+2,\; k\equiv3\mod4$ &$-\psi$ \\
$d\equiv 1\mod 8$ & $d=2k+3,\; k\equiv3\mod4$ &$\psi$ \\
$d\equiv 2\mod 8$ &  $d=2k+2,\; k\equiv0\mod4$ &$\psi$ \\
$d\equiv 3\mod 8$ &  $d=2k+3,\; k\equiv0\mod4$ &$\psi$\\
$d\equiv 4\mod 8$ & $d=2k+2,\; k\equiv1\mod4$ &$\psi$ \\
$d\equiv 5\mod 8$ & $d=2k+3,\; k\equiv1\mod4$ &$-\psi$ \\
$d\equiv 6\mod 8$ & $d=2k+2,\; k\equiv2\mod4$ &$-\psi$ \\
$d\equiv 7\mod 8$ & $d=2k+3,\; k\equiv2\mod4$ &$-\psi$ \\
\hline
\end{tabular}
\caption{The square of C in $d$d for all $d$.}
\label{tab:C-square-dd}
\end{center}
\end{table}

 For $d=2k+2$, $\rR_j\psi\rR_j^{-1}=\eta_{\rR_j}\Gamma^0\Gamma^1\cdots\hat{\Gamma}^j\cdots\Gamma^{2k+1}\psi(t,x_1,\dots,x_{j-1},-x_j,x_{j+1},\dots,x_{d-1})$, then
  \bea
\rR_j^2\psi\rR_j^{-2}&=&\eta_{\rR_j}\Gamma^0\Gamma^1\cdots\hat{\Gamma}^j\cdots\Gamma^{2k+1}\eta_{\rR_j}\Gamma^0\Gamma^1\cdots\hat{\Gamma}^j\cdots\Gamma^{2k+1}\psi(t,x)\nn\\
&=&(-1)^k\eta_{\rR_j}^2\psi(t,x).
 \eea
 For $d=2k+3$, 
 $\rR_j\psi\rR_j^{-1}=\eta_{\rR_j}\Gamma^j\psi(t,x_1,\dots,x_{j-1},-x_j,x_{j+1},\dots,x_{d-1})$, then
  \bea
\rR_j^2\psi\rR_j^{-2}&=&\eta_{\rR_j}\Gamma^j\eta_{\rR_j}\Gamma^j\psi(t,x)=-\eta_{\rR_j}^2\psi(t,x).
 \eea

 The square of $\rR_j$ in $d$d for all $d$ is summarized in \Table{tab:R-square-dd}.
 
  \begin{table}[htp]
\begin{center}
\begin{tabular}{c|c|c}
\hline
Dimension &  & $\rR_j^2\psi\rR_j^{-2}$ \\
\hline
$d\equiv 0\mod 8$ & $d=2k+2,\; k\equiv3\mod4$ &$-\eta_{\rR_j}^2\psi$ \\
$d\equiv 1\mod 8$ & $d=2k+3,\; k\equiv3\mod4$ &$-\eta_{\rR_j}^2\psi$ \\
$d\equiv 2\mod 8$ &  $d=2k+2,\; k\equiv0\mod4$ &$\eta_{\rR_j}^2\psi$ \\
$d\equiv 3\mod 8$ &  $d=2k+3,\; k\equiv0\mod4$ &$-\eta_{\rR_j}^2\psi$\\
$d\equiv 4\mod 8$ & $d=2k+2,\; k\equiv1\mod4$ &$-\eta_{\rR_j}^2\psi$ \\
$d\equiv 5\mod 8$ & $d=2k+3,\; k\equiv1\mod4$ &$-\eta_{\rR_j}^2\psi$ \\
$d\equiv 6\mod 8$ & $d=2k+2,\; k\equiv2\mod4$ &$\eta_{\rR_j}^2\psi$ \\
$d\equiv 7\mod 8$ & $d=2k+3,\; k\equiv2\mod4$ &$-\eta_{\rR_j}^2\psi$ \\
\hline
\end{tabular}
\caption{The square of $\rR_j$ in $d$d for all $d$.}
\label{tab:R-square-dd}
\end{center}
\end{table}

For $d=2k+2$ ($k$ even), $\rT\psi\rT^{-1}=\eta_{\rT}\Gamma^0\Gamma^2\cdots\Gamma^{2k}\psi(-t,x)$, then
 \bea
  \rT^2\psi\rT^{-2}&=&(\Gamma^0\Gamma^2\cdots\Gamma^{2k})^*\Gamma^0\Gamma^2\cdots\Gamma^{2k}\psi\nn\\
  &=&(-1)^{k}(-1)^{k}(-1)^{k-1}(-1)(-1)^{k-2}(-1)\cdots(-1)\psi\nn\\
  &=&(-1)^{\frac{k(k+1)}{2}}\psi\nn\\
  &=&\left\{\begin{array}{ll}
  \psi &\text{if }k=4l,4l+3,\\
  -\psi&\text{if }k=4l+1,4l+2.
  \end{array}\right.
 \eea
 
For $d=2k+2$ ($k$ odd) and $d=2k+3$, $\rT\psi\rT^{-1}=\eta_{\rT}\Gamma^1\Gamma^3\cdots\Gamma^{2k+1}\psi(-t,x)$, then
 \bea
  \rT^2\psi\rT^{-2}&=&(\Gamma^1\Gamma^3\cdots\Gamma^{2k+1})^*\Gamma^1\Gamma^3\cdots\Gamma^{2k+1}\psi\nn\\
  &=&(-1)^{k}(-1)(-1)^{k-1}(-1)\cdots(-1)\psi\nn\\
  &=&(-1)^{\frac{(k+1)(k+2)}{2}}\psi\nn\\
  &=&\left\{\begin{array}{ll}
  -\psi &\text{if }k=4l,4l+1,\\
  \psi&\text{if }k=4l+2,4l+3.
  \end{array}\right.
 \eea
Note that the phase factor of T does not affect the order of T.
  The square of T in $d$d for all $d$ is summarized in \Table{tab:T-square-dd}.
 
  \begin{table}[htp]
\begin{center}
\begin{tabular}{c|c|c}
\hline
Dimension &  & $\rT^2\psi\rT^{-2}$ \\
\hline
$d\equiv 0\mod 8$ & $d=2k+2,\; k\equiv3\mod4$ &$\psi$ \\
$d\equiv 1\mod 8$ & $d=2k+3,\; k\equiv3\mod4$ &$\psi$ \\
$d\equiv 2\mod 8$ &  $d=2k+2,\; k\equiv0\mod4$ &$\psi$ \\
$d\equiv 3\mod 8$ &  $d=2k+3,\; k\equiv0\mod4$ &$-\psi$\\
$d\equiv 4\mod 8$ & $d=2k+2,\; k\equiv1\mod4$ &$-\psi$ \\
$d\equiv 5\mod 8$ & $d=2k+3,\; k\equiv1\mod4$ &$-\psi$ \\
$d\equiv 6\mod 8$ & $d=2k+2,\; k\equiv2\mod4$ &$-\psi$ \\
$d\equiv 7\mod 8$ & $d=2k+3,\; k\equiv2\mod4$ &$\psi$ \\
\hline
\end{tabular}
\caption{The square of T in $d$d for all $d$.}
\label{tab:T-square-dd}
\end{center}
\end{table}

For $d=2k+2$ ($k$ odd), $\rC\psi\rC^{-1}=\eta_{\rC}\Gamma^2\Gamma^4\cdots\Gamma^{2k}\psi^*$,
$$
\rR_j\psi\rR_j^{-1}=\eta_{\rR_j}\Gamma^0\Gamma^1\cdots\hat{\Gamma}^j\cdots\Gamma^{2k+1}\psi(t,x_1,\dots,x_{j-1},-x_j,x_{j+1},\dots,x_{d-1}),
$$
then
\bea
\rC\rR_j\psi(\rC\rR_j)^{-1}=\eta_{\rR_j}\Gamma^0\Gamma^1\cdots\hat{\Gamma}^j\cdots\Gamma^{2k+1}\eta_{\rC}\Gamma^2\Gamma^4\cdots\Gamma^{2k}\psi^*(t,x_1,\dots,x_{j-1},-x_j,x_{j+1},\dots,x_{d-1}).
\eea
  Thus
 \bea
(\rC\rR_j)^2\psi(\rC\rR_j)^{-2}&=&\Gamma^0\Gamma^1\cdots\hat{\Gamma}^j\cdots\Gamma^{2k+1}\Gamma^2\Gamma^4\cdots\Gamma^{2k}(\Gamma^0\Gamma^1\cdots\hat{\Gamma}^j\cdots\Gamma^{2k+1})^*(\Gamma^2\Gamma^4\cdots\Gamma^{2k})^*\psi\nn\\
  &=&(-1)^{\frac{k(k-1)}{2}}\psi\nn\\
  &=&\left\{\begin{array}{ll}
  \psi &\text{if }k=4l,4l+1,\\
  -\psi&\text{if }k=4l+2,4l+3.
  \end{array}\right.
 \eea

For $d=2k+2$ ($k$ even), $\rC\psi\rC^{-1}=\eta_{\rC}\Gamma^0\Gamma^1\Gamma^3\cdots\Gamma^{2k+1}\psi^*$, 
$$
\rR_j\psi\rR_j^{-1}=\eta_{\rR_j}\Gamma^0\Gamma^1\cdots\hat{\Gamma}^j\cdots\Gamma^{2k+1}\psi(t,x_1,\dots,x_{j-1},-x_j,x_{j+1},\dots,x_{d-1}),
$$
then
\bea
\rC\rR_j\psi(\rC\rR_j)^{-1}=\eta_{\rR_j}\Gamma^0\Gamma^1\cdots\hat{\Gamma}^j\cdots\Gamma^{2k+1}\eta_{\rC}\Gamma^0\Gamma^1\Gamma^3\cdots\Gamma^{2k+1}\psi^*(t,x_1,\dots,x_{j-1},-x_j,x_{j+1},\dots,x_{d-1}).
\eea
  Thus
 \bea
(\rC\rR_j)^2\psi(\rC\rR_j)^{-2}&=&\Gamma^0\Gamma^1\cdots\hat{\Gamma}^j\cdots\Gamma^{2k+1}\Gamma^0\Gamma^1\Gamma^3\cdots\Gamma^{2k+1}(\Gamma^0\Gamma^1\cdots\hat{\Gamma}^j\cdots\Gamma^{2k+1})^*(\Gamma^0\Gamma^1\Gamma^3\cdots\Gamma^{2k+1})^*\psi\nn\\
  &=&-(-1)^{\frac{k(k+1)}{2}}\psi\nn\\
  &=&\left\{\begin{array}{ll}
  -\psi &\text{if }k=4l,4l+3,\\
  \psi&\text{if }k=4l+1,4l+2.
  \end{array}\right.
 \eea

 For $d=2k+3$, $\rC\psi\rC^{-1}=\eta_{\rC}\Gamma^0\Gamma^1\Gamma^3\cdots\Gamma^{2k+1}\psi^*$, $\rR_j\psi\rR_j^{-1}=\eta_{\rR_j}\Gamma^j\psi(t,x_1,\dots,x_{j-1},-x_j,x_{j+1},\dots,x_{d-1})$, then 
 \bea
 \rC\rR_j\psi(\rC\rR_j)^{-1}=\eta_{\rR_j}\Gamma^j\eta_{\rC}\Gamma^0\Gamma^1\Gamma^3\cdots\Gamma^{2k+1}\psi^*(t,x_1,\dots,x_{j-1},-x_j,x_{j+1},\dots,x_{d-1}).
 \eea
 Thus
 \bea
 (\rC\rR_j)^2\psi(\rC\rR_j)^{-2}&=&\Gamma^j\Gamma^0\Gamma^1\Gamma^3\cdots\Gamma^{2k+1}(\Gamma^j)^*(\Gamma^0\Gamma^1\Gamma^3\cdots\Gamma^{2k+1})^*\psi\nn\\
 &=&\left\{\begin{array}{ll}
    \psi &\text{if }k=4l,4l+1,\\
  -\psi&\text{if }k=4l+2,4l+3.
  \end{array}\right.
 \eea

 Note that the phase factors of C and R$_j$ do not affect the order of CR$_j$.
   The square of CR$_j$ in $d$d for all $d$ is summarized in \Table{tab:CP-square-dd}.
 
  \begin{table}[htp]
\begin{center}

\begin{tabular}{c|c|c}
\hline
Dimension &  & $(\rC\rR_j)^2\psi(\rC\rR_j)^{-2}$ \\
\hline
$d\equiv 0\mod 8$ & $d=2k+2,\; k\equiv3\mod4$ &$-\psi$ \\
$d\equiv 1\mod 8$ & $d=2k+3,\; k\equiv3\mod4$ &$-\psi$ \\
$d\equiv 2\mod 8$ &  $d=2k+2,\; k\equiv0\mod4$ &$-\psi$ \\
$d\equiv 3\mod 8$ &  $d=2k+3,\; k\equiv0\mod4$ &$\psi$\\
$d\equiv 4\mod 8$ & $d=2k+2,\; k\equiv1\mod4$ &$\psi$ \\
$d\equiv 5\mod 8$ & $d=2k+3,\; k\equiv1\mod4$ &$\psi$ \\
$d\equiv 6\mod 8$ & $d=2k+2,\; k\equiv2\mod4$ &$\psi$ \\
$d\equiv 7\mod 8$ & $d=2k+3,\; k\equiv2\mod4$ &$-\psi$ \\
\hline
\end{tabular}

\caption{The square of CR$_j$ in $d$d for all $d$. }
\label{tab:CP-square-dd}
\end{center}
\end{table}

 For $d=2k+2$ ($k$ even), $\rR_j\psi\rR_j^{-1}=\eta_{\rR_j}\Gamma^0\Gamma^1\cdots\hat{\Gamma}^j\cdots\Gamma^{2k+1}\psi(t,x_1,\dots,x_{j-1},-x_j,x_{j+1},\dots,x_{d-1})$, $\rT\psi\rT^{-1}=\eta_{\rT}\Gamma^0\Gamma^2\cdots\Gamma^{2k}\psi(-t,x)$, then
\bea  (\rR_j\rT)\psi(\rR_j\rT)^{-1}=\eta_{\rT}\Gamma^0\Gamma^2\cdots\Gamma^{2k}\eta_{\rR_j}\Gamma^0\Gamma^1\cdots\hat{\Gamma}^j\cdots\Gamma^{2k+1}\psi(-t,x_1,\dots,x_{j-1},-x_j,x_{j+1},\dots,x_{d-1}).
\eea
  Thus
 \bea
(\rR_j\rT)^2\psi(\rR_j\rT)^{-2}&=&(\Gamma^0\Gamma^2\cdots\Gamma^{2k}\Gamma^0\Gamma^1\cdots\hat{\Gamma}^j\cdots\Gamma^{2k+1})^*\Gamma^0\Gamma^2\cdots\Gamma^{2k}\Gamma^0\Gamma^1\cdots\hat{\Gamma}^j\cdots\Gamma^{2k+1}\psi\nn\\
  &=&(-1)^{\frac{k(k+1)}{2}}\psi\nn\\
  &=&\left\{\begin{array}{ll}
  \psi &\text{if }k=4l,4l+3,\\
  -\psi&\text{if }k=4l+1,4l+2.
  \end{array}\right.
 \eea

For $d=2k+2$ ($k$ odd), $\rR_j\psi\rR_j^{-1}=\eta_{\rR_j}\Gamma^0\Gamma^1\cdots\hat{\Gamma}^j\cdots\Gamma^{2k+1}\psi(t,x_1,\dots,x_{j-1},-x_j,x_{j+1},\dots,x_{d-1})$, $\rT\psi\rT^{-1}=\eta_{\rT}\Gamma^1\Gamma^3\cdots\Gamma^{2k+1}\psi(-t,x)$, then
\bea  (\rR_j\rT)\psi(\rR_j\rT)^{-1}=\eta_{\rT}\Gamma^1\Gamma^3\cdots\Gamma^{2k+1}\eta_{\rR_j}\Gamma^0\Gamma^1\cdots\hat{\Gamma}^j\cdots\Gamma^{2k+1}\psi(-t,x_1,\dots,x_{j-1},-x_j,x_{j+1},\dots,x_{d-1}).
\eea
  Thus
 \bea
(\rR_j\rT)^2\psi(\rR_j\rT)^{-2}&=&(\Gamma^1\Gamma^3\cdots\Gamma^{2k+1}\Gamma^0\Gamma^1\cdots\hat{\Gamma}^j\cdots\Gamma^{2k+1})^*\Gamma^1\Gamma^3\cdots\Gamma^{2k+1}\Gamma^0\Gamma^1\cdots\hat{\Gamma}^j\cdots\Gamma^{2k+1}\psi\nn\\
  &=&-(-1)^{\frac{(k+1)(k+2)}{2}}\psi\nn\\
  &=&\left\{\begin{array}{ll}
  -\psi &\text{if }k=4l,4l+3,\\
  \psi&\text{if }k=4l+1,4l+2.
  \end{array}\right.
 \eea
 
 For $d=2k+3$, 
 $\rR_j\psi\rR_j^{-1}=\eta_{\rR_j}\Gamma^j\psi(t,x_1,\dots,x_{j-1},-x_j,x_{j+1},\dots,x_{d-1})$, $\rT\psi\rT^{-1}=\eta_{\rT}\Gamma^1\Gamma^3\cdots\Gamma^{2k+1}\psi(-t,x)$, then
 \bea
 (\rR_j\rT)\psi(\rR_j\rT)^{-1}=\eta_{\rT}\Gamma^1\Gamma^3\cdots\Gamma^{2k+1}\eta_{\rR_j}\Gamma^j\psi(-t,x_1,\dots,x_{j-1},-x_j,x_{j+1},\dots,x_{d-1}).
 \eea
Thus
 \bea
 (\rR_j\rT)^2\psi(\rR_j\rT)^{-2}&=&(\Gamma^1\Gamma^3\cdots\Gamma^{2k+1}\Gamma^j)^*\Gamma^1\Gamma^3\cdots\Gamma^{2k+1}\Gamma^j\psi\nn\\
 &=&(-1)^{\frac{k(k+1)}{2}}\psi\nn\\
  &=&\left\{\begin{array}{ll}
  \psi &\text{if }k=4l,4l+3,\\
  -\psi&\text{if }k=4l+1,4l+2.
  \end{array}\right.
 \eea
  Note that the phase factors of R$_j$ and T do not affect the order of R$_j$T.
    The square of R$_j$T in $d$d for all $d$ is summarized in \Table{tab:PT-square-dd}.
 
  \begin{table}[htp]
\begin{center}

\begin{tabular}{c|c|c}
\hline
Dimension &  &  $(\rR_j\rT)^2\psi(\rR_j\rT)^{-2}$ \\
\hline
$d\equiv 0\mod 8$ & $d=2k+2,\; k\equiv3\mod4$ &$-\psi$ \\
$d\equiv 1\mod 8$ & $d=2k+3,\; k\equiv3\mod4$ &$\psi$ \\
$d\equiv 2\mod 8$ &  $d=2k+2,\; k\equiv0\mod4$ &$\psi$ \\
$d\equiv 3\mod 8$ &  $d=2k+3,\; k\equiv0\mod4$ &$\psi$\\
$d\equiv 4\mod 8$ & $d=2k+2,\; k\equiv1\mod4$ &$\psi$ \\
$d\equiv 5\mod 8$ & $d=2k+3,\; k\equiv1\mod4$ &$-\psi$ \\
$d\equiv 6\mod 8$ & $d=2k+2,\; k\equiv2\mod4$ &$-\psi$ \\
$d\equiv 7\mod 8$ & $d=2k+3,\; k\equiv2\mod4$ &$-\psi$ \\
\hline
\end{tabular}

\caption{The square of R$_j$T in $d$d for all $d$. }
\label{tab:PT-square-dd}
\end{center}
\end{table}

 For $d=2k+2$ ($k$ odd),
 $\rT\psi\rT^{-1}=\eta_{\rT}\Gamma^1\Gamma^3\cdots\Gamma^{2k+1}\psi(-t,x)$ and $\rC\psi\rC^{-1}=\eta_{\rC}\Gamma^2\Gamma^4\cdots\Gamma^{2k}\psi^*$. Therefore, 
 $(\rC\rT)\psi(\rC\rT)^{-1}=\eta_{\rT}\Gamma^1\Gamma^3\cdots\Gamma^{2k+1}\eta_{\rC}\Gamma^2\Gamma^4\cdots\Gamma^{2k}\psi^*(-t,x)$.
 Thus
 \bea
 (\rC\rT)^2\psi(\rC\rT)^{-2}&=&\rC(\eta_{\rT}\Gamma^1\Gamma^3\cdots\Gamma^{2k+1}\eta_{\rC}\Gamma^2\Gamma^4\cdots\Gamma^{2k})^*(\eta_{\rT}\Gamma^1\Gamma^3\cdots\Gamma^{2k+1})^*\psi^*(t,x)\rC^{-1}\nn\\
 &=&(\eta_{\rT}\Gamma^1\Gamma^3\cdots\Gamma^{2k+1}\eta_{\rC}\Gamma^2\Gamma^4\cdots\Gamma^{2k})^*(\eta_{\rT}\Gamma^1\Gamma^3\cdots\Gamma^{2k+1})^*(\eta_{\rC}\Gamma^2\Gamma^4\cdots\Gamma^{2k})^*\psi(t,x)\nn\\
  &=&(\eta_{\rT}^*\eta_{\rC}^*)^2(-1)^{k+1}\psi\nn\\
 &=&(\eta_{\rT}^*\eta_{\rC}^*)^2\psi.
 \eea

  For $d=2k+2$ ($k$ even),
 $\rT\psi\rT^{-1}=\eta_{\rT}\Gamma^0\Gamma^2\cdots\Gamma^{2k}\psi(-t,x)$ and $\rC\psi\rC^{-1}=\eta_{\rC}\Gamma^0\Gamma^1\Gamma^3\cdots\Gamma^{2k+1}\psi^*$. Therefore, 
 $(\rC\rT)\psi(\rC\rT)^{-1}=\eta_{\rT}\Gamma^0\Gamma^2\cdots\Gamma^{2k}\eta_{\rC}\Gamma^0\Gamma^1\Gamma^3\cdots\Gamma^{2k+1}\psi^*(-t,x)$.
 Thus
 \bea
 (\rC\rT)^2\psi(\rC\rT)^{-2}&=&\rC(\eta_{\rT}\Gamma^0\Gamma^2\cdots\Gamma^{2k}\eta_{\rC}\Gamma^0\Gamma^1\Gamma^3\cdots\Gamma^{2k+1})^*(\eta_{\rT}\Gamma^0\Gamma^2\cdots\Gamma^{2k})^*\psi^*(t,x)\rC^{-1}\nn\\
 &=&(\eta_{\rT}\Gamma^0\Gamma^2\cdots\Gamma^{2k}\eta_{\rC}\Gamma^0\Gamma^1\Gamma^3\cdots\Gamma^{2k+1})^*(\eta_{\rT}\Gamma^0\Gamma^2\cdots\Gamma^{2k})^*(\eta_{\rC}\Gamma^0\Gamma^1\Gamma^3\cdots\Gamma^{2k+1})^*\psi(t,x)\nn\\
 &=&(\eta_{\rT}^*\eta_{\rC}^*)^2(-1)^{k+1}\psi\nn\\
 &=&-(\eta_{\rT}^*\eta_{\rC}^*)^2\psi.
 \eea

 For $d=2k+3$,
$\rT\psi\rT^{-1}=\eta_{\rT}\Gamma^1\Gamma^3\cdots\Gamma^{2k+1}\psi(-t,x)$ and $\rC\psi\rC^{-1}=\eta_{\rC}\Gamma^0\Gamma^1\Gamma^3\cdots\Gamma^{2k+1}\psi^*$. Therefore, 
 $(\rC\rT)\psi(\rC\rT)^{-1}=\eta_{\rT}\Gamma^1\Gamma^3\cdots\Gamma^{2k+1}\eta_{\rC}\Gamma^0\Gamma^1\Gamma^3\cdots\Gamma^{2k+1}\psi^*(-t,x)$.
 Thus
 \bea
 (\rC\rT)^2\psi(\rC\rT)^{-2}&=&\rC(\eta_{\rT}\Gamma^1\Gamma^3\cdots\Gamma^{2k+1}\eta_{\rC}\Gamma^0\Gamma^1\Gamma^3\cdots\Gamma^{2k+1})^*(\eta_{\rT}\Gamma^1\Gamma^3\cdots\Gamma^{2k+1})^*\psi^*(t,x)\rC^{-1}\nn\\
 &=&(\eta_{\rT}\Gamma^1\Gamma^3\cdots\Gamma^{2k+1}\eta_{\rC}\Gamma^0\Gamma^1\Gamma^3\cdots\Gamma^{2k+1})^*(\eta_{\rT}\Gamma^1\Gamma^3\cdots\Gamma^{2k+1})^*(\eta_{\rC}\Gamma^0\Gamma^1\Gamma^3\cdots\Gamma^{2k+1})^*\psi(t,x)\nn\\
 &=&(\eta_{\rT}^*\eta_{\rC}^*)^2\psi.
 \eea

 The square of CT in $d$d for all $d$ is summarized in \Table{tab:CT-square-dd}.
  
 \begin{table}[htp]
\begin{center}

\begin{tabular}{c|c|c}
\hline
Dimension &  & $(\rC\rT)^2\psi(\rC\rT)^{-2}$ \\
\hline
$d\equiv 0\mod 8$ & $d=2k+2,\; k\equiv3\mod4$ &$(\eta_{\rT}^*\eta_{\rC}^*)^2\psi$ \\
$d\equiv 1\mod 8$ & $d=2k+3,\; k\equiv3\mod4$ &$(\eta_{\rT}^*\eta_{\rC}^*)^2\psi$ \\
$d\equiv 2\mod 8$ &  $d=2k+2,\; k\equiv0\mod4$ &$-(\eta_{\rT}^*\eta_{\rC}^*)^2\psi$ \\
$d\equiv 3\mod 8$ &  $d=2k+3,\; k\equiv0\mod4$ &$(\eta_{\rT}^*\eta_{\rC}^*)^2\psi$\\
$d\equiv 4\mod 8$ & $d=2k+2,\; k\equiv1\mod4$ &$(\eta_{\rT}^*\eta_{\rC}^*)^2\psi$ \\
$d\equiv 5\mod 8$ & $d=2k+3,\; k\equiv1\mod4$ &$(\eta_{\rT}^*\eta_{\rC}^*)^2\psi$ \\
$d\equiv 6\mod 8$ & $d=2k+2,\; k\equiv2\mod4$ &$-(\eta_{\rT}^*\eta_{\rC}^*)^2\psi$ \\
$d\equiv 7\mod 8$ & $d=2k+3,\; k\equiv2\mod4$ &$(\eta_{\rT}^*\eta_{\rC}^*)^2\psi$ \\
\hline
\end{tabular}

\caption{The square of CT in $d$d for all $d$. }
\label{tab:CT-square-dd}
\end{center}
\end{table}

Now we consider the canonical CRT \cite{HasonKomargodskiThorngren1910.14039}.
By \eqref{eq:CPT}, we have $(\rC\rR_j\rT)^2=1$.
Note that 
\bea\label{eq:CRT2}
(\rC\rR_j\rT)^2\psi(\rC\rR_j\rT)^{-2}=(M_{\rT}M_{\rR_j}M_{\rC})^*M_{\rT}^*M_{\rR_j}^*M_{\rC}^*\psi.
\eea
By \Table{tab:CPRT-trans-dd}, $M_{\rT}M_{\rR_j}M_{\rC}=\eta_{\rT}\eta_{\rR_j}\eta_{\rC}\Gamma^0\Gamma^j$,
we get $(\eta_{\rT}\eta_{\rR_j}\eta_{\rC})^2=1$ and $\eta_{\rT}\eta_{\rR_j}\eta_{\rC}=\pm1$.

By \eqref{eq:U-CPT}, we have $\rC\cdot\rC\rR_j\rT=\rC\rR_j\rT\cdot\rC$, hence $\rC\rR_j\rT=\rR_j\rT\rC$ and $\rC\rC\rR_j\rT\rR_j\rT=\rC\rR_j\rT\rC\rR_j\rT$. By \eqref{eq:CPT}, we have $\rC^2(\rR_j\rT)^2=1$. This is consistent with \Table{tab:C-square-dd} and \Table{tab:PT-square-dd}. Conversely, $\rC^2(\rR_j\rT)^2=1$ and \eqref{eq:CPT} imply \eqref{eq:U-CPT} for $\U=\rC$.

By \eqref{eq:T-CPT}, we have 
$\rT\cdot\rC\rR_j\rT=(-1)^{\rF}\rC\rR_j\rT\cdot\rT$, hence $\rC\rR_j\rT=(-1)^{\rF}\rT\rC\rR_j$ and $\rC\rR_j\rC\rR_j\rT\rT=(-1)^{\rF}\rC\rR_j\rT\rC\rR_j\rT$. Here we have used the fact that $(-1)^{\rF}$ commutes with $\rC\rR_j$. By \eqref{eq:CPT}, we have $(\rC\rR_j)^2\rT^2=(-1)^{\rF}$. This is consistent with \Table{tab:T-square-dd} and \Table{tab:CP-square-dd}.
Conversely, $(\rC\rR_j)^2\rT^2=(-1)^{\rF}$ and \eqref{eq:CPT} imply \eqref{eq:T-CPT}.

Combining \Table{tab:C-square-dd}, \Table{tab:R-square-dd}, \Table{tab:T-square-dd}, \Table{tab:CP-square-dd}, \Table{tab:PT-square-dd}, and \Table{tab:CT-square-dd},
we get \Table{tab:CPRT-square-dd-1} and \Table{tab:CPRT-square-dd-2}.
The CRT for our two possibilities satisfies \eqref{eq:CPT}, $\rC^2(\rR_j\rT)^2=1$, and $\rT^2(\rC\rR_j)^2=(-1)^{\rF}$, hence also satisfies \eqref{eq:T-CPT} and \eqref{eq:U-CPT} for $\U=\rC$. We will check in \App{app:further} that it also satisfies \eqref{eq:U-CPT} for the internal charge-like and isospin-like U(1) symmetries. Thus the CRT is canonical.

\begin{table}[htp]
\begin{center}

\begin{tabular}{c|c|c|c|c|c|c}
\hline
Dimension & $\rC^2$ & $\rR_j^2$ &$\rT^2$ &  $(\rC\rR_j)^2$ & $(\rR_j\rT)^2$ &$(\rC\rT)^2$\\
\hline
$d\equiv 0\mod 8$ & $(-1)^{\rF}$ &$(-1)^{\rF}$&1&$(-1)^{\rF}$&$(-1)^{\rF}$&1 \\
$d\equiv 1\mod 8$ & 1&$(-1)^{\rF}$&1&$(-1)^{\rF}$&1&1 \\
$d\equiv 2\mod 8$ & 1&1&1&$(-1)^{\rF}$&1&$(-1)^{\rF}$  \\
$d\equiv 3\mod 8$ & 1&$(-1)^{\rF}$&$(-1)^{\rF}$&1&1&1 \\
$d\equiv 4\mod 8$ & 1&$(-1)^{\rF}$&$(-1)^{\rF}$&1&1&1 \\
$d\equiv 5\mod 8$ & $(-1)^{\rF}$ &$(-1)^{\rF}$ &$(-1)^{\rF}$&1&$(-1)^{\rF}$&1  \\
$d\equiv 6\mod 8$ &$(-1)^{\rF}$ &1&$(-1)^{\rF}$&1&$(-1)^{\rF}$&$(-1)^{\rF}$ \\
$d\equiv 7\mod 8$ & $(-1)^{\rF}$&$(-1)^{\rF}$&1&$(-1)^{\rF}$&$(-1)^{\rF}$&1 \\
\hline
\end{tabular}

\caption{The first possibility for the squares of C, R$_j$, T, CR$_j$, R$_j$T, and CT with the canonical CRT in $d$d for all $d$.   }
\label{tab:CPRT-square-dd-1}
\end{center}
\end{table}

\begin{table}[htp]
\begin{center}

\begin{tabular}{c|c|c|c|c|c|c}
\hline
Dimension & $\rC^2$ &$\rR_j^2$ &$\rT^2$ &  $(\rC\rR_j)^2$ & $(\rR_j\rT)^2$ &$(\rC\rT)^2$\\
\hline
$d\equiv 0\mod 8$ & $(-1)^{\rF}$ &1&1&$(-1)^{\rF}$&$(-1)^{\rF}$&$(-1)^{\rF}$ \\
$d\equiv 1\mod 8$ & 1&1&1&$(-1)^{\rF}$&1&$(-1)^{\rF}$ \\
$d\equiv 2\mod 8$ & 1&$(-1)^{\rF}$&1&$(-1)^{\rF}$&1&1  \\
$d\equiv 3\mod 8$ & 1&1&$(-1)^{\rF}$&1&1&$(-1)^{\rF}$ \\
$d\equiv 4\mod 8$ & 1&1&$(-1)^{\rF}$&1&1&$(-1)^{\rF}$ \\
$d\equiv 5\mod 8$ & $(-1)^{\rF}$ &1 &$(-1)^{\rF}$&1&$(-1)^{\rF}$&$(-1)^{\rF}$  \\
$d\equiv 6\mod 8$ &$(-1)^{\rF}$ &$(-1)^{\rF}$&$(-1)^{\rF}$&1&$(-1)^{\rF}$&1 \\
$d\equiv 7\mod 8$ & $(-1)^{\rF}$&1&1&$(-1)^{\rF}$&$(-1)^{\rF}$&$(-1)^{\rF}$ \\
\hline
\end{tabular}

\caption{The second possibility for the squares of C, R$_j$, T, CR$_j$, R$_j$T, and CT with the canonical CRT in $d$d for all $d$.  }
\label{tab:CPRT-square-dd-2}
\end{center}
\end{table}

The extension class of the group extension \eqref{eq:extension} is represented by a 2-cocycle $f$ on $\Z_2^{\rC}\times\Z_2^{\rR_j}\times\Z_2^{\rT}$ valued in $\Z_2^{\rF}$. Note that for $a\in \Z_2^{\rC}\times\Z_2^{\rR_j}\times\Z_2^{\rT}$ and $b\in\Z_2^{\rF}$, $(a,b)^2=(a+a,b+b+f(a,a))=(0,f(a,a))$. 
We can determine the extension class of the C-R$_j$-T fractionalization from the squares of C, R$_j$, T, CR$_j$, R$_j$T, and CT.

We summarize the two possibilities of the C-R$_j$-T fractionalization with the canonical CRT and its extension class for a Dirac fermion in \Table{tab:CPRT-group-dd-1} and \Table{tab:CPRT-group-dd-2}. 
We find that the C-R$_j$-T fractionalizations for a Dirac fermion only depend on the remainder of the spacetime dimension $d$ modulo 8. Hence we have proved the second part of Theorem \ref{main}.
Note that the case $d=1$ is special, there is only one possibility because there is no R$_j$ for $d=1$.

\begin{table}[htp]
\begin{center}
\hspace*{-1cm}
\begin{tabular}{c|c|c|c}
\hline

Dimension &
 Group presentation & 
 Group & 
 Extension class\\
\hline

$d\equiv 0\mod 8$ & $\rC^2=\rR_j^2=(\rC\rR_j)^2=(\rR_j\rT)^2=(-1)^{\rF}$, $\rT^2=(\rC\rT)^2=1$ & $\frac{\bD_8^{\rF,\rC,\rT}\times \Z_4^{\rR_j\rT\rF}}{\Z_2^{\rF}}$ &$(a^{\rC})^2+(a^{\rR_j})^2+a^{\rC}a^{\rR_j}+a^{\rC}a^{\rT}$\\
\hline
$d=1$ &   $\rC^2=\rT^2=(\rC\rT)^2=1$ & $\Z_2^{\rC}\times\Z_2^{\rT}\times\Z_2^{\rF}$ & $0$\\
\hline
$d\equiv 1\mod 8$ ($d>1$) &  $\rR_j^2=(\rC\rR_j)^2=(-1)^{\rF}$, $\rC^2=\rT^2=(\rR_j\rT)^2=(\rC\rT)^2=1$ & $\bD_8^{\rF,\rC\rR_j ,\rC\rT}\times \Z_2^{\rC}$ & $(a^{\rR_j})^2+a^{\rR_j}a^{\rT}$\\
\hline
$d\equiv 2\mod 8$ &  $(\rC\rR_j)^2=(\rC\rT)^2=(-1)^{\rF}$, $\rC^2=\rR_j^2=\rT^2=(\rR_j\rT)^2=1$ & $\bD_8^{\rF,\rC \rT,\rC}\times \Z_2^{\rR_j\rT}$ & $a^{\rC}a^{\rR_j}+a^{\rC}a^{\rT}$\\
\hline
$d\equiv 3\mod 8$ & $\rR_j^2=\rT^2=(-1)^{\rF}$, $\rC^2=(\rC\rR_j)^2=(\rR_j\rT)^2=(\rC\rT)^2=1$ &  $\bD_8^{\rF,\rT ,\rC\rR_j}\times \Z_2^{\rR_j\rT}$ & $(a^{\rR_j}+a^{\rT})(a^{\rC}+a^{\rR_j}+a^{\rT})$\\
\hline
$d\equiv 4\mod 8$ & $\rR_j^2=\rT^2=(-1)^{\rF}$, $\rC^2=(\rC\rR_j)^2=(\rR_j\rT)^2=(\rC\rT)^2=1$ &  $\bD_8^{\rF,\rT ,\rC\rR_j}\times \Z_2^{\rR_j\rT}$ & $(a^{\rR_j}+a^{\rT})(a^{\rC}+a^{\rR_j}+a^{\rT})$\\
\hline
$d\equiv 5\mod 8$ & $\rC^2=\rR_j^2=\rT^2=(\rR_j\rT)^2=(-1)^{\rF}$, $(\rC\rR_j)^2=(\rC\rT)^2=1$ &  $\frac{\bD_8^{\rF,\rT,\rC\rR_j}\times \Z_4^{\rC\rF}}{\Z_2^{\rF}}$ & $(a^{\rC})^2+(a^{\rR_j})^2+(a^{\rT})^2+a^{\rR_j}a^{\rT}$\\
\hline
$d\equiv 6\mod 8$ & $\rR_j^2=(\rC\rR_j)^2=1$, $\rC^2=\rT^2=(\rR_j\rT)^2=(\rC\rT)^2=(-1)^{\rF}$ & $\frac{\bD_8^{\rF,\rT, \rC\rR_j\rT}\times \Z_4^{\rR_j\rT\rF}}{\Z_2^{\rF}}$ & $(a^{\rC})^2+(a^{\rT})^2+a^{\rC}a^{\rR_j}+a^{\rC}a^{\rT}$\\
\hline
$d\equiv 7\mod 8$ & $\rC^2=\rR_j^2=(\rC\rR_j)^2=(\rR_j\rT)^2=(-1)^{\rF}$, $\rT^2=(\rC\rT)^2=1$ & $\frac{\bD_8^{\rF,\rC,\rT}\times \Z_4^{\rR_j\rT\rF}}{\Z_2^{\rF}}$ &$(a^{\rC})^2+(a^{\rR_j})^2+a^{\rC}a^{\rR_j}+a^{\rC}a^{\rT}$\\
\hline
\end{tabular}
\caption{The first possibility for the structure of the C-R$_j$-T fractionalization with the canonical CRT for a Dirac fermion and its extension class in $d$d for all $d$.
Here $\bD_8^{\rF,x ,y}\equiv\langle x,y |x^4=y^2=1, yxy=x^{-1}, x^2=(-1)^{\rF}\rangle$, and $\frac{\bD_8^{\rF,x,y}\times \Z_4^{z\rF}}{\Z_2^{\rF}}\equiv\langle x,y,z| x^4=y^2=z^4=1,  yxy=x^{-1}, zx=xz, zy=yz,  x^2=z^2=(-1)^{\rF}\rangle $.  Here $a^X$ ($X=\rC$, $\rR_j$, and $\rT$) are the three generators of $\H^1(\B(\Z_2^{\rC}\times\Z_2^{\rR_j}\times\Z_2^{\rT}),\Z_2^{\rF})=\Z_2^3$ respectively.}
\label{tab:CPRT-group-dd-1}
\end{center}
\end{table}

\begin{table}[htp]
\begin{center}
\hspace*{-1cm}
\begin{tabular}{c|c|c|c}
\hline
 Dimension &  Group presentation & Group & Extension class\\
\hline
$d\equiv 0\mod 8$ & $\rC^2=(\rC\rR_j)^2=(\rR_j\rT)^2=(\rC\rT)^2=(-1)^{\rF}$, $\rR_j^2=\rT^2=1$ & $\frac{\bD_8^{\rF,\rC\rR_j,\rT}\times \Z_4^{\rC\rF}}{\Z_2^{\rF}}$ &$(a^{\rC})^2+a^{\rR_j}a^{\rT}$\\
\hline
$d=1$ &   $\rC^2=\rT^2=(\rC\rT)^2=1$ & $\Z_2^{\rC}\times\Z_2^{\rT}\times\Z_2^{\rF}$ &$0$\\
\hline
$d\equiv 1\mod 8$ ($d>1$) &  $(\rC\rR_j)^2=(\rC\rT)^2=(-1)^{\rF}$, $\rC^2=\rR_j^2=\rT^2=(\rR_j\rT)^2=1$ & $\bD_8^{\rF,\rC\rR_j ,\rC\rR_j\rT}\times \Z_2^{\rR_j\rT}$ &$a^{\rC}a^{\rR_j}+a^{\rC}a^{\rT}$\\
\hline
$d\equiv 2\mod 8$ &  $\rR_j^2=(\rC\rR_j)^2=(-1)^{\rF}$, $\rC^2=\rT^2=(\rR_j\rT)^2=(\rC\rT)^2=1$ & $\bD_8^{\rF,\rC \rR_j,\rC\rR_j\rT}\times \Z_2^{\rC}$ & $(a^{\rR_j})^2+a^{\rR_j}a^{\rT}$\\
\hline
$d\equiv 3\mod 8$ & $\rT^2=(\rC\rT)^2=(-1)^{\rF}$, $\rC^2=\rR_j^2=(\rC\rR_j)^2=(\rR_j\rT)^2=1$ &  $\bD_8^{\rF,\rT,\rR_j }\times \Z_2^{\rC}$ & $(a^{\rT})^2+a^{\rR_j}a^{\rT}$\\
\hline
$d\equiv 4\mod 8$ & $\rT^2=(\rC\rT)^2=(-1)^{\rF}$, $\rC^2=\rR_j^2=(\rC\rR_j)^2=(\rR_j\rT)^2=1$ &  $\bD_8^{\rF,\rT,\rR_j }\times \Z_2^{\rC}$ & $(a^{\rT})^2+a^{\rR_j}a^{\rT}$\\
\hline
$d\equiv 5\mod 8$ & $\rC^2=\rT^2=(\rR_j\rT)^2=(\rC\rT)^2=(-1)^{\rF}$, $\rR_j^2=(\rC\rR_j)^2=1$ &  $\frac{\bD_8^{\rF,\rT,\rC\rR_j\rT}\times \Z_4^{\rR_j\rT\rF}}{\Z_2^{\rF}}$& $(a^{\rC})^2+(a^{\rT})^2+a^{\rC}a^{\rR_j}+a^{\rC}a^{\rT}$\\
\hline
$d\equiv 6\mod 8$ & $(\rC\rR_j)^2=(\rC\rT)^2=1$, $\rC^2=\rR_j^2=\rT^2=(\rR_j\rT)^2=(-1)^{\rF}$ & $\frac{\bD_8^{\rF,\rR_j, \rC\rR_j\rT}\times \Z_4^{\rC\rF}}{\Z_2^{\rF}}$ &$(a^{\rC})^2+(a^{\rR_j})^2+(a^{\rT})^2+a^{\rR_j}a^{\rT}$\\
\hline
$d\equiv 7\mod 8$ & $\rC^2=(\rC\rR_j)^2=(\rR_j\rT)^2=(\rC\rT)^2=(-1)^{\rF}$, $\rR_j^2=\rT^2=1$ & $\frac{\bD_8^{\rF,\rC\rR_j,\rT}\times \Z_4^{\rC\rF}}{\Z_2^{\rF}}$ &$(a^{\rC})^2+a^{\rR_j}a^{\rT}$\\
\hline
\end{tabular}
\caption{The second possibility for the structure of the C-R$_j$-T fractionalization with the canonical CRT for a Dirac fermion and its extension class in $d$d for all $d$.
Here $\bD_8^{\rF,x ,y}\equiv\langle x,y |x^4=y^2=1, yxy=x^{-1}, x^2=(-1)^{\rF}\rangle$, and $\frac{\bD_8^{\rF,x,y}\times \Z_4^{z\rF}}{\Z_2^{\rF}}\equiv\langle x,y,z| x^4=y^2=z^4=1,  yxy=x^{-1}, zx=xz, zy=yz,  x^2=z^2=(-1)^{\rF}\rangle $. Here $a^X$ ($X=\rC$, $\rR_j$, and $\rT$) are the three generators of $\H^1(\B(\Z_2^{\rC}\times\Z_2^{\rR_j}\times\Z_2^{\rT}),\Z_2^{\rF})=\Z_2^3$ respectively. }
\label{tab:CPRT-group-dd-2}
\end{center}
\end{table}

Let $G_{{\rm D},i}(d\mod8)$ $(d>1)$ denote the $i$-th possibility of the C-R$_j$-T fractionalization with the canonical CRT for a Dirac fermion. We will use this notation in \Sec{sec:domain-wall}. We find that 
 
\begin{itemize}
    \item 
$G_{{\rm D},i}(0\mod8)= G_{{\rm D},i}(7\mod8)$ for $i=1,2$,
\item 
$G_{{\rm D},i}(3\mod8)= G_{{\rm D},i}(4\mod8)$ for $i=1,2$,
\item 
$G_{{\rm D},1}(1\mod8)= G_{{\rm D},2}(2\mod8)$, $G_{{\rm D},1}(2\mod8)= G_{{\rm D},2}(1\mod8)$,
\item 
$G_{{\rm D},1}(5\mod8)= G_{{\rm D},2}(6\mod8)$, $G_{{\rm D},1}(6\mod8)= G_{{\rm D},2}(5\mod8)$.
    
\end{itemize}

\section{C-R-T fractionalization for Majorana, Weyl, and Majorana-Weyl fermions}\label{sec:Majorana}
In this section, we determine the R-T fractionalization for
Majorana fermions in any spacetime dimension $d = 0, 1, 2, 3, 4 \mod 8$. We also determine the C or T fractionalization
for Weyl fermions in any even spacetime dimension.

Majorana fermions are defined as Dirac fermions with the trivial C symmetry. Majorana fermions exist only for $d=0,1,2,3,4\mod 8$ \cite{Polchinski:1998rr,Stone:2020vva2009.00518}. 
For $d=0,1\mod8$, there exist only massless Majorana fermions \cite{Stone:2020vva2009.00518}. First, we reproduce these results in terms of the chiral representation (see \App{app:Weyl}). 

To determine the R$_j$-T fractionalization for Majorana fermions, we then construct imaginary Gamma matrices for the massive case (for $d=2,3,4\mod8$) and real Gamma matrices for the massless case (for $d=0,1,2\mod8$) (see \App{app:Majorana}), then we can choose the matrix of C to be the identity matrix. Since the matrices of R$_j$ and T for a Majorana fermion have to be real, we can determine the matrices of R$_j$ and T for a Majorana fermion. Moreover, we can determine the squares of R$_j$, T, and R$_j$T, hence we obtain a group presentation of the R$_j$-T fractionalization.

For Majorana fermions, the C symmetry is trivial. 
First, we consider the massless Majorana fermions.
The massless Dirac equation is
\bea
\ii\Gamma^{\mu}\partial_{\mu}\psi=0.
\eea
The charge-conjugated massless Dirac equation is 
\bea
\ii\Gamma^{\mu}\partial_{\mu}\psi^{\rC}=0.
\eea
Since $\psi^{\rC}=M_{\rC}\psi^*$, we get 
\bea\label{eq:cc1}
M_{\rC}\Gamma^{\mu}=-\Gamma^{\mu*}M_{\rC}
\eea
or
\bea\label{eq:cc2}
M_{\rC}\Gamma^{\mu}=\Gamma^{\mu*}M_{\rC}
\eea
for $\mu=0,1,\dots,d-1$.
Only \eqref{eq:cc1} is valid if we consider the massive Majorana fermions.

First, we choose the Gamma matrices as \eqref{eq:Gamma}.
For $d=2k+2$, we get $M_{\rC}=\Gamma^2\Gamma^4\cdots\Gamma^{2k}$ (it satisfies \eqref{eq:cc1} for odd $k$ and \eqref{eq:cc2} for even $k$) or
$M_{\rC}=\Gamma^0\Gamma^1\Gamma^3\cdots\Gamma^{2k+1}$ (it satisfies \eqref{eq:cc1} for even $k$ and \eqref{eq:cc2} for odd $k$).
For $d=2k+3$, since $\Gamma^{2k+2}=\ii^{k+1}\Gamma^0\Gamma^1\cdots\Gamma^{2k+1}$, if \eqref{eq:cc1} holds for $\mu=0,1,\dots,2k+1$, then \eqref{eq:cc1} holds for $\mu=2k+2$ only if $k$ is even. If \eqref{eq:cc2} holds for $\mu=0,1,\dots,2k+1$, then \eqref{eq:cc2} holds for $\mu=2k+2$ only if $k$ is odd.
We get $M_{\rC}=\Gamma^0\Gamma^1\Gamma^3\cdots\Gamma^{2k+1}$ for $d=2k+3$ and all $k$ (it satisfies \eqref{eq:cc1} for even $k$ and \eqref{eq:cc2} for odd $k$).
The Majorana condition is $\psi^{\rC}=\psi=M_{\rC}\psi^*=M_{\rC}M_{\rC}^*\psi$, hence $M_{\rC}M_{\rC}^*=1$.

For $d=2k+2$, if $M_{\rC}=\Gamma^2\Gamma^4\cdots\Gamma^{2k}$, then 
$M_{\rC}M_{\rC}^*=1$ implies that $(-1)^{\frac{k(k-1)}{2}}=1$ and $k=0,1\mod4$ where $M_{\rC}$ satisfies \eqref{eq:cc2} for $k=0\mod4$ and \eqref{eq:cc1} for $k=1\mod4$. If $M_{\rC}=\Gamma^0\Gamma^1\Gamma^3\cdots\Gamma^{2k+1}$, then 
$M_{\rC}M_{\rC}^*=1$ implies that $(-1)^{\frac{k(k+1)}{2}}=1$ and $k=0,3\mod4$ where $M_{\rC}$ satisfies \eqref{eq:cc1} for $k=0\mod4$ and \eqref{eq:cc2} for $k=3\mod4$.
For $d=2k+3$, $M_{\rC}=\Gamma^0\Gamma^1\Gamma^3\cdots\Gamma^{2k+1}$.
$M_{\rC}M_{\rC}^*=1$ implies that $(-1)^{\frac{k(k+1)}{2}}=1$ and $k=0,3\mod4$ where $M_{\rC}$ satisfies \eqref{eq:cc1} for $k=0\mod4$ and \eqref{eq:cc2} for $k=3\mod4$.

In summary, the Majorana condition implies that $d=0,1,2,3,4\mod8$.
This agrees with Table B.1 of \cite{Polchinski:1998rr}.
Here $d=0\mod8$ requires \eqref{eq:cc2}, $d=1\mod8$ requires \eqref{eq:cc2}, $d=2\mod8$ requires either \eqref{eq:cc1} or \eqref{eq:cc2}, $d=3\mod8$ requires \eqref{eq:cc1}, and $d=4\mod8$ requires \eqref{eq:cc1}. Therefore, there exist only massless Majorana fermions for $d=0,1\mod8$. This also agrees with the results in \cite{Stone:2020vva2009.00518}.

We can construct imaginary Gamma matrices for $d=2,3,4\mod8$, namely $\Gamma^{\mu*}=-\Gamma^{\mu}$. Then we can choose the matrix of the charge conjugation to be the identity matrix since $M_{\rC}\Gamma^{\mu*}M_{\rC}^{-1}=-\Gamma^{\mu}$. 
We can construct real Gamma matrices for $d=0,1,2\mod8$, namely $\Gamma^{\mu*}=\Gamma^{\mu}$. In the massless case, we can choose the matrix of the charge conjugation to be the identity matrix since $M_{\rC}\Gamma^{\mu*}M_{\rC}^{-1}=\Gamma^{\mu}$.

Therefore, for $d=0,1,2,3,4\mod8$, the Majorana condition becomes $\psi^*=\psi$.
In the massive case, the matrices of R$_j$ and T should satisfy \eqref{eq:CPRT-condition}.
If the Gamma matrices are imaginary ($d=2,3,4\mod8$), then \eqref{eq:CPRT-condition} implies that 
\bea
M_{\rT}\Gamma^0M_{\rT}^{-1}&=&-\Gamma^0,\nn\\
M_{\rT}\Gamma^jM_{\rT}^{-1}&=&\Gamma^j,\;\;\;j=1,\dots,d-1.
\eea
If the Gamma matrices are real ($d=0,1,2\mod8$), then \eqref{eq:CPRT-condition} implies that 
\bea
M_{\rT}\Gamma^0M_{\rT}^{-1}&=&\Gamma^0,\nn\\
M_{\rT}\Gamma^jM_{\rT}^{-1}&=&-\Gamma^j,\;\;\;j=1,\dots,d-1.
\eea

For Majorana fermions, the matrices of R$_j$ and T should be real. For $d=2k+2=4\mod8$, we can choose imaginary Gamma matrices (see \App{app:Majorana}), and we have 
$M_{\rT}=\ii\Gamma^1\Gamma^2\cdots\Gamma^{2k+1}$
and 
$M_{\rR_j}=\ii\Gamma^0\Gamma^1\cdots\hat{\Gamma}^j\cdots\Gamma^{2k+1}$.
In this case, 
\bea
\rT^2\psi\rT^{-2}&=&M_{\rT}^2\psi\nn\\
&=&(\ii\Gamma^1\Gamma^2\cdots\Gamma^{2k+1})^2\psi\nn\\
&=&(-1)^k\psi=-\psi,
\eea
\bea
\rR_j^2\psi\rR_j^{-2}&=&M_{\rR_j}^2\psi\nn\\
&=&(\ii\Gamma^0\Gamma^1\cdots\hat{\Gamma}^j\cdots\Gamma^{2k+1})^2\psi\nn\\
&=&(-1)^{k+1}\psi=\psi,
\eea
and
\bea
(\rR_j\rT)^2\psi(\rR_j\rT)^{-2}&=&(M_{\rT}M_{\rR_j})^2\psi\nn\\
&=&(\ii\Gamma^1\Gamma^2\cdots\Gamma^{2k+1}\ii\Gamma^0\Gamma^1\cdots\hat{\Gamma}^j\cdots\Gamma^{2k+1})^2\psi\nn\\
&=&\psi.
\eea

For $d=2k+3=3\mod8$, we can choose imaginary Gamma matrices (see \App{app:Majorana}), and \eqref{eq:CPRT-condition} has no solutions for the matrices of R$_j$ and T. In the massless case, we have 
$M_{\rT}=\ii\Gamma^0$
and 
$M_{\rR_j}=\ii\Gamma^j$.
In this case, 
\bea
\rT^2\psi\rT^{-2}&=&M_{\rT}^2\psi\nn\\
&=&(\ii\Gamma^0)^2\psi\nn\\
&=&-\psi,
\eea
\bea
\rR_j^2\psi\rR_j^{-2}&=&M_{\rR_j}^2\psi\nn\\
&=&(\ii\Gamma^j)^2\psi\nn\\
&=&\psi,
\eea
and
\bea
(\rR_j\rT)^2\psi(\rR_j\rT)^{-2}&=&(M_{\rT}M_{\rR_j})^2\psi\nn\\
&=&(\ii\Gamma^0\ii\Gamma^j)^2\psi\nn\\
&=&\psi.
\eea

For $d=2k+2=2\mod8$, the Majorana fermions can be massive or massless and we can choose imaginary or real Gamma matrices (see \App{app:Majorana}). We have 
$M_{\rT}=\ii\Gamma^1\Gamma^2\cdots\Gamma^{2k+1}$
and 
$M_{\rR_j}=\ii\Gamma^0\Gamma^1\cdots\hat{\Gamma}^j\cdots\Gamma^{2k+1}$
(if the Gamma matrices are imaginary) or 
$M_{\rT}=\Gamma^0$
and 
$M_{\rR_j}=\Gamma^0\Gamma^1\cdots\hat{\Gamma}^j\cdots\Gamma^{2k+1}$
(if the Gamma matrices are real). In the first case, 
\bea
\rT^2\psi\rT^{-2}&=&M_{\rT}^2\psi\nn\\
&=&(\ii\Gamma^1\Gamma^2\cdots\Gamma^{2k+1})^2\psi\nn\\
&=&(-1)^k\psi=\psi,
\eea
\bea
\rR_j^2\psi\rR_j^{-2}&=&M_{\rR_j}^2\psi\nn\\
&=&(\ii\Gamma^0\Gamma^1\cdots\hat{\Gamma}^j\cdots\Gamma^{2k+1})^2\psi\nn\\
&=&(-1)^{k+1}\psi=-\psi,
\eea
and
\bea
(\rR_j\rT)^2\psi(\rR_j\rT)^{-2}&=&(M_{\rT}M_{\rR_j})^2\psi\nn\\
&=&(\ii\Gamma^1\Gamma^2\cdots\Gamma^{2k+1}\ii\Gamma^0\Gamma^1\cdots\hat{\Gamma}^j\cdots\Gamma^{2k+1})^2\psi\nn\\
&=&\psi.
\eea
In the second case,
\bea
\rT^2\psi\rT^{-2}&=&M_{\rT}^2\psi\nn\\
&=&(\Gamma^0)^2\psi\nn\\
&=&\psi,
\eea
\bea
\rR_j^2\psi\rR_j^{-2}&=&M_{\rR_j}^2\psi\nn\\
&=&(\Gamma^0\Gamma^1\cdots\hat{\Gamma}^j\cdots\Gamma^{2k+1})^2\psi\nn\\
&=&(-1)^k\psi=\psi,
\eea
and
\bea
(\rR_j\rT)^2\psi(\rR_j\rT)^{-2}&=&(M_{\rT}M_{\rR_j})^2\psi\nn\\
&=&(\Gamma^0\Gamma^0\Gamma^1\cdots\hat{\Gamma}^j\cdots\Gamma^{2k+1})^2\psi\nn\\
&=&(-1)^k\psi=\psi.
\eea

For $d=2k+3=1\mod8$, the Majorana fermions have to be massless and we can choose real Gamma matrices (see \App{app:Majorana}). We have 
$M_{\rT}=\Gamma^0$
and \eqref{eq:CPRT-condition} has no solutions for the matrix of R$_j$. In the massless case, we have 
$M_{\rR_j}=\Gamma^j$.
In this case,
\bea
\rT^2\psi\rT^{-2}&=&M_{\rT}^2\psi\nn\\
&=&(\Gamma^0)^2\psi\nn\\
&=&\psi,
\eea
\bea
\rR_j^2\psi\rR_j^{-2}&=&M_{\rR_j}^2\psi\nn\\
&=&(\Gamma^j)^2\psi\nn\\
&=&-\psi,
\eea
and
\bea
(\rR_j\rT)^2\psi(\rR_j\rT)^{-2}&=&(M_{\rT}M_{\rR_j})^2\psi\nn\\
&=&(\Gamma^0\Gamma^j)^2\psi\nn\\
&=&\psi.
\eea

For $d=2k+2=0\mod8$, the Majorana fermions have to be massless and we can choose real Gamma matrices (see \App{app:Majorana}). We have 
$M_{\rT}=\Gamma^0$
and 
$M_{\rR_j}=\Gamma^0\Gamma^1\cdots\hat{\Gamma}^j\cdots\Gamma^{2k+1}$.
In this case, 
\bea
\rT^2\psi\rT^{-2}&=&M_{\rT}^2\psi\nn\\
&=&(\Gamma^0)^2\psi\nn\\
&=&\psi,
\eea
\bea
\rR_j^2\psi\rR_j^{-2}&=&M_{\rR_j}^2\psi\nn\\
&=&(\Gamma^0\Gamma^1\cdots\hat{\Gamma}^j\cdots\Gamma^{2k+1})^2\psi\nn\\
&=&(-1)^k\psi=-\psi,
\eea
and
\bea
(\rR_j\rT)^2\psi(\rR_j\rT)^{-2}&=&(M_{\rT}M_{\rR_j})^2\psi\nn\\
&=&(\Gamma^0\Gamma^0\Gamma^1\cdots\hat{\Gamma}^j\cdots\Gamma^{2k+1})^2\psi\nn\\
&=&(-1)^k\psi=-\psi.
\eea

Note that the case $d=1$ is special. In this case, there is only the T symmetry and a single Majorana fermion can not have a mass. In the massless case, $M_{\rT}=1$ and $\rT^2=1$.

 We summarize the $\rR_j$, $\rT$ transformations on a Majorana fermion in $d$d for $d=0,1,2,3,4\mod8$ in \Table{tab:CPRT-trans-dd-Majorana}.
\begin{table}[htp]
\begin{center}
\begin{tabular}{c|c | c}
\hline
Dimension &  $M_{\rR_j}\psi(t,x_1,\dots,x_{j-1},-x_j,x_{j+1},\dots,x_{d-1})$ & $M_{\rT}\psi(-t,x)$ \\
\hline
$d=0\mod8$  & $\Gamma^0\Gamma^1\cdots\hat{\Gamma}^j\cdots\Gamma^{d-1}\psi(t,x_1,\dots,x_{j-1},-x_j,x_{j+1},\dots,x_{d-1})$ & $\Gamma^0\psi(-t,x)$ \\
$d=1\mod8$  & $\Gamma^j\psi(t,x_1,\dots,x_{j-1},-x_j,x_{j+1},\dots,x_{d-1})$ & $\Gamma^0\psi(-t,x)$\\
$d=2\mod8$ (massive)  & $\ii\Gamma^0\Gamma^1\cdots\hat{\Gamma}^j\cdots\Gamma^{d-1}\psi(t,x_1,\dots,x_{j-1},-x_j,x_{j+1},\dots,x_{d-1})$ & $\ii\Gamma^1\Gamma^2\cdots\Gamma^{d-1}\psi(-t,x)$ \\
$d=2\mod8$ (massless)  & $\Gamma^0\Gamma^1\cdots\hat{\Gamma}^j\cdots\Gamma^{d-1}\psi(t,x_1,\dots,x_{j-1},-x_j,x_{j+1},\dots,x_{d-1})$ & $\Gamma^0\psi(-t,x)$ \\
$d=3\mod8$  & $\ii\Gamma^j\psi(t,x_1,\dots,x_{j-1},-x_j,x_{j+1},\dots,x_{d-1})$ & $\ii\Gamma^0\psi(-t,x)$ \\
$d=4\mod8$  & $\ii\Gamma^0\Gamma^1\cdots\hat{\Gamma}^j\cdots\Gamma^{d-1}\psi(t,x_1,\dots,x_{j-1},-x_j,x_{j+1},\dots,x_{d-1})$ & $\ii\Gamma^1\Gamma^2\cdots\Gamma^{d-1}\psi(-t,x)$ \\
\hline
\end{tabular}
\caption{The $\rR_j$, $\rT$ transformations on a Majorana fermion in $d$d for $d=0,1,2,3,4\mod8$. Here $\hat{\Gamma}^j$ means $\Gamma^j$ is omitted in the product. See \App{app:Majorana} for the explicit form of the Gamma matrices.  }
\label{tab:CPRT-trans-dd-Majorana}
\end{center}
\end{table}

Therefore, a single Majorana fermion can not have a mass for $d=0,1\mod8$. The Majorana mass term breaks the $\rR_j$ and T symmetry for $d=3\mod8$.
See \Table{table:CRT-break-Majorana}.

We summarize the R$_j$-T fractionalization for a Majorana fermion in \Table{tab:CPRT-group-dd-Majorana}. Note that the case $d=1$ is special because there is no R$_j$ for $d=1$.
\begin{table}[htp]
\begin{center}
\begin{tabular}{c|c|c}
\hline
Dimension & Group presentation & Group\\
\hline
$d\equiv 0\mod 8$ & $\rT^2=1$, $\rR_j^2=(\rR_j\rT)^2=(-1)^{\rF}$ & $\Z_4^{\rR_j\rF}\times\Z_2^{\rT}$ \\
\hline
$d=1$ &  $\rT^2=1$ & $\Z_2^{\rT}\times\Z_2^{\rF}$\\
\hline
$d\equiv 1\mod 8$ ($d>1$) &  $\rT^2=(\rR_j\rT)^2=1$, $\rR_j^2=(-1)^{\rF}$ & $\bD_8^{\rF ,\rR_j,\rT}$\\
\hline
$d\equiv 2\mod 8$ &  $\begin{array}{cc}\rR_j^2=(-1)^{\rF}, \rT^2=(\rR_j\rT)^2=1\\\text{or }\rR_j^2=\rT^2=(\rR_j\rT)^2=1
\end{array}$ & $\begin{array}{cc}\bD_8^{\rF,\rR_j ,\rT}\\\text{or }\Z_2^{\rR_j}\times\Z_2^{\rT}\times\Z_2^{\rF}
\end{array}$\\
\hline
$d\equiv 3\mod 8$ & $\rT^2=(-1)^{\rF}$, $\rR_j^2=(\rR_j\rT)^2=1$ &  $\bD_8^{\rF,\rT, \rR_j}$\\
\hline
$d\equiv 4\mod 8$ & $\rT^2=(-1)^{\rF}$, $\rR_j^2=(\rR_j\rT)^2=1$ &  $\bD_8^{\rF,\rT ,\rR_j}$\\
\hline
\end{tabular}
\caption{The structure of the R$_j$-T fractionalization for a Majorana fermion in $d$d for all $d=0,1,2,3,4\mod8$. 
Here $\bD_8^{\rF,x ,y}\equiv\langle x,y |x^4=y^2=1, yxy=x^{-1}, x^2=(-1)^{\rF}\rangle$.
For $d=2\mod8$, there are two choices because Majorana fermions can be massive or massless in $d$d. The first choice is for the massive case, and the second choice is for the massless case.}
\label{tab:CPRT-group-dd-Majorana}
\end{center}
\end{table} 

Let $G_{\rm M}(d\mod 8)(d>1)$ denote the R$_j$-T fractionalization for a Majorana fermion and $G_{{\rm M},i}(2\mod 8)$ denote the $i$-th choice of the R$_j$-T fractionalization for a Majorana fermion. We will use this notation in \Sec{sec:domain-wall}. We find that 
\begin{itemize}
    \item 
    $G_{\rm M,1}(2\mod8)=G_{\rm M}(1\mod8)$.
 \item 
$G_{\rm M}(3\mod 8)=G_{\rm M}(4\mod8)$.
\item 
$G_{\rm M}(d\mod8)$ is a subgroup of $G_{{\rm D},1}(d\mod8)$ for $d=0,1\mod8$.
\item 
$G_{\rm M}(d\mod8)$ is a subgroup of $G_{{\rm D},2}(d\mod8)$ for $d=3,4\mod8$.
\item 
$G_{\rm M,1}(2\mod8)$ is a subgroup of $G_{{\rm D},2}(2\mod8)$ and
$G_{\rm M,2}(2\mod8)$ is a subgroup of $G_{{\rm D},1}(2\mod8)$.

\end{itemize}

For Weyl fermions, the spacetime dimension $d$ has to be even because of Definition \ref{def-Weyl}. 
For $d=2k+2$, the Dirac fermion $\psi$ decomposes as $\psi=\left(\begin{array}{cc}\psi_L\\\psi_R\end{array}\right)$ where $\psi_L$ and $\psi_R$ are left and right Weyl fermions. Since $M_{\rR_j}=\Gamma^0\Gamma^1\cdots\hat{\Gamma}^j\cdots\Gamma^{2k+1}$ is not block diagonal in the chiral representation (see \App{app:Weyl}), 
\bea
\psi_L^{\rR_j}= (-1)^k\gamma^1\cdots\hat{\gamma}^j\cdots\gamma^{2k+1}\psi_R ( t,x_1,\dots, x_{j-1}, -x_j,x_{j+1},\dots,x_{d-1})
\eea
(see \eqref{eq:Gamma}),
the left Weyl fermion $\psi_L$ does not have the $\rR_j$ symmetry. 
$\psi_L$ has the C or T symmetry if and only if $M_{\rC}$ (or $M_{\rT}$) is block diagonal in the chiral representation (see \App{app:Weyl}). Note that the phase factors of C and T do not affect their orders, hence we ignore the phase factors of C and T in the following discussion.

For odd $k$, $d=0\mod4$, $M_{\rC}=\Gamma^2\Gamma^4\cdots\Gamma^{2k}$ is not block diagonal in the chiral representation (see \App{app:Weyl}), while $M_{\rT}=\Gamma^1\Gamma^3\cdots\Gamma^{2k+1}$ is block diagonal in the chiral representation (see \App{app:Weyl}), hence $\psi_L$ has the T symmetry but not the C symmetry. In this case, 
\bea
\psi_L^{\rT}=(-1)^{\frac{k+1}{2}}\gamma^1\gamma^3\cdots\gamma^{2k+1}\psi_L(-t,x)
\eea
(see \eqref{eq:Gamma}) and $\rT^2=(-1)^{\frac{k(k+1)}{2}}$.

For even $k$, $d=2\mod4$,
$M_{\rC}=\Gamma^0\Gamma^1\Gamma^3\cdots\Gamma^{2k+1}$ is block diagonal in the chiral representation (see \App{app:Weyl}), while $M_{\rT}=\Gamma^0\Gamma^2\Gamma^4\cdots\Gamma^{2k}$ is not block diagonal in the chiral representation (see \App{app:Weyl}), hence $\psi_L$ has the C symmetry but not the T symmetry. In this case, 
\bea
\psi_L^{\rC}=(-1)^{\frac{k}{2}+1}\gamma^1\gamma^3\cdots\gamma^{2k+1}\psi_L^*
\eea
(see \eqref{eq:Gamma}) and $\rC^2=(-1)^{\frac{k(k+1)}{2}}$.

We summarize the C or T fractionalization for a Weyl fermion in \Table{table:CT-group-dd-Weyl}.
\begin{table}[!h]
\hspace{0.mm}
    \begin{tabular}{|c| c| c|}
    \hline
Dimension    &  Group presentation &  Group \\
     \hline
$d=0\mod8$ &  $\rT^2=1$ & $\Z_2^{\rT}\times\Z_2^{\rF}$\\
\hline
$d=2\mod8$ &  $\rC^2=1$ & $\Z_2^{\rC}\times\Z_2^{\rF}$\\
     \hline
$d=4\mod8$ & $\rT^2=(-1)^{\rF}$ &$\Z_4^{\rT\rF}$\\
     \hline
$d=6\mod8$ & $\rC^2=(-1)^{\rF}$ &$\Z_4^{\rC\rF}$\\
     \hline
     \end{tabular}
\caption{The C or T fractionalization for a Weyl fermion in $d$d for all even $d$. }
\label{table:CT-group-dd-Weyl}
\end{table}

Let $G_{\rm W}(d\mod8)$ denote the C or T fractionalization for a Weyl fermion. We will use this notation in \Sec{sec:domain-wall}.
We find that the $G_{\rm W}(d\mod8)$ is a subgroup of $G_{{\rm D},i}(d\mod8)$ ($i=1,2$) for all even $d$.

For Majorana-Weyl fermions, since the Majorana-Weyl fermion is Majorana, it has the trivial C symmetry and $d=0,1,2,3,4\mod8$. Since the Majorana-Weyl fermion is also Weyl, $d$ is even. By the above discussion, Weyl fermion does not have the C symmetry unless $d = 2 \mod 4$. But Majorana-Weyl fermion has a trivial C symmetry, thus there is only a trivial C symmetry but no R$_j$ or T symmetries, and $d=2\mod8$.
Let $G_{\rm MW}(2\mod8)=\Z_2^{\rF}$ denote the trivial C fractionalization for a Majorana-Weyl fermion. We will use this notation in \Sec{sec:domain-wall}.

We summarize the symmetries of Dirac, Majorana, Weyl, and Majorana-Weyl fermions in even spacetime dimension $d$ in \Table{table:CPT}.

\begin{table}[!h]
\hspace{0.mm}
    \begin{tabular}{|c| ccc| c|}
    \hline
    &  C &  $\rR_j$ &  T & C$\rR_j$T \\
     \hline
Dirac & $\checkmark$  &   $\checkmark$ & $\checkmark$ & $\checkmark$ \\
\hline
Majorana & trivial & $\checkmark$  & $\checkmark$ & $\checkmark$  \\
     \hline
Weyl ($d=2 \mod 4$) & $\checkmark$  & \xmark  & \xmark & $\checkmark$ \\
     \hline
Weyl ($d=0 \mod 4$) &  \xmark  &  \xmark  & $\checkmark$ & $\checkmark$ \\
\hline
Majorana-Weyl  ($d=2 \mod 8$)  & trivial &  \xmark  & \xmark & $\checkmark$   \\
     \hline
     \end{tabular}
\caption{The symmetries of Dirac, Majorana, Weyl, and Majorana-Weyl fermions in even spacetime dimension $d$. Here $\checkmark$ means the symmetry is preserved.
Here trivial means that the symmetry acts trivially. 
The C acts on Majorana fermion $\psi$ as $\psi^{\rm C} =\psi$. 
Here \xmark \; means the symmetry is violated.}
\label{table:CPT}
\end{table}

\section{Mass term}\label{sec:mass}
In this section, we determine the maximal number of linearly independent Dirac and Majorana mass terms and discuss how the Dirac and Majorana mass terms break the symmetries C, R, or T.

\subsection{Dirac mass term}

In this subsection, we will determine the maximal number of linearly independent Dirac mass terms. We will use Definition \ref{def-Dirac} for Dirac fermions, namely a Dirac fermion is a map from the Minkowski spacetime $\R^{d-1,1}$ to the complex spin representation $S$ of the spin group $\Spin_{d-1,1}$. Since the complex representations of $\Spin_{d-1,1}$ are the same as the complex representations of the spin group $\Spin(d)$, we will regard $S$ as the complex spin representation of $\Spin(d)$. The main reference for this subsection is \cite{Brocker}.

\begin{proposition}(\cite[Proposition VI.6.1]{Brocker})
The group $\Spin(2n)$ has as fundamental system of weights
\bea
\lambda_{\nu}&=&e_1+\cdots+e_{\nu},\;\;\;1\le \nu \le n-2,\nn\\
\partial_n^{\pm}&=&\frac{1}{2}(e_1+\cdots +e_{n-1}\pm e_n),
\eea
where $e_{\nu}$ is the $\nu$-th standard unit vector of $\R^n$.
The group $\Spin(2n+1)$ has as fundamental system of weights
\bea
&&\lambda_{\nu},1\le \nu\le n-1,\nn\\
&&\partial_n=\frac{1}{2}(e_1+\cdots+e_n).
\eea
\end{proposition}
These fundamental weights are dominant for fundamental representations
\bea
\wedge^1,\wedge^2,\dots, \wedge^{n-2},S_+^n,S_-^n\;\;\;\text{of }\Spin(2n),\;\;\;\text{and}\nn\\
\wedge^1,\wedge^2,\dots, \wedge^{n-1},S^n\;\;\;\text{of }\Spin(2n+1).
\eea
The representations $\wedge^{\nu}$ arise from the representations of $\SO(d)$ with identical notation via the projection $\rho:\Spin(d)\to\SO(d)$. In other words, $\wedge^{\nu}\C^d$ is the $\nu$-th exterior power of the $d$-dimensional complex vector space $\C^d$, we also shorthand the notation $\wedge^{\nu}\C^d$  as  $\wedge^{\nu}$. The representations $S$, $S_+$, and $S_-$ are the half-spin representations of $\Spin(d)$.

The irreducible representations of $\Spin(2n+1)$ with dominant weights $2\partial$ is denoted by $\wedge^n$.
The irreducible representations of $\Spin(2n)$ with dominant weights $2\partial^+$, $2\partial^-$, and $\partial^++\partial^-$ are denoted by $\wedge_+^n$, $\wedge_-^n$, and $\wedge^{n-1}$.

The Dirac mass term can be defined as a homomorphism of representations from $S\otimes S$ to $1^{\C}$ (the trivial complex representation) where $S$ is the complex spin representation of $\Spin(d-1,1)$ and $d$ is the spacetime dimension. If $d$ is odd, $\Hom(S\otimes S,1^{\C})=\C$ because $S\otimes S$ contains a trivial summand, so there is only one linearly independent Dirac mass term. If $d$ is even, $\Hom(S\otimes S,1^{\C})=\C^2$ because $S\otimes S$ contains two trivial summands, so there are two linearly independent Dirac mass terms. 

The following results are consequences of \cite[Theorem VI.6.2]{Brocker}.
If $S$ is a complex spin representation of $\Spin(d)$ and $d=2n+1$ is odd, then
\bea
S\otimes S=1\oplus\wedge^1\oplus\cdots\oplus\wedge^{n-1}\oplus\wedge^n.
\eea
Since $S\otimes S$ contains only one trivial summand, there is only one linearly independent Dirac mass term.    

If $S$ is a complex spin representation of $\Spin(d)$ and $d=2n$ is even, then $S=S_+\oplus S_-$ and
\bea\label{eq:tensor-square-even}
S_+\otimes S_+&=&\wedge^n_+\oplus\wedge^{n-2}\oplus \wedge^{n-4}\oplus\cdots,\nn\\
S_+\otimes S_-&=&\wedge^{n-1}\oplus\wedge^{n-3}\oplus \wedge^{n-5}\oplus\cdots,\nn\\
S_-\otimes S_-&=&\wedge^n_-\oplus\wedge^{n-2}\oplus \wedge^{n-4}\oplus\cdots.
\eea 
The sums end in $\wedge^4\oplus\wedge^2\oplus1$ or $\wedge^3\oplus\wedge^1$.
If $n$ is even, each of $S_+\otimes S_+$ and $S_-\otimes S_-$ contains one trivial summand. If $n$ is odd, each of $S_+\otimes S_-$ and $S_-\otimes S_+$ contains one trivial summand.
Since $S\otimes S$ always contains two trivial summands, there are two linearly independent Dirac mass terms.

We can explicitly construct the Dirac mass terms.
For odd $d$, the only one linearly independent Dirac mass term is $\psi^{\dagger}\Gamma^0\psi$.
For even $d$, the two linearly independent Dirac mass terms are $\psi^{\dagger}\Gamma^0\psi$ and $\psi^{\dagger}\diag(-I,I)\psi$ where $\diag(-I,I)=\ii^{\frac{d}{2}-1}\Gamma^0\Gamma^1\cdots\Gamma^{d-1}$.

In this article, we will only consider the conventional Dirac mass term $\psi^{\dagger}\Gamma^0\psi$.

\subsection{Majorana mass term}
In this subsection, we determine the maximal number of linearly independent Majorana mass terms.

The Majorana mass term can be defined as a homomorphism of representations from $S_{\rC}\otimes S_{\rC}$ to $1^{\R}$ (the trivial real representation) where $S_{\rC}$ is the charge-conjugation-invariant part of the complex spin representation $S$ of $\Spin(d-1,1)$ and $d$ is the spacetime dimension.

By Definition \ref{def-Majorana}, a Majorana fermion is a single Dirac fermion with the trivial charge conjugation, namely a Dirac fermion $\psi$ satisfying the Majorana condition $\psi=M_{\rC}\psi^*$. If a Dirac mass term is real under Majorana condition, then it is a Majorana mass term.

If $M_{\rC}^{\mathsf{T}}=\eta M_{\rC}$, then we have
\bea
(\psi^{\dagger}\Gamma^0\psi)^*&=&-\psi^{\mathsf{T}}\Gamma^0\psi^*\nn\\
&=&-\psi^{\dagger}M_{\rC}^{\mathsf{T}}\Gamma^0M_{\rC}^{-1}\psi\nn\\
&=&\eta\psi^{\dagger}\Gamma^0\psi.
\eea
In the first equality, we have used the fact that $(\psi\psi')^*=-\psi^*\psi'^*$ for fermions $\psi$ and $\psi'$.
In the second equality, we have used the Majorana condition $\psi=M_{\rC}\psi^*$.
In the third equality, we have used $M_{\rC}^{\mathsf{T}}=\eta M_{\rC}$ and $M_{\rC}\Gamma^0M_{\rC}^{-1}=-\Gamma^0$. Therefore, the Dirac mass term $\psi^{\dagger}\Gamma^0\psi$ is real if and only if $M_{\rC}^{\mathsf{T}}=M_{\rC}$.

If $d$ is even, there is another Dirac mass term $\psi^{\dagger}\diag(-I,I)\psi$.
We have
\bea
(\psi^{\dagger}\diag(-I,I)\psi)^*&=&(\psi^{\dagger}\ii^{\frac{d}{2}-1}\Gamma^0\Gamma^1\cdots\Gamma^{d-1}\psi)^*\nn\\
&=&-\psi^{\mathsf{T}}(-\ii)^{\frac{d}{2}-1}\Gamma^{0*}\Gamma^{1*}\cdot\Gamma^{d-1*}\psi^*\nn\\
&=&-\psi^{\dagger}M_{\rC}^{\mathsf{T}}(-\ii)^{\frac{d}{2}-1}\Gamma^{0*}\Gamma^{1*}\cdot\Gamma^{d-1*}M_{\rC}^{-1}\psi\nn\\
&=&-\psi^{\dagger}\eta M_{\rC}(-\ii)^{\frac{d}{2}-1}\Gamma^{0*}\Gamma^{1*}\cdot\Gamma^{d-1*}M_{\rC}^{-1}\psi\nn\\
&=&-(-\ii)^{\frac{d}{2}-1}\eta \psi^{\dagger}\Gamma^0\Gamma^1\cdots\Gamma^{d-1}\psi\nn\\
&=&(-1)^{\frac{d}{2}}\eta \psi^{\dagger}\diag(-I,I)\psi
\eea
In the first and last equality, we have used $\diag(-I,I)=\ii^{\frac{d}{2}-1}\Gamma^0\Gamma^1\cdots\Gamma^{d-1}$.
In the second equality, we have used the fact that $(\psi\psi')^*=-\psi^*\psi'^*$ for fermions $\psi$ and $\psi'$.
In the third equality, we have used the Majorana condition $\psi=M_{\rC}\psi^*$.
In the forth equality, we have used $M_{\rC}^{\mathsf{T}}=\eta M_{\rC}$.
In the fifth equality, we have used 
$M_{\rC}\Gamma^{\mu*}M_{\rC}^{-1}=-\Gamma^{\mu}$.
Therefore, the Dirac mass term $\psi^{\dagger}\diag(-I,I)\psi$ is real if and only if $M_{\rC}^{\mathsf{T}}=(-1)^{\frac{d}{2}}M_{\rC}$.

For $d=0,1\mod 8$, by the discussion in \Sec{sec:Majorana} (see also \cite{Stone:2020vva2009.00518}), there are no Majorana mass terms.

For $d=2\mod8$ ($d=2k+2$, $k=0\mod4$), by \Table{tab:CPRT-trans-dd}, $M_{\rC}=\Gamma^0\Gamma^1\Gamma^3\cdots\Gamma^{2k+1}$ (we omit the phase factor because it doesn't matter here). We have
\bea
M_{\rC}^{\mathsf{T}}&=&(-1)^{k+1}\Gamma^{2k+1}\cdots\Gamma^3\Gamma^1\Gamma^0\nn\\
&=&(-1)^{\frac{k(k+1)}{2}}M_{\rC}\nn\\
&=&M_{\rC}.
\eea
Therefore, there is only one linearly independent Majorana mass term $\psi^{\dagger}\Gamma^0\psi$ for $d=2\mod8$.

For $d=3\mod8$ ($d=2k+3$, $k=0\mod4$), by \Table{tab:CPRT-trans-dd}, $M_{\rC}=\Gamma^0\Gamma^1\Gamma^3\cdots\Gamma^{2k+1}$ (we omit the phase factor because it doesn't matter here). We have
\bea
M_{\rC}^{\mathsf{T}}&=&(-1)^{k+1}\Gamma^{2k+1}\cdots\Gamma^3\Gamma^1\Gamma^0\nn\\
&=&(-1)^{\frac{k(k+1)}{2}}M_{\rC}\nn\\
&=&M_{\rC}.
\eea
Therefore, there is only one linearly independent Majorana mass term $\psi^{\dagger}\Gamma^0\psi$ for $d=3\mod8$.

For $d=4\mod8$ ($d=2k+2$, $k=1\mod4$), by \Table{tab:CPRT-trans-dd}, $M_{\rC}=\Gamma^2\Gamma^4\cdots\Gamma^{2k}$ (we omit the phase factor because it doesn't matter here). We have
\bea
M_{\rC}^{\mathsf{T}}&=&\Gamma^{2k}\cdots\Gamma^4\Gamma^2\nn\\
&=&(-1)^{\frac{k(k-1)}{2}}M_{\rC}\nn\\
&=&M_{\rC}.
\eea
Therefore, there are two linearly independent Majorana mass terms $\psi^{\dagger}\Gamma^0\psi$ and $\psi^{\dagger}\diag(-I,I)\psi$ for $d=4\mod8$.

In this article, we will only consider the conventional Majorana mass term $\psi^{\dagger}\Gamma^0\psi$.

\subsection{How the conventional Dirac and Majorana mass terms break the symmetries C, R, or T}

Based on the discussion in \Sec{sec:Dirac}, we find that the conventional {Dirac} mass term breaks the C symmetry for $d=2k+3$ ($k$ odd), it breaks the T symmetry for $d=2k+3$ ($k$ even), and it breaks the R$_j$ symmetry for $d=2k+3$ and $j=1,2,\dots,2k+2$, see \Table{table:CRT-break-Dirac}.

\begin{table}[!h]
\hspace{0.mm}
    \begin{tabular}{|c| ccc| }
    \hline
  Dirac mass term &  C &   R$_j$ &  T  \\
   \hline
$d=2k+2$ ($k \in \Z_{\geq 0}$) or $d= 0,2 \mod 4$   & $\checkmark$  &  $\checkmark$  &  $\checkmark$\\
     \hline
$d=2k+3$ ($k$ odd) or $d= 1 \mod 4$ & \xmark  &   \xmark & $\checkmark$ \\
\hline
$d=2k+3$ ($k$ even) or $d= 3 \mod 4$ & $\checkmark$ & \xmark  & \xmark  \\
     \hline

     \end{tabular}
\caption{The conventional Dirac mass term (for a single particle field) breaks the symmetries C, R$_j$, and T or not. Here $\checkmark$ means the conventional Dirac mass term preserves the symmetry.
Here \xmark\; means the conventional Dirac mass term breaks the symmetry.
}
\label{table:CRT-break-Dirac}
\end{table}

Based on the discussion in \Sec{sec:Majorana}, we find that a single Majorana fermion can not have a mass for $d=0,1\mod8$. The conventional Majorana mass term breaks the $\rR_j$ and T symmetry for $d=3\mod8$.
See \Table{table:CRT-break-Majorana}.
\begin{table}[!h]
\hspace{0.mm}
    \begin{tabular}{|c| cc| }
    \hline
  Majorana mass term &   R$_j$ &  T  \\
  \hline
$d=0,1\mod8$ (massless)    &    &  \\
   \hline
$d=2,4\mod8$     &  $\checkmark$  &  $\checkmark$\\
     \hline
$d=3\mod8$ & \xmark  &   \xmark  \\
\hline
     \end{tabular}
\caption{The conventional Majorana mass term (for a single particle field)
breaks the symmetries R$_j$ and T or not. Here $\checkmark$ means the conventional Majorana mass term preserves the symmetry.
Here \xmark\; means the conventional Majorana mass term breaks the symmetry. 
\cred{For a single particle Majorana fermion,
the dimension $d=0,1\mod8$ forbids Majorana mass thus Majorana fermion
is massless.
}
}
\label{table:CRT-break-Majorana}
\end{table}

Dirac mass and Majorana mass break the same kind of symmetry in the same pattern. Majorana has trivial C symmetry, but Dirac has nontrivial C symmetry. Dirac mass can break C symmetry in $d= 1 \mod 4$,
thus the corresponding Majorana mass breaks C symmetry which contradicts the premise Majorana condition --- this implies that Majorana fermion must be massless in $d= 1 \mod 4$.
Dirac mass exists in all dimensions.
Majorana fermion exists in $d=0,1,2,3,4 \mod 8$, 
while Majorana mass exists in $d=2,3,4 \mod 8$.

\section{Domain wall dimensional reduction}\label{sec:domain-wall}

In this section, we study the domain wall dimensional reduction of the C-R-T fractionalization and R-T fractionalization. We also study the domain wall dimensional reduction of fermions and arrive at a consistent story.
Our main result is the following theorem.
\begin{theorem}\label{thm-DW}
    If a $d$d fermion $\psi^{d}$ is in the $d$d bulk and a $(d-1)$d fermion $\psi^{d-1}$ is on the $(d-1)$d domain wall, then the C-R-T or R-T fractionalization of $\psi^d$ reduces to the corresponding fractionalization of $\psi^{d-1}$ under the domain wall dimensional reduction. In particular, for even $d$, the C-R-T or R-T fractionalization of $\psi^d$ is isomorphic to that of $\psi^{d-1}$.
\end{theorem}
The proof of Theorem \ref{thm-DW} consists of the following steps:
\begin{itemize}
    \item 
    First, for a given $d$d fermion $\psi^{d}$ in the $d$d bulk, we determine the corresponding $(d-1)$d fermion $\psi^{d-1}$ on the $(d-1)$d domain wall by the degree-of-freedom argument.
    \item 
    Next, for a given $d$d fermion $\psi^{d}$ in the $d$d bulk, we determine how C, R, and T change under the domain wall dimensional reduction.
    \item 
    Finally, we find that the reduced C, R, and T generate the corresponding fractionalization of $\psi^{d-1}$.
\end{itemize} 

The mass profile of a fermion in $d$d is shown in \Fig{fig:mass-profile} where the domain wall that separates the two domains $m>0$ and $m<0$ is $m=0,x_{d-1}=0$.
\begin{figure}[!h]
\begin{center}
\begin{tikzpicture}[scale=0.8]
\draw[->] (-2, 0) -- (2, 0) node[right] {$x_{d-1}$};
  \draw[->] (0, -2) -- (0, 2) node[above] {$m$};
  
  \draw[scale=1, domain=-1.5:1.5, smooth, variable=\y]  plot ({0.15*tan(\y r)}, {\y}) node[right] {$d$d};
\node[below] at (0.8,0) {$(d-1)$d};
\node[left] at (0,0.2) {$m=0,x_{d-1}=0$};

\end{tikzpicture}
\end{center}
\caption{Mass profile.
\cred{The $d$d fermion obeys the Lagrangian $\bar \psi^{d} (\ii \Gamma^\mu \prt_\mu - m(x)) \psi^{d}$,
then there is an effective massless $(d-1)$d domain wall fermion theory $\psi^{d-1}$ at $x_{d-1}=0$, 
with its time and spatial coordinates $(t,x_1,x_2,\dots,x_{d-2})$,
which is obtained by the projection ${\rm P}_{\pm}= \frac{1 \pm \ii \Gamma^{d-1}}{2}$,
$
\psi^{d-1}_{\pm} = {\rm P}_{\pm} \psi^{d}= \frac{1 \pm \ii \Gamma^{d-1}}{2} \psi^{d}
$ 
where ${\rm P}_{\pm}^2={\rm P}_{\pm}$, ${\rm P}_{+}{\rm P}_{-}={\rm P}_{-}{\rm P}_{+}=0$,
and $\ii\Gamma^{d-1}{\rm P}_{\pm}=\pm {\rm P}_{\pm}$.
}For odd $d$, $\Gamma^{d-1}=\ii\diag(-I,I)$ in the chiral representation (see \App{app:Weyl}). The Gamma matrices $\Gamma^{\mu}$ in $(d-1)$d are the same as those in $d$d for odd $d$ and $\mu=0,1,\dots,d-2$. Therefore, for odd $d$, the projection maps a Dirac (Majorana) fermion in $d$d to a Weyl (Majorana-Weyl) fermion in $(d-1)$d. For even $d$, the projection maps a Dirac (Majorana) fermion in $d$d to a Dirac (Majorana) fermion in $(d-1)$d. See also \Fig{fig:domain-wall-fermion}.}
\label{fig:mass-profile}
\end{figure}
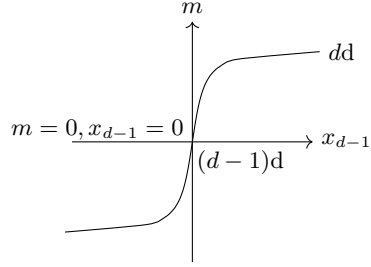

To prove Thorem \ref{thm-DW}, we first answer the following question.
If a $d$d fermion is in the bulk with the mass profile shown in \Fig{fig:mass-profile}, then what is the fermion on $(d-1)$d domain wall?

\begin{figure}[!h]
\begin{center}
\begin{tikzpicture}[scale=0.8]

  \draw[scale=1, domain=-1.5:1.5, smooth, variable=\y]  plot ({0.15*tan(\y r)}, {\y}); 
   
  \draw[scale=1, domain=-1.5:1.5, smooth, variable=\y]  plot ({4-0.15*tan(\y r)}, {\y});
\draw[dashed] (0,-2)--(0,2);
\draw[dashed] (4,-2)--(4,2);
\draw[dashed,->] (2,0)--(0,0);
\draw[dashed,->] (2,0)--(4,0);
  \node[below] at (2,0) {$x$};
\end{tikzpicture}
\end{center}
\caption{The distance between two domain walls is $x$, we can take $x \to0$, which becomes massive $-m$ bulk fermion,
and $x \to\infty$, which becomes massive $+m$ bulk fermion,
but degree of freedom matches. So 2 domain wall degree of freedom in $(d-1)$-dim = bulk degree of freedom in $d$-dim.}
\label{fig:DOF}
\end{figure}

By \Fig{fig:DOF}, the degree of freedom of the fermion on $(d-1)$d domain wall becomes half of the degree of freedom of the fermion in the $d$d bulk, we find that 
 Dirac fermions become Dirac fermions and Majorana fermions become Majorana fermions under the domain wall dimensional reduction for even $d$, and Dirac fermions become Weyl fermions and Majorana fermions become Majorana-Weyl fermions under the domain wall dimensional reduction for odd $d$. The domain wall dimensional reduction of Majorana fermions only exists for $d=2,3,4\mod8$ because the fermion in the bulk has to be massive and a massive Majorana fermion only exists for $d=2,3,4\mod8$.
We summarize the result in low dimensions in \Fig{fig:domain-wall-fermion}. 
This is consistent with the domain wall dimensional reduction of the C-R-T fractionalization and R-T fractionalization, see \eqref{eq:reduction-odd} and \eqref{eq:reduction-even}.
\begin{figure}[!h]
\begin{center}
\begin{tikzpicture}
\matrix[matrix of math nodes,inner sep=1pt,row sep=1em,column sep=1em] (M)
{
d & \text{Majorana-Weyl} & \text{Majorana} & \text{Weyl} & \text{Dirac}\\
1 & & 1^{\R} & & 1^{\C}\\
2 & 1_L^{\R} & 1_L^{\R}\oplus 1_R^{\R} & 1_L^{\C} & 1_L^{\C}\oplus 1_R^{\C}\\
3 & & 2^{\R} &  & 2^{\C}\\
4 & & 4^{\R} & 2_L^{\C} &  2_L^{\C}\oplus 2_R^{\C}\\
5 & & & & 4^{\C} \\
6 & & & 4_L^{\C} & 4_L^{\C}\oplus 4_R^{\C}\\
7 & & & & 8^{\C} \\
8 & &  16^{\R} & 8_L^{\C} & 8_L^{\C}\oplus 8_R^{\C}\\
9 & & 16^{\R} &  & 16^{\C}\\
10 &  16_L^{\R} & 16_L^{\R}\oplus 16_R^{\R} & 16_L^{\C} & 16_L^{\C}\oplus 16_R^{\C}\\
11 & & 32^{\R} &  & 32^{\C}\\
12 & & 64^{\R} & 32_L^{\C} & 32_L^{\C}\oplus 32_R^{\C}\\
}
;
\draw[->] (M-3-3.north) -- (M-2-3.south);
\draw[->] (M-3-5.north) -- (M-2-5.south);
\draw[->] (M-4-3.north west) -- (M-3-2.south east);
\draw[->] (M-4-5.north west) -- (M-3-4.south east);
\draw[->] (M-5-3.north) -- (M-4-3.south);
\draw[->] (M-5-5.north) -- (M-4-5.south);
\draw[->] (M-6-5.north west) -- (M-5-4.south east);
\draw[->] (M-7-5.north) -- (M-6-5.south);
\draw[->] (M-8-5.north west) -- (M-7-4.south east);
\draw[->] (M-9-5.north) -- (M-8-5.south);
\draw[->] (M-10-5.north west) -- (M-9-4.south east);
\draw[->] (M-11-3.north) -- (M-10-3.south);
\draw[->] (M-11-5.north) -- (M-10-5.south);
\draw[->] (M-12-3.north west) -- (M-11-2.south east);
\draw[->] (M-12-5.north west) -- (M-11-4.south east);
\draw[->] (M-13-3.north) -- (M-12-3.south);
\draw[->] (M-13-5.north) -- (M-12-5.south);

\end{tikzpicture}
\end{center}
\caption{Domain wall dimensional reduction of fermions. The target of the arrow is the domain wall of the source of the arrow. The number indicates the dimension of the representation of the fermion, the upper index indicates that the representation is real or complex, and the lower index indicates the chirality of the fermion. }
\label{fig:domain-wall-fermion}
\end{figure}

For comparison with \Fig{fig:domain-wall-fermion}, we summarize the domain wall dimensional reduction of fermions and their symmetry breaking in \Table{table:domain-wall-fermion} where we omit the arrows that indicate the domain wall dimensional reduction. \cred{We note that the domain wall reduction in \Fig{fig:domain-wall-fermion} is helpful to derive the relation between $N_f=3$ families of 16 Weyl fermions of the Standard Model in 4d to the 48 
Majorana-Weyl fermions in 2d \cite{Wang:2023tbjFamily2312.14928}.}

\begin{table}
\begin{tabular}{ccccc}
   $d$ & Majorana-Weyl & Majorana & Weyl & Dirac\\
   1 & &
\begin{tabular}{c|cc}
 & C & T\\\hline
0 & 1  & $\checkmark$
\end{tabular}
     &    
& \begin{tabular}{c|cc}
 & C & T\\\hline
0 & $\checkmark$  & $\checkmark$\\
$m$ & \xmark & $\checkmark$
\end{tabular}\\ 
2 & \begin{tabular}{c|ccc}
 & C & R & T\\\hline
0 & 1  & \xmark & \xmark 
\end{tabular}
& \begin{tabular}{c|ccc}
 & C & R & T\\\hline
0 & 1  & $\checkmark$ & $\checkmark$\\
$m$ & 1 & $\checkmark$ & $\checkmark$
\end{tabular}
& \begin{tabular}{c|ccc}
 & C & R & T\\\hline
0 & $\checkmark$  & \xmark & \xmark
\end{tabular}
& \begin{tabular}{c|ccc}
 & C & R & T\\\hline
0 & $\checkmark$  & $\checkmark$ & $\checkmark$\\
$m$ & $\checkmark$ & $\checkmark$ & $\checkmark$
\end{tabular}\\
3 & &  \begin{tabular}{c|ccc}
 & C & R & T\\\hline
0 & 1  & $\checkmark$ & $\checkmark$\\
$m$ & 1 & \xmark & \xmark
\end{tabular}
& & \begin{tabular}{c|ccc}
 & C & R & T\\\hline
0 & $\checkmark$  & $\checkmark$ & $\checkmark$\\
$m$ & $\checkmark$ & \xmark & \xmark
\end{tabular}\\
4 & & \begin{tabular}{c|ccc}
 & C & R & T\\\hline
0 & 1  & $\checkmark$ & $\checkmark$\\
$m$ & 1 & $\checkmark$ & $\checkmark$
\end{tabular}
& \begin{tabular}{c|ccc}
 & C & R & T\\\hline
0 & \xmark  & \xmark & $\checkmark$
\end{tabular}
& \begin{tabular}{c|ccc}
 & C & R & T\\\hline
0 & $\checkmark$  & $\checkmark$ & $\checkmark$\\
$m$ & $\checkmark$ & $\checkmark$ & $\checkmark$
\end{tabular}\\
5 & &  & & \begin{tabular}{c|ccc}
 & C & R & T\\\hline
0 & $\checkmark$  & $\checkmark$ & $\checkmark$\\
$m$ & \xmark & \xmark & $\checkmark$
\end{tabular}\\
6 & 
& 
& \begin{tabular}{c|ccc}
 & C & R & T\\\hline
0 & $\checkmark$  & \xmark & \xmark
\end{tabular}
& \begin{tabular}{c|ccc}
 & C & R & T\\\hline
0 & $\checkmark$  & $\checkmark$ & $\checkmark$\\
$m$ & $\checkmark$ & $\checkmark$ & $\checkmark$
\end{tabular}\\
7 & &  
& & \begin{tabular}{c|ccc}
 & C & R & T\\\hline
0 & $\checkmark$  & $\checkmark$ & $\checkmark$\\
$m$ & $\checkmark$ & \xmark & \xmark
\end{tabular}\\
8 & & \begin{tabular}{c|ccc}
 & C & R & T\\\hline
0 & 1  & $\checkmark$ & $\checkmark$
\end{tabular}
& \begin{tabular}{c|ccc}
 & C & R & T\\\hline
0 & \xmark  & \xmark & $\checkmark$
\end{tabular}
& \begin{tabular}{c|ccc}
 & C & R & T\\\hline
0 & $\checkmark$  & $\checkmark$ & $\checkmark$\\
$m$ & $\checkmark$ & $\checkmark$ & $\checkmark$
\end{tabular}\\
   9 & &
\begin{tabular}{c|ccc}
 & C & R & T\\\hline
0 & 1 & $\checkmark$ & $\checkmark$
\end{tabular}
     &    
& \begin{tabular}{c|ccc}
 & C & R & T\\\hline
0 & $\checkmark$  & $\checkmark$ & $\checkmark$\\
$m$ & \xmark &\xmark &  $\checkmark$
\end{tabular}\\ 
10 & \begin{tabular}{c|ccc}
 & C & R & T\\\hline
0 & 1  & \xmark & \xmark 
\end{tabular}
& \begin{tabular}{c|ccc}
 & C & R & T\\\hline
0 & 1  & $\checkmark$ & $\checkmark$\\
$m$ & 1 & $\checkmark$ & $\checkmark$
\end{tabular}
& \begin{tabular}{c|ccc}
 & C & R & T\\\hline
0 & $\checkmark$  & \xmark & \xmark
\end{tabular}
& \begin{tabular}{c|ccc}
 & C & R & T\\\hline
0 & $\checkmark$  & $\checkmark$ & $\checkmark$\\
$m$ & $\checkmark$ & $\checkmark$ & $\checkmark$
\end{tabular}\\
11 & &  \begin{tabular}{c|ccc}
 & C & R & T\\\hline
0 & 1  & $\checkmark$ & $\checkmark$\\
$m$ & 1 & \xmark & \xmark
\end{tabular}
& & \begin{tabular}{c|ccc}
 & C & R & T\\\hline
0 & $\checkmark$  & $\checkmark$ & $\checkmark$\\
$m$ & $\checkmark$ & \xmark & \xmark
\end{tabular}\\
12 & & \begin{tabular}{c|ccc}
 & C & R & T\\\hline
0 & 1  & $\checkmark$ & $\checkmark$\\
$m$ & 1 & $\checkmark$ & $\checkmark$
\end{tabular}
& \begin{tabular}{c|ccc}
 & C & R & T\\\hline
0 & \xmark  & \xmark & $\checkmark$
\end{tabular}
& \begin{tabular}{c|ccc}
 & C & R & T\\\hline
0 & $\checkmark$  & $\checkmark$ & $\checkmark$\\
$m$ & $\checkmark$ & $\checkmark$ & $\checkmark$
\end{tabular}
\end{tabular}
\caption{Domain wall dimensional reduction of fermions and their symmetry breaking. Here $\checkmark$ means the symmetry is preserved, \xmark\; means the symmetry is broken, $0$ means massless, and $m$ means the Dirac or Majorana mass implemented on the Dirac or Majorana fermion respectively. Here 1 in C symmetry means C becomes the identity operator.}
\label{table:domain-wall-fermion}
\end{table}

\begin{figure}
          \begin{center}
\begin{tikzpicture}[scale=0.6]

  \draw[scale=0.5, domain=-1.5:1.5, smooth, variable=\y]  plot ({0.5*tan(\y r)}, {\y}); 
\draw[scale=0.5, domain=-7.78:-4.78, smooth, variable=\y]  plot ({-0.5*tan(\y r)}, {\y});
  \draw[scale=0.5, domain=-14.07:-11.07, smooth, variable=\y]  plot ({0.5*tan(\y r)}, {\y});
\draw[dashed] (0,-7)--(0,1);
\node[above] at (0,1) {$x_{d-1}=0$};
\draw[->] (-1,-1)--(-1,-2);
\draw[->] (1,0.5)--(1,-3.5);
\draw[->] (-1,-3) -- (1,-5);
\draw[->] (1,-4.5) -- (-1,-6.5);
\node[left] at (-1,-1.5) {$\text{S}_b$};
\node[right] at (1,-1.5) {$\text{S}_b$};
\node[left] at (0,-4.75) {$\rC\rR_{d-1}\rT$};
\end{tikzpicture}

\end{center}
            \caption{Domain wall dimensional reduction. Suppose $\text{S}_b$ is (spontaneously or explicitly) broken by mass, $\text{S}_b$ switches the two degenerate ground states, and assume that $\text{S}_b$ does not flip the spatial coordinate $x_{d-1}$ (which excludes $\rR_{d-1}$, but $\text{S}_b$ can be $\rC$, $\rR_i$ for $i\ne d-1$, and $\rT$, or other internal symmetries), 
            we need to combine it with the space-orientation-reversing symmetry $\rC\rR_{d-1}\rT$ to obtain a new symmetry on the domain wall 
            \cite{HasonKomargodskiThorngren1910.14039,1910.14046, Wang:2019obe1910.14664}.
            \cred{However, some of the domain wall 
            symmetries may not be directly induced from $\text{S}_b\cdot (\rC \rR_{d-1} \rT)$, 
             the domain wall symmetries may be induced from the $\text{S}_b= \rR_{d-1}$ itself or other symmetries.}  
            }
           \label{fig:domain-wall}
        \end{figure}

Next, for a given $d$d fermion $\psi^{d}$ in the $d$d bulk, we determine how C, R, and T change under the domain wall dimensional reduction. The cases are different for odd $d$ and even $d$. We will follow the principle in \cite{HasonKomargodskiThorngren1910.14039,1910.14046, Wang:2019obe1910.14664}.

\cred{We assume that the fermion parity $(-1)^{\rF}$ is not spontaneously broken, if the C, R, and T symmetries are spontaneously broken $\Z_4$ symmetries (they are $\Z_2$ or $\Z_4$ symmetries in our assumption, see Lemma \ref{order}), then there are still only two degenerate ground states.} We consider the mass profile (see \Fig{fig:mass-profile}) which can be spontaneously or explicitly generated, and see which symmetry is broken by mass.

If the spacetime dimension $d$ is odd, the conventional 
Dirac and Majorana mass terms break part of the C, R, or T symmetries, 
then we will denote the broken symmetry generator S$_b$ as the generator of a part of the broken C, R, or T symmetries broken by the mass term.\footnote{Note 
that the broken symmetry may not form a group. But we consider the broken symmetry generated by S$_b$ that forms 
a cyclic group.}
See \Fig{fig:domain-wall}.

We will denote
$\text{S}_b$ ($b$ for broken) as the broken symmetry \jw{(in the bulk)}, 
and $\text{S}_p$ ($p$ for preserved) as the preserved symmetry \jw{(both in the bulk and on the domain wall)}. 
If there are exactly two broken symmetries $\text{S}_{b_1}$ and $\text{S}_{b_2}$ among the C, R, or T symmetries, then by the CRT theorem, $\text{S}_{b_1} \cdot \text{S}_{b_2}$ is preserved, namely
\bea
 \text{S}_{b_1} \cdot \text{S}_{b_2}=\text{S}_p.
\eea

First, we consider Dirac fermions.
By \Table{table:CRT-break-Dirac}, 
the Dirac mass term breaks part of the C, R, or T symmetries for $d=1\mod4$ and for $d=3\mod4$. \\
$\bullet$ For $d=1\mod4$ Dirac fermion, 
the Dirac mass term breaks the C and R symmetries,
we consider $\text{S}_b=\rC$ or $\rR$.\\
$\bullet$ For $d=3\mod4$ Dirac fermion, 
the Dirac mass term breaks the T and R symmetries,
we consider $\text{S}_b=\rT$ or $\rR$.\\

Next, we consider Majorana fermions.\\
$\bullet$ By \Table{table:CRT-break-Majorana}, for $d=3\mod8$ Majorana fermion, 
the Majorana mass term breaks the T and R symmetries,
we consider $\text{S}_b=\rT$ or $\rR$.\\

For either Dirac or Majorana fermion case, the $\text{S}_b$ can be unitary or anti-unitary:

If $\text{S}_b=\rC$ is unitary, then it becomes anti-unitary $\rT'=\text{S}_b\cdot\rC\rR_{d-1}\rT=\rR_{d-1}\rT$ up to $(-1)^{\rF}$ on the domain wall.

.
If $\text{S}_b=\rT$ is anti-unitary, then it becomes unitary $\rC'=\text{S}_b\cdot\rC\rR_{d-1}\rT=\rC\rR_{d-1}$ up to $(-1)^{\rF}$ on the domain wall.

If $\text{S}_b=\rR_{d-1}$ is unitary, then it switches the two degenerate ground states and flips the spatial coordinate $x_{d-1}$, hence it preserves the mass profile, so it becomes an internal unitary symmetry $X$ on the domain wall.
By \Table{tab:CPRT-trans-dd}, the matrix of $\rR_{d-1}$ is $\eta_{\rR_{d-1}}\Gamma^{d-1}=\eta_{\rR_{d-1}}\ii\diag(-I,I)$ in the chiral representation where $I$ is an identity matrix (see \App{app:Weyl}), the matrix of $X$ is $\mp\eta_{\rR_{d-1}}\ii I$ 
\cred{(which generates a unitary $\Z_{4,X}$ symmetry if $\eta_{\rR_{d-1}}=\pm1$ or $\Z_2^{\rF}$ symmetry or trivial symmetry if $\eta_{\rR_{d-1}}=\pm\ii$)}
under the projection 
${\rm P}_{\pm}= \frac{1 \pm \ii \Gamma^{d-1}}{2}$.

If $\text{S}_b=\rR_i$ ($i\ne d-1$) is unitary, then it becomes anti-unitary $\text{S}_b\cdot\rC\rR_{d-1}\rT=\rR_i\rC\rR_{d-1}\rT$ 
on the domain wall. Since the matrix of $\rR_i\rC\rR_{d-1}\rT$ is $\eta\Gamma^i\Gamma^0\Gamma^{d-1}$ which is block diagonal in the chiral representation (see \App{app:Weyl}), $\rR_i$ becomes an unbroken anti-unitary symmetry $\rC\rR_i\rT\cdot X$ up to $(-1)^{\rF}$ on the domain wall where $\rC\rR_i\rT$ is induced from the $\rC\rR_i\rT$ in the bulk and it is unbroken, and $X$ is induced from $\rR_{d-1}$ in the bulk.

Therefore,
for odd $d$, the domain wall dimensional reduction of the symmetries C, R, and T obeys the following rule (up to $(-1)^{\rF}$): 
\begin{enumerate}
\item
For Dirac fermions and $d=1\mod4$, 
\bea
d\text{-}\dim&\xrightarrow{\rm DW}&(d-1)\text{-}\dim\nn\\
\rR_{d-1}&\xrightarrow{{\rm P}_{\pm}}&X\nn\\
\rR_i&\xrightarrow{\cdot \rC\rR_{d-1}\rT}&\rC\rR_i\rT\cdot X\;\;\;(i\ne d-1)\nn\\
\rC&\xrightarrow{\cdot \rC\rR_{d-1}\rT}&\rT'=X\rT \nn\\
\rT&\xrightarrow{{\rm P}_{\pm}}&\rT.
\eea
\cred{The $(d-1)$d domain wall (DW) 
Weyl fermion has the corresponding symmetry
such that $\rC_{\rm DW}$ and $\rR_{\rm DW}$ are broken,
$\rT_{\rm DW} = \rT$ or $\rT'=X\rT$, and $\rC_{\rm DW}\rR_{\rm DW}\rT_{\rm DW}$ is unbroken,
while there is also an extra internal unitary $X$ symmetry (which generates $\Z_{4,X}$ or $\Z_2^{\rF}$ or 1).} 
By \Table{tab:CPRT-trans-dd}, the matrices of T for Dirac fermions in $d=1\mod4$ and $d-1=0\mod4$ are the same and they are block diagonal in the chiral representation (see \App{app:Weyl}). So the matrix of T for Dirac fermions in $d=1\mod4$ becomes the matrix of T for Weyl fermions in $d-1=0\mod4$ under the projection 
${\rm P}_{\pm}= \frac{1 \pm \ii \Gamma^{d-1}}{2}$.
\item
For Dirac fermions and $d=3\mod4$,
\bea
d\text{-}\dim&\xrightarrow{\rm DW}&(d-1)\text{-}\dim\nn\\
\rR_{d-1}&\xrightarrow{{\rm P}_{\pm}}&X\nn\\
\rR_i&\xrightarrow{\cdot \rC\rR_{d-1}\rT}&\rC\rR_i\rT\cdot X\;\;\;(i\ne d-1)\nn\\
\rC&\xrightarrow{{\rm P}_{\pm}}&\rC \nn\\
\rT&\xrightarrow{\cdot \rC\rR_{d-1}\rT}&\rC'=\rC X.
\eea
\cred{The $(d-1)$d domain wall (DW) 
Weyl fermion has the corresponding symmetry
such that $\rT_{\rm DW}$ and $\rR_{\rm DW}$ are broken,
$\rC_{\rm DW} = \rC$ or $\rC'=\rC X$, and $\rC_{\rm DW}\rR_{\rm DW}\rT_{\rm DW}$ is unbroken,
while there is also an extra internal unitary $X$ symmetry (which generates $\Z_{4,X}$ or $\Z_2^{\rF}$ or 1).}
By \Table{tab:CPRT-trans-dd}, the matrices of C for Dirac fermions in $d=3\mod4$ and $d-1=2\mod4$ are the same and they are block diagonal in the chiral representation (see \App{app:Weyl}). So the matrix of C for Dirac fermions in $d=3\mod4$ becomes the matrix of C for Weyl fermions in $d-1=2\mod4$ under the projection 
${\rm P}_{\pm}= \frac{1 \pm \ii \Gamma^{d-1}}{2}$.
\item
For Majorana fermions and $d=3\mod8$, the matrix of $\rR_{d-1}$ is $\ii\Gamma^{d-1}$ where $\Gamma^{d-1}=\ii\sigma^3\otimes I$ in the Majorana representation (see \App{app:Majorana}) and $I$ is an identity matrix. The matrix of $X$ is $\pm I$ 
\cred{(which generates a unitary $\Z_2^{\rF}$ symmetry or trivial symmetry)}
under the projection 
${\rm P}_{\pm}= \frac{1 \pm \ii \Gamma^{d-1}}{2}$. So $X=1$ up to $(-1)^{\rF}$.
\bea
d\text{-}\dim&\xrightarrow{\rm DW}&(d-1)\text{-}\dim\nn\\
\rR_{d-1}&\xrightarrow{{\rm P}_{\pm}}&X=1\nn\\
\rR_i&\xrightarrow{\cdot \rC\rR_{d-1}\rT}&\rC\rR_i\rT\cdot X\;\;\;(i\ne d-1)\nn\\
\rC=1&\xrightarrow{{\rm P}_{\pm}}&\rC=1 \nn\\
\rT&\xrightarrow{\cdot \rC\rR_{d-1}\rT}&\rC'=\rC X=1.
\eea
The $(d-1)$d domain wall (DW) 
Majorana-Weyl fermion has the corresponding symmetry
such that $\rT_{\rm DW}$ and $\rR_{\rm DW}$ are broken,
$\rC_{\rm DW}$ is trivial, and $\rC_{\rm DW}\rR_{\rm DW}\rT_{\rm DW}$ is unbroken.
Here $\rC=1$ because the C symmetry of a Majorana fermion is trivial.
\end{enumerate}

By comparing the results in \Sec{sec:Dirac} and \Sec{sec:Majorana}, we conclude that for odd $d$,
\bea\label{eq:reduction-odd}
d\text{-}\dim&\xrightarrow{\rm DW}&(d-1)\text{-}\dim\nn\\
G_{{\rm D},i}(d\mod8)&\to&G_{\rm W}(d-1\mod8)\nn\\
G_{\rm M}(3\mod8)&\to&G_{\rm MW}(2\mod8).
\eea
Here $G_{{\rm D},i}(d\mod8)$ $(i=1,2)$ are the two possibilities for the C-R-T fractionalization with the canonical CRT for Dirac fermions in spacetime dimensions $d$, $G_{\rm W}(d\mod8)$ is the C or T fractionalization for Weyl fermions in even spacetime dimensions $d$, $G_{\rm M}(d\mod8)$ is the R-T fractionalization for Majorana fermions in spacetime dimensions $d=0,1,2,3,4\mod8$, and $G_{\rm MW}(2\mod8)=\Z_2^{\rF}$ is the trivial C fractionalization for Majorana-Weyl fermions in spacetime dimensions $d=2\mod8$.  
There is no domain wall dimensional reduction from $G_{\rm M}(1\mod8)$ because a Majorana fermion in $d=1\mod8$ has to be massless while the fermion in the bulk has to be massive.

If the spacetime dimension $d$ is even, the conventional Dirac and Majorana mass terms do not break C, R, and T symmetries. 

First, we consider Dirac fermions.
Since $\Gamma^{d-1}$ is not block diagonal in the chiral representation (see \App{app:Weyl}), we replace $\Gamma^{d-1}$ with ${\Gamma^{d-1}}'=\ii^{\frac{d}{2}}\Gamma^0\Gamma^1\cdots\Gamma^{d-1}=\ii\diag(-I,I)$ which is block diagonal to do the projection. Note that ${\Gamma^{d-1}}'$ is $\Gamma^d$ in $(d+1)$d chiral representation (see \App{app:Weyl}). However, ${\Gamma^{d-1}}'$ is imaginary while $\Gamma^{d-1}$ is real. So the conditions \eqref{eq:CPRT-condition} on the matrices of C, R, and T imply that 
\bea
M_{\rC}&=&\eta_{\rC}\left\{\begin{array}{ll}\Gamma^0\Gamma^1\Gamma^3\cdots\Gamma^{d-3}&d=0\mod4\\\Gamma^2\Gamma^4\cdots\Gamma^{d-2}{\Gamma^{d-1}}'&d=2\mod4\end{array}\right.\nn\\
M_{\rT}&=&\eta_{\rT}\left\{\begin{array}{ll}\Gamma^0\Gamma^2\Gamma^4\cdots\Gamma^{d-2}{\Gamma^{d-1}}'&d=0\mod4\\\Gamma^1\Gamma^3\cdots\Gamma^{d-3}&d=2\mod4\end{array}\right.\nn\\
M_{\rR_i}&=&\eta_{\rR_i}\Gamma^0\Gamma^1\cdots\hat{\Gamma}^i\cdots{\Gamma^{d-1}}'.
\eea
These matrices (except $M_{\rR_{d-1}}$) are block diagonal in the chiral representation (see \App{app:Weyl}).
Under the projection ${\rm P}_{\pm}= \frac{1 \pm \ii {\Gamma^{d-1}}'}{2}$, these matrices become
\bea
M_{\rC}^{d-1,\pm}&=&\eta_{\rC}\left\{\begin{array}{ll}\pm(-1)^{\frac{d}{4}}\gamma^1\gamma^3\cdots\gamma^{d-3}&d=0\mod4\\\mp\ii(-1)^{\frac{d-2}{4}}\gamma^2\gamma^4\cdots\gamma^{d-2}&d=2\mod4\end{array}\right.\nn\\
M_{\rT}^{d-1,\pm}&=&\eta_{\rT}\left\{\begin{array}{ll}-\ii (-1)^{\frac{d}{4}}\gamma^2\gamma^4\cdots\gamma^{d-2}&d=0\mod4\\(-1)^{\frac{d-2}{4}}\gamma^1\gamma^3\cdots\gamma^{d-3}&d=2\mod4\end{array}\right.\nn\\
M_{\rR_i}^{d-1,\pm}&=&\eta_{\rR_i}\ii(-1)^{\frac{d}{2}}\gamma^1\cdots\hat{\gamma}^i\cdots\gamma^{d-2}\;\;\;(i\ne d-1).
\eea

The symmetry $\rR_{d-1}$ flips the spatial coordinate $x_{d-1}$, it is possible to be reduced to an internal unitary symmetry $X$ on the domain wall.
The matrix of $\rR_{d-1}$ is $\eta_{\rR_{d-1}}\Gamma^0\Gamma^1\cdots\Gamma^{d-2}$ which is not block diagonal in the chiral representation (see \App{app:Weyl}), so $X$ is broken on the domain wall.
Therefore,
for Dirac fermions and even $d$, the domain wall dimensional reduction of the symmetries C, R, and T obeys the following rule:
\bea
d\text{-}\dim&\xrightarrow{\rm DW}&(d-1)\text{-}\dim\nn\\
\rR_{d-1}&\xrightarrow{{\rm P}_{\pm}}&1\nn\\
\rR_i&\xrightarrow{{\rm P}_{\pm}}&\rR_i\;\;\;(i\ne d-1)\nn\\
\rC&\xrightarrow{{\rm P}_{\pm}}&\rC\nn\\
\rT&\xrightarrow{{\rm P}_{\pm}}&\rT.
\eea
The $(d-1)$d domain wall (DW) 
Dirac fermion has the corresponding symmetry
such that $\rT_{\rm DW}=\rT$, $\rR_{\rm DW}=\rR$,
and $\rC_{\rm DW} = \rC$. Namely, the C, R, and T symmetries on the domain wall are induced from the C, R, and T symmetries in the bulk.

Based on the composition rules \eqref{eq:composition-matrix1} and $\eqref{eq:composition-matrix2}$, we do the following computation.
\begin{enumerate}
    \item 
    For Dirac fermions and $d=0\mod4$, 
    \bea
\rC_{\rm DW}^2\psi_{\pm}^{d-1}\rC_{\rm DW}^{-2}&=&(\gamma^1\gamma^3\cdots\gamma^{d-3})^2\psi_{\pm}^{d-1}\nn\\
&=&(-1)^{\frac{d}{4}-1}\psi_{\pm}^{d-1}.
    \eea
    \bea
\rT_{\rm DW}^2\psi_{\pm}^{d-1}\rT_{\rm DW}^{-2}&=&(\gamma^2\gamma^4\cdots\gamma^{d-2})^2(-1)^{\frac{d}{2}-1}\psi_{\pm}^{d-1}\nn\\
&=&(-1)^{\frac{d}{4}}\psi_{\pm}^{d-1}.
    \eea
    \bea
    \rR_{{\rm DW},i}^2\psi_{\pm}^{d-1}\rR_{{\rm DW},i}^{-2}
    &=&\eta_{\rR_i}^2(-1)(\gamma^1\cdots\hat{\gamma}^i\cdots\gamma^{d-2})^2\psi_{\pm}^{d-1}\nn\\
    &=&\eta_{\rR_i}^2(-1)(-1)^{\frac{(d-4)(d-3)}{2}}\psi_{\pm}^{d-1}\nn\\
    &=&-\eta_{\rR_i}^2\psi_{\pm}^{d-1}.
    \eea

\bea
(\rC_{\rm DW}\rR_{{\rm DW},i})^2\psi_{\pm}^{d-1}(\rC_{\rm DW}\rR_{{\rm DW},i})^{-2}&=&(\gamma^1\cdots\hat{\gamma}^i\cdots\gamma^{d-2}\gamma^1\gamma^3\cdots\gamma^{d-3})(\gamma^1\cdots\hat{\gamma}^i\cdots\gamma^{d-2}\gamma^1\gamma^3\cdots\gamma^{d-3})^*\psi_{\pm}^{d-1}\nn\\
    &=&(-1)^{\frac{d}{4}-1}\psi_{\pm}^{d-1}.
\eea

    \bea
(\rR_{{\rm DW},i}\rT_{\rm DW})^2\psi_{\pm}^{d-1}(\rR_{{\rm DW},i}\rT_{\rm DW})^{-2}&=&(\gamma^2\gamma^4\cdots\gamma^{d-2}\gamma^1\cdots\hat{\gamma}^i\cdots\gamma^{d-2})^*(\gamma^2\gamma^4\cdots\gamma^{d-2}\gamma^1\cdots\hat{\gamma}^i\cdots\gamma^{d-2})\psi_{\pm}^{d-1}\nn\\
    &=&(-1)^{\frac{d}{4}-1}\psi_{\pm}^{d-1}.
\eea

    \bea
(\rC_{\rm DW}\rT_{\rm DW})^2\psi_{\pm}^{d-1}(\rC_{\rm DW}\rT_{\rm DW})^{-2}&=&(\eta_{\rC}^*\eta_{\rT}^*)^2(-1)(\gamma^1\gamma^2\cdots\gamma^{d-2})^2\psi_{\pm}^{d-1}\nn\\
&=&(\eta_{\rC}^*\eta_{\rT}^*)^2(-1)(-1)^{\frac{(d-3)(d-2)}{2}}\psi_{\pm}^{d-1}\nn\\
&=&(\eta_{\rC}^*\eta_{\rT}^*)^2\psi_{\pm}^{d-1}.
    \eea

Recall that $\eta_{\rR_i}=\pm1$ and $\eta_{\rC}\eta_{\rT}=\pm1$ in the first possibility of the C-R-T fractionalization with the canonical CRT for Dirac fermions, while $\eta_{\rR_i}=\pm\ii$ and $\eta_{\rC}\eta_{\rT}=\pm\ii$ in the second possibility of the C-R-T fractionalization with the canonical CRT for Dirac fermions.
Therefore, the first (second) possibility of the C-R-T fractionalization with the canonical CRT for Dirac fermions in $d=0\mod4$ is reduced to the first (second) possibility of the C-R-T fractionalization with the canonical CRT for Dirac fermions in $d-1=3\mod4$, see \Table{tab:CPRT-group-dd-1} and \Table{tab:CPRT-group-dd-2}.
    \item 
       For Dirac fermions and $d=2\mod4$, 

\bea
\rC_{\rm DW}^2\psi_{\pm}^{d-1}\rC_{\rm DW}^{-2}&=&(\gamma^2\gamma^4\cdots\gamma^{d-2})^2(-1)^{\frac{d}{2}-1}\psi_{\pm}^{d-1}\nn\\
&=&(-1)^{\frac{d-2}{4}}\psi_{\pm}^{d-1}.
    \eea
    \bea
\rT_{\rm DW}^2\psi_{\pm}^{d-1}\rT_{\rm DW}^{-2}&=&(\gamma^1\gamma^3\cdots\gamma^{d-3})^2\psi_{\pm}^{d-1}\nn\\
&=&(-1)^{\frac{d-2}{4}}\psi_{\pm}^{d-1}.
    \eea
    \bea
    \rR_{{\rm DW},i}^2\psi_{\pm}^{d-1}\rR_{{\rm DW},i}^{-2}
    &=&\eta_{\rR_i}^2(-1)(\gamma^1\cdots\hat{\gamma}^i\cdots\gamma^{d-2})^2\psi_{\pm}^{d-1}\nn\\
    &=&\eta_{\rR_i}^2(-1)(-1)^{\frac{(d-4)(d-3)}{2}}\psi_{\pm}^{d-1}\nn\\
    &=&\eta_{\rR_i}^2\psi_{\pm}^{d-1}.
    \eea

\bea
(\rC_{\rm DW}\rR_{{\rm DW},i})^2\psi_{\pm}^{d-1}(\rC_{\rm DW}\rR_{{\rm DW},i})^{-2}&=&(\gamma^1\cdots\hat{\gamma}^i\cdots\gamma^{d-2}\gamma^2\gamma^4\cdots\gamma^{d-2})(\gamma^1\cdots\hat{\gamma}^i\cdots\gamma^{d-2}\gamma^2\gamma^4\cdots\gamma^{d-2})^*\psi_{\pm}^{d-1}\nn\\
    &=&(-1)^{\frac{d-2}{4}-1}\psi_{\pm}^{d-1}.
\eea

    \bea
(\rR_{{\rm DW},i}\rT_{\rm DW})^2\psi_{\pm}^{d-1}(\rR_{{\rm DW},i}\rT_{\rm DW})^{-2}&=&(\gamma^1\gamma^3\cdots\gamma^{d-3}\gamma^1\cdots\hat{\gamma}^i\cdots\gamma^{d-2})^*(\gamma^1\gamma^3\cdots\gamma^{d-3}\gamma^1\cdots\hat{\gamma}^i\cdots\gamma^{d-2})\psi_{\pm}^{d-1}\nn\\
    &=&(-1)^{\frac{d-2}{4}}\psi_{\pm}^{d-1}.
\eea
    
    \bea
(\rC_{\rm DW}\rT_{\rm DW})^2\psi_{\pm}^{d-1}(\rC_{\rm DW}\rT_{\rm DW})^{-2}&=&(\eta_{\rC}^*\eta_{\rT}^*)^2(-1)(\gamma^1\gamma^2\cdots\gamma^{d-2})^2\psi_{\pm}^{d-1}\nn\\
&=&(\eta_{\rC}^*\eta_{\rT}^*)^2(-1)(-1)^{\frac{(d-3)(d-2)}{2}}\psi_{\pm}^{d-1}\nn\\
&=&-(\eta_{\rC}^*\eta_{\rT}^*)^2\psi_{\pm}^{d-1}.
    \eea
    Recall that $\eta_{\rR_i}=\pm1$ and $\eta_{\rC}\eta_{\rT}=\pm1$ in the first possibility of the C-R-T fractionalization with the canonical CRT for Dirac fermions, while $\eta_{\rR_i}=\pm\ii$ and $\eta_{\rC}\eta_{\rT}=\pm\ii$ in the second possibility of the C-R-T fractionalization with the canonical CRT for Dirac fermions.
    Therefore, the first (second) possibility of the C-R-T fractionalization with the canonical CRT for Dirac fermions in $d=2\mod4$ is reduced to the second (first) possibility of the C-R-T fractionalization with the canonical CRT for Dirac fermions in $d-1=1\mod4$, see \Table{tab:CPRT-group-dd-1} and \Table{tab:CPRT-group-dd-2}.
\end{enumerate}

Next, we consider Majorana fermions and $d=2,4\mod8$.
    \begin{enumerate}
      \item 
      For Majorana fermions and $d=4\mod8$, we choose imaginary Gamma matrices. Note that $\Gamma^{d-1}$ is block diagonal in the Majorana representation (see \App{app:Majorana}). 
      By the results in \Sec{sec:Majorana}, we have $M_{\rT}=\ii\Gamma^1\Gamma^2\cdots\Gamma^{d-1}$ and $M_{\rR_i}=\ii\Gamma^0\Gamma^1\cdots\hat{\Gamma}^i\cdots\Gamma^{d-1}$. 
These matrices (except $M_{\rR_{d-1}}$) are block diagonal in the Majorana representation (see \App{app:Majorana}).
For example, we consider the case $d=4$.
Under the projection ${\rm P}_{\pm}= \frac{1 \pm \ii \Gamma^{d-1}}{2}$, these matrices become (for $d=4$)
\bea
M_{\rT}^{d-1,\pm}&=&-\ii\sigma^2,\nn\\
M_{\rR_1}^{d-1,\pm}&=&\sigma^1,\nn\\
M_{\rR_2}^{d-1,\pm}&=&\mp\sigma^3.
\eea
By comparison with the Majorana representation for $d-1=3$, we find that (up to $\pm1$)
\bea
M_{\rT}^{d-1,\pm}&=&\ii\Gamma^0_{d-1},\nn\\
M_{\rR_i}^{d-1,\pm}&=&\ii\Gamma^i_{d-1}\;\;\;(i\ne d-1).
\eea
These are exactly the matrices of R and T for $d-1=3$. Therefore, 
the R-T fractionalization for Majorana fermions in $d=4\mod8$ is reduced to the R-T fractionalization for Majorana fermions in $d-1=3\mod8$.
      \item
       For Majorana fermions and $d=2\mod8$, we can choose imaginary or real Gamma matrices. However, we choose imaginary Gamma matrices because the Majorana fermion in the bulk has to be massive. Since $\Gamma^{d-1}$ is not block diagonal in the Majorana representation (see \App{app:Majorana}), we replace $\Gamma^{d-1}$ with ${\Gamma^{d-1}}'= \ii\sigma^3\otimes I$ which is block diagonal to do the projection where $I$ is an identity matrix. Note that ${\Gamma^{d-1}}'$ is $\Gamma^d$ in $(d+1)$d Majorana representation (see \App{app:Majorana}).
       By the results in \Sec{sec:Majorana}, we have $M_{\rT}=\ii\Gamma^1\Gamma^2\cdots{\Gamma^{d-1}}'$ and $M_{\rR_i}=\ii\Gamma^0\Gamma^1\cdots\hat{\Gamma}^i\cdots{\Gamma^{d-1}}'$ if the Gamma matrices are imaginary. 
These matrices (except $M_{\rR_{d-1}}$) are block diagonal in the Majorana representation (see \App{app:Majorana}).
For example, we consider the case $d=10$.
Under the projection ${\rm P}_{\pm}= \frac{1 \pm \ii {\Gamma^{d-1}}'}{2}$, these matrices become (for $d=10$)
\bea
M_{\rT}^{d-1,\pm}&=&\mp\sigma^2\otimes\sigma^1\otimes I \otimes\sigma^2,\nn\\
M_{\rR_1}^{d-1,\pm}&=&\mp\sigma^1\otimes I\otimes I\otimes\ii\sigma^2,\nn\\
M_{\rR_2}^{d-1,\pm}&=&\pm\sigma^3\otimes I\otimes I\otimes\ii\sigma^2,\nn\\
M_{\rR_3}^{d-1,\pm}&=&\mp\ii\sigma^2\otimes\sigma^2\otimes I\otimes \sigma^2,\nn\\
M_{\rR_4}^{d-1,\pm}&=&\pm\ii\sigma^2\otimes\sigma^3\otimes\sigma^2\otimes\sigma^2,\nn\\
M_{\rR_5}^{d-1,\pm}&=&\mp\ii\sigma^2\otimes\sigma^3\otimes\sigma^1\otimes I,\nn\\
M_{\rR_6}^{d-1,\pm}&=&\pm\ii\sigma^2\otimes\sigma^3\otimes\sigma^3\otimes I,\nn\\
M_{\rR_7}^{d-1,\pm}&=&\mp I\otimes I\otimes\ii\sigma^2\otimes\sigma^1,\nn\\
M_{\rR_8}^{d-1,\pm}&=&\pm I\otimes I\otimes\ii\sigma^2\otimes\sigma^3.
\eea
By comparison with the Majorana representation for $d-1=9$, we find that (up to $\pm1$)
\bea
M_{\rT}^{d-1,\pm}&=&\Gamma^0_{d-1},\nn\\
M_{\rR_i}^{d-1,\pm}&=&\Gamma^i_{d-1}\;\;\;(i\ne d-1).
\eea
These are exactly the matrices of R and T for $d-1=9$. Therefore, 
the first choice of the R-T fractionalization for Majorana fermions in $d=2\mod8$ is reduced to the R-T fractionalization for Majorana fermions in $d-1=1\mod8$.
\end{enumerate}

For Majorana fermions and $d=2,4\mod8$, the symmetry $\rR_{d-1}$ flips the spatial coordinate $x_{d-1}$, it is possible to be reduced to an internal unitary symmetry $X$ on the domain wall.
The matrix of $\rR_{d-1}$ is $\ii\Gamma^0\Gamma^1\cdots\Gamma^{d-2}$ which is not block diagonal in the Majorana representation (see \App{app:Majorana}), so $X$ is broken on the domain wall.

Therefore,
for Majorana fermions and $d=2,4\mod8$, the domain wall dimensional reduction of the symmetries R and T obeys the following rule:
\bea
d\text{-}\dim&\xrightarrow{\rm DW}&(d-1)\text{-}\dim\nn\\
\rR_{d-1}&\xrightarrow{{\rm P}_{\pm}}&1\nn\\
\rR_i&\xrightarrow{{\rm P}_{\pm}}&\rR_i\;\;\;(i\ne d-1)\nn\\
\rC=1&\xrightarrow{{\rm P}_{\pm}}&\rC=1\nn\\
\rT&\xrightarrow{{\rm P}_{\pm}}&\rT.
\eea
The $(d-1)$d domain wall (DW) 
Majorana fermion has the corresponding symmetry
such that $\rT_{\rm DW}=\rT$ and $\rR_{\rm DW}=\rR$. Namely, the R and T symmetries on the domain wall are induced from the R and T symmetries in the bulk. Here $\rC=1$ because the C symmetry of a Majorana fermion is trivial.

By the above discussion,
we conclude that 
\bea\label{eq:reduction-even}
d\text{-}\dim&\xrightarrow{\rm DW}&(d-1)\text{-}\dim\nn\\
G_{{\rm D},1}(0\mod8)&\to&G_{{\rm D},1}(7\mod8)\nn\\
G_{{\rm D},1}(2\mod8)&\to&G_{{\rm D},2}(1\mod8)\nn\\
G_{{\rm D},1}(4\mod8)&\to&G_{{\rm D},1}(3\mod8)\nn\\
G_{{\rm D},1}(6\mod8)&\to&G_{{\rm D},2}(5\mod8)\nn\\
G_{{\rm D},2}(0\mod8)&\to&G_{{\rm D},2}(7\mod8)\nn\\
G_{{\rm D},2}(2\mod8)&\to&G_{{\rm D},1}(1\mod8)\nn\\
G_{{\rm D},2}(4\mod8)&\to&G_{{\rm D},2}(3\mod8)\nn\\
G_{{\rm D},2}(6\mod8)&\to&G_{{\rm D},1}(5\mod8)\nn\\
G_{\rm M,1}(2\mod8)&\to&G_{\rm M}(1\mod8)\nn\\
G_{\rm M}(4\mod8)&\to&G_{\rm M}(3\mod8).
\eea
Here $G_{{\rm D},i}(d\mod8)$ $(i=1,2)$ are the two possibilities for the C-R-T fractionalization with the canonical CRT for Dirac fermions in spacetime dimensions $d$, and $G_{\rm M}(d\mod8)$ is the R-T fractionalization for Majorana fermions in spacetime dimensions $d=0,1,2,3,4\mod8$.  
The C-R-T fractionalization with the canonical CRT for Dirac fermions in even spacetime dimensions $d$ and its domain wall dimensional reduction are the same.
Here for $d=2\mod8$, we choose the first one of the two choices of the R-T fractionalization because the first one is for the massive case and the fermion in the bulk has to be massive. There is no domain wall dimensional reduction from $G_{\rm M}(0\mod8)$ because a Majorana fermion in $d=0\mod8$ has to be massless while the fermion in the bulk has to be massive.

Therefore, we have proved Theorem \ref{thm-DW}.

\section{Conclusion}\label{sec:conclusion}

In this section, we summarize the results in this article.
Our main results are the following:
\begin{itemize}
    \item 
    In \Sec{sec:Dirac}, we study the group (called the C-R-T fractionalization) generated by C, R, and T for Dirac fermions which is also a group extension \eqref{eq:extension}.
    We determine the C-R-T fractionalization with the canonical CRT and its extension class for Dirac fermions in any spacetime dimension, and we find an 8-periodicity. See \Table{tab:CPRT-group-dd-1} and \Table{tab:CPRT-group-dd-2}. We find that there are two possibilities for the C-R-T fractionalization with the canonical CRT for Dirac fermions in any spacetime dimension. The two possibilities correspond to the two choices of the squares of R and CT. The condition for canonical CRT implies that the square of R determines the square of CT.
    
\item 
In \Sec{sec:Majorana}, we study the group (called the R-T fractionalization) generated by R and T for Majorana fermions.
We determine the R-T fractionalization for Majorana fermions in any spacetime dimension $d=0,1,2,3,4\mod8$. See \Table{tab:CPRT-group-dd-Majorana}. We find that the R-T fractionalization for Majorana fermions is a subgroup of the first possibility for the C-R-T fractionalization with the canonical CRT for Dirac fermions for $d=0,1\mod8$. The R-T fractionalization for Majorana fermions is a subgroup of the second possibility for the C-R-T fractionalization with the canonical CRT for Dirac fermions for $d=3,4\mod8$.
There are two choices for the R-T fractionalization for Majorana fermions for $d=2\mod8$. The first choice is for the massive case and the second choice is for the massless case. For $d=2\mod8$, the first choice of the R-T fractionalization for Majorana fermions is a subgroup of the second possibility for the C-R-T fractionalization with the canonical CRT for Dirac fermions, the second choice of the R-T fractionalization for Majorana fermions is a subgroup of the first possibility for the C-R-T fractionalization with the canonical CRT for Dirac fermions.

\item
In \Sec{sec:Majorana}, we also study the group (called the C or T fractionalization) generated by C or T for Weyl fermions.
We determine the C or T fractionalization for Weyl fermions in any even spacetime dimension. See \Table{table:CT-group-dd-Weyl}. 
We find that the C or T fractionalization for Weyl fermions is a subgroup of the two possibilities for the C-R-T fractionalization with the canonical CRT for Dirac fermions in any even spacetime dimension.
For spacetime dimensions $d=2\mod4$, Weyl fermions only have C but no T or R symmetries, while for spacetime dimensions $d=0\mod4$, Weyl fermions only have T but no C or R symmetries. Majorana-Weyl fermions only exist for spacetime dimensions $d=2\mod8$ and Majorana-Weyl fermions only have trivial C but no T or R symmetries. See \Table{table:CPT}.

\item 
In \Sec{sec:mass}, we determine the maximal number of linearly independent Dirac and Majorana mass terms and construct them explicitly. We also study how the conventional Dirac and Majorana mass terms break the symmetries C, R, or T. See \Table{table:CRT-break-Dirac} and \Table{table:CRT-break-Majorana}. The conventional Dirac mass term breaks the R symmetry for all odd spacetime dimensions $d$, the conventional Dirac mass term breaks the C symmetry for spacetime dimensions $d=1\mod4$, and the conventional Dirac mass term breaks the T symmetry for spacetime dimensions $d=3\mod4$. 
A single Majorana fermion has to be massless for spacetime dimensions $d=0,1\mod8$.
The conventional Majorana mass term breaks the R and T symmetries for spacetime dimensions $d=3\mod8$.

\item
In \Sec{sec:domain-wall},
we study the domain wall dimensional reduction of the C-R-T fractionalization and R-T fractionalization. See \eqref{eq:reduction-odd} and \eqref{eq:reduction-even}. We also study the domain wall dimensional reduction of fermions. These results are consistent. See Theorem \ref{thm-DW}, \Fig{fig:domain-wall-fermion}, and \Table{table:domain-wall-fermion}. 

Dirac fermions become Weyl fermions in the domain wall dimensional reduction from odd $d$-dimensional spacetime to even $(d-1)$-dimensional spacetime. 
Majorana fermions become Majorana-Weyl fermions in the domain wall dimensional reduction from odd $d$-dimensional spacetime to even $(d-1)$-dimensional spacetime for $d=3\mod8$. There is no domain wall dimensional reduction from Majorana fermions in spacetime dimensions $d=1\mod8$ because a single Majorana fermion in spacetime dimensions $d=1\mod8$ has to be massless while the fermion in the bulk has to be massive.
The two possibilities for the C-R-T fractionalization with the canonical CRT for Dirac fermions in odd spacetime dimensions $d$ are reduced to the C or T fractionalization for Weyl fermions in even spacetime dimensions $d-1$.
The R-T fractionalization for Majorana fermions in spacetime dimensions $d=3\mod8$ is reduced to the trivial C fractionalization for Majorana-Weyl fermions in spacetime dimensions $d-1=2\mod8$. 

Dirac fermions become Dirac fermions in the domain wall dimensional reduction from even $d$-dimensional spacetime to odd $(d-1)$-dimensional spacetime. 
Majorana fermions become Majorana fermions in the domain wall dimensional reduction from even $d$-dimensional spacetime to odd $(d-1)$-dimensional spacetime for $d=2,4\mod8$. There is no domain wall dimensional reduction from Majorana fermions in spacetime dimensions $d=0\mod8$ because a single Majorana fermion in spacetime dimensions $d=0\mod8$ has to be massless while the fermion in the bulk has to be massive.
The two possibilities of the C-R-T fractionalization with the canonical CRT for Dirac fermions in even spacetime dimensions $d$ are reduced to the two possibilities of the C-R-T fractionalization with the canonical CRT for Dirac fermions in odd spacetime dimensions $d-1$ such that the C-R-T fractionalization with the canonical CRT for Dirac fermions in even spacetime dimensions $d$ and its domain wall dimensional reduction are the same.
The R-T fractionalization for Majorana fermions in spacetime dimensions $d=4\mod8$ is reduced to the R-T fractionalization for Majorana fermions in spacetime dimensions $d-1=3\mod8$. 
The first choice of the R-T fractionalization for Majorana fermions in spacetime dimensions $d=2\mod8$ is reduced to the R-T fractionalization for Majorana fermions in spacetime dimensions $d-1=1\mod8$. Only the first choice works because the first choice is for the massive case and the fermion in the bulk has to be massive.

We discuss possible further domain wall dimensional reduction of fermions in \Fig{fig:domain-wall-fermion-further}. We may combine the left and right Majorana-Weyl or Weyl fermions into a Majorana or Dirac fermion. We may also identify the Majorana and Weyl fermions in spacetime dimensions $d=4\mod8$.
For $d=0\mod4$, the matrix of the charge conjugation is block-off-diagonal in the Weyl representation (see \Table{tab:CPRT-trans-dd} and \App{app:Weyl}), namely $M_{\rC}=\begin{pmatrix}&M\\-M&\end{pmatrix}$ for some matrix $M$.
So a Majorana fermion is of the form $\psi=\begin{pmatrix}
    \psi_L\\\psi_R
\end{pmatrix}$
where $\psi_R=-M\psi_L^*$ and $\psi_L=M\psi_R^*$.
Therefore, a Majorana fermion can be identified with a Weyl fermion for $d=0\mod4$ if and only if $MM^*=-1$. In fact, for $d=2k+2$ ($k$ odd), $M=\eta_{\rC}\gamma^2\gamma^4\cdots\gamma^{2k}$ (see \Table{tab:CPRT-trans-dd} and \App{app:Weyl}), so 
\bea
MM^*=\gamma^2\gamma^4\cdots\gamma^{2k}(\gamma^2\gamma^4\cdots\gamma^{2k})^*=(-1)^{\frac{k+1}{2}}.
\eea
Therefore, $MM^*=-1$ if and only if $d=4\mod8$. That is, a Majorana fermion can be identified with a Weyl fermion for $d=4\mod8$.
Therefore, the domain wall dimensional reduction of fermions may be performed continuously.
\begin{figure}[!h]
\begin{center}
\begin{tikzpicture}
\matrix[matrix of math nodes,inner sep=1pt,row sep=1em,column sep=1em] (M)
{
d & \text{Majorana-Weyl} & \text{Majorana} & \text{Weyl} & \text{Dirac}\\
1 & & 1^{\R} & & 1^{\C}\\
2 & 1_L^{\R} & 1_L^{\R}\oplus 1_R^{\R} & 1_L^{\C} & 1_L^{\C}\oplus 1_R^{\C}\\
3 & & 2^{\R} &  & 2^{\C}\\
4 & & 4^{\R} & 2_L^{\C} &  2_L^{\C}\oplus 2_R^{\C}\\
5 & & & & 4^{\C} \\
6 & & & 4_L^{\C} & 4_L^{\C}\oplus 4_R^{\C}\\
7 & & & & 8^{\C} \\
8 & &  16^{\R} & 8_L^{\C} & 8_L^{\C}\oplus 8_R^{\C}\\
9 & & 16^{\R} &  & 16^{\C}\\
10 &  16_L^{\R} & 16_L^{\R}\oplus 16_R^{\R} & 16_L^{\C} & 16_L^{\C}\oplus 16_R^{\C}\\
11 & & 32^{\R} &  & 32^{\C}\\
12 & & 64^{\R} & 32_L^{\C} & 32_L^{\C}\oplus 32_R^{\C}\\
}
;
\draw[dashed] (M-3-2.east) -- (M-3-3.west);
\draw[dashed] (M-3-4.east) -- (M-3-5.west);

\draw (M-5-3.east) -- (M-5-4.west);
\draw[dashed] (M-5-4.east) -- (M-5-5.west);

\draw[dashed] (M-7-4.east) -- (M-7-5.west);

\draw[dashed] (M-9-4.east) -- (M-9-5.west);

\draw[dashed] (M-11-2.east) -- (M-11-3.west);
\draw[dashed] (M-11-4.east) -- (M-11-5.west);

\draw (M-13-3.east) -- (M-13-4.west);
\draw[dashed] (M-13-4.east) -- (M-13-5.west);

\draw[->] (M-3-3.north) -- (M-2-3.south);
\draw[->] (M-3-5.north) -- (M-2-5.south);
\draw[->] (M-4-3.north west) -- (M-3-2.south east);
\draw[->] (M-4-5.north west) -- (M-3-4.south east);
\draw[->] (M-5-3.north) -- (M-4-3.south);
\draw[->] (M-5-5.north) -- (M-4-5.south);
\draw[->] (M-6-5.north west) -- (M-5-4.south east);
\draw[->] (M-7-5.north) -- (M-6-5.south);
\draw[->] (M-8-5.north west) -- (M-7-4.south east);
\draw[->] (M-9-5.north) -- (M-8-5.south);
\draw[->] (M-10-5.north west) -- (M-9-4.south east);
\draw[->] (M-11-3.north) -- (M-10-3.south);
\draw[->] (M-11-5.north) -- (M-10-5.south);
\draw[->] (M-12-3.north west) -- (M-11-2.south east);
\draw[->] (M-12-5.north west) -- (M-11-4.south east);
\draw[->] (M-13-3.north) -- (M-12-3.south);
\draw[->] (M-13-5.north) -- (M-12-5.south);

\end{tikzpicture}
\end{center}
\caption{Further domain wall dimensional reduction of fermions. The target of the arrow is the domain wall of the source of the arrow. The number indicates the dimension of the representation of the fermion, the upper index indicates that the representation is real or complex, and the lower index indicates the chirality of the fermion.
The dashed lines (- - -) connecting Majorana-Weyl and Majorana fermions or 
connecting Weyl and Dirac fermions mean that the left and right Majorana-Weyl or Weyl fermions can be combined into a Majorana or Dirac fermion. The solid lines (---) connecting Majorana and Weyl fermions in spacetime dimensions $d=4\mod8$ mean that Majorana and Weyl fermions in spacetime dimensions $d=4\mod8$ can be identified. 
}
\label{fig:domain-wall-fermion-further}
\end{figure}
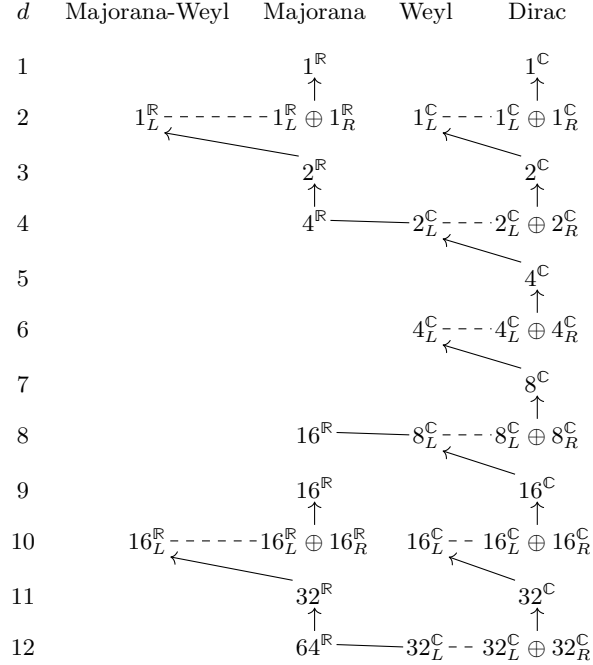

\item Although we study C-R-T fractionalization of fermions,
we only focus on a single-particle fermion theory in the Lagrangian formulation.
In fact, when there are multi-fermions in the many-body quantum systems 
\cite{Gu1308.2488, PrakashJW2011.12320, PrakashJW2011.13921, TurzilloYou2012.04621},
the C-R-T fractionalization can show further richer structures such that the 
supersymmetry comes in naturally. It will be interesting to explore the 
C-R-T fractionalization in many-body quantum systems in any dimensions systematically.

\end{itemize}

\section{Acknowledgments}

\noindent
\emph{Acknowledgments} ---
We thank Sergio Cecotti, Song Cheng, Wei Cui, Pierre Deligne, Dan Freed, Xin Fu, Jun Hou Fung, Yang-Yang Li, Pavel Putrov, Ryan Thorngren, Jie Wang, and Martin Zirnbauer for their helpful comments. 
JW also thanks Abhishodh Prakash for the past collaborations on the fractionalization of time-reversal T symmetry in \cite{PrakashJW2011.12320, PrakashJW2011.13921}.
ZW is supported by the NSFC Grant No. 12405001.
JW is supported by Harvard University CMSA.
YZY is supported by the National Science
Foundation Grant No. DMR-2238360.

\appendix

\section{Chiral representation in any $d$ (Weyl representation in even $d$)}\label{app:Weyl}

\cred{Gamma matrices represent the real Clifford algebra $\Cl_{d-1,1}$ satisfying the
anticommutator $\{\Gamma^{\mu},\Gamma^{\nu}\}=2\eta^{\mu\nu}$.
We consider the Lorentz signature $\eta^{\mu\nu}=\diag(+,-,-,\dots,-)$.
Let $d$ denote the total spacetime dimension. 
We shall use $\mu=0,1,2,\dots, d-1$ to denote the spacetime index and $j=1,2,\dots,d-1$ to denote the space index.}

For $d=2k+2$, we write 
\bea\label{eq:Gamma}
\Gamma^0=\left(\begin{array}{cc}0&I\\I&0\end{array}\right),\quad
\Gamma^j=\left(\begin{array}{cc}0&\gamma^j\\-\gamma^j&0\end{array}\right).
\eea
These Gamma matrices are called the chiral representation (also called the Weyl basis for even $d$), we will construct the Majorana representation for the Majorana fermions later in \App{app:Majorana}.
Here $I$ is an identity matrix and the matrices $\gamma^j$ are to be determined.
Then \eqref{eq:anticommutator} becomes
\bea
\{\gamma^i,\gamma^j\}=2\delta^{ij}.
\eea
The matrices $\gamma^j$ can be constructed in terms of the Pauli matrices explicitly.

The Pauli matrices are
\bea
\sigma^1=\left(\begin{array}{cc}0&1\\1&0\end{array}\right),\quad
\sigma^2=\left(\begin{array}{cc}0&-\ii\\\ii&0\end{array}\right),\quad
\sigma^3=\left(\begin{array}{cc}1&0\\0&-1\end{array}\right).
\eea

Then the matrices $\gamma^j$ are \cite{Murayama2007NotesOC}
\bea\label{eq:gamma}
\gamma^1&=&\sigma^1\otimes I\otimes\cdots\otimes I\otimes I,\nn\\
\gamma^2&=&\sigma^2\otimes I\otimes\cdots\otimes I\otimes I,\nn\\
\gamma^3&=&\sigma^3\otimes\sigma^1\otimes\cdots\otimes I\otimes I,\nn\\
&\vdots&\nn\\
\gamma^{2k-1}&=&\sigma^3\otimes\sigma^3\otimes\cdots\otimes\sigma^3\otimes\sigma^1,\nn\\
\gamma^{2k}&=&\sigma^3\otimes\sigma^3\otimes\cdots\otimes\sigma^3\otimes\sigma^2,\nn\\
\gamma^{2k+1}&=&\sigma^3\otimes\sigma^3\otimes\cdots\otimes\sigma^3\otimes\sigma^3=\ii^{-k}\gamma^1\gamma^2\cdots\gamma^{2k}.
\eea
Here $I$ is the $2\times2$ identity matrix.

For $d=2k+3$, we add $\Gamma^{2k+2}=\ii\diag(-I,I)$ to the Gamma matrices for $d=2k+2$, then we get the Gamma matrices for $d=2k+3$.
Here $\diag(-I,I)=\ii^{k}\Gamma^0\Gamma^1\cdots\Gamma^{2k+1}$ and $I$ is an identity matrix.
 In this case, the image of $\omega$ in the representation is $\ii^{k+1}\Gamma^0\Gamma^1\cdots\Gamma^{2k+2}=(\Gamma^{2k+2})^2=-1$. This is consistent with Definition \ref{def-Dirac'}.

It is useful to note that in the chiral representation, $\Gamma^0$ and $\Gamma^j$ ($j$ odd) are real, while $\Gamma^j$ ($j$ even) are imaginary; $\Gamma^0$ and $\Gamma^j$ ($j$ even) are symmetric, while $\Gamma^j$ ($j$ odd) are anti-symmetric. Therefore, $(\Gamma^0)^{\dagger}=\Gamma^0$, while $(\Gamma^j)^{\dagger}=-\Gamma^j$.

\section{Majorana representation}\label{app:Majorana}

Now we construct the Majorana representation (the Gamma matrices that are all imaginary or real) explicitly. Namely, we construct imaginary Gamma matrices for the massive case (for $d=2,3,4\mod8$) and real Gamma matrices for the massless case (for $d=0,1,2\mod8$). In the following construction, $I$ is the $2\times2$ identity matrix.

For $d=2$, we can choose imaginary Gamma matrices
\bea
\Gamma^0&=&\sigma^2,\nn\\
\Gamma^1&=&\ii\sigma^1.
\eea
We can also choose real Gamma matrices
\bea
\Gamma^0&=&\sigma^1,\nn\\
\Gamma^1&=&\ii\sigma^2.
\eea

For $d=3$, we can choose imaginary Gamma matrices
\bea
\Gamma^0&=&\sigma^2,\nn\\
\Gamma^1&=&\ii\sigma^1,\nn\\
\Gamma^2&=&\ii\sigma^3.
\eea
For $d=4$, we can choose imaginary Gamma matrices
\bea
\Gamma^0&=&\sigma^2\otimes\sigma^1,\nn\\
\Gamma^1&=&\ii\sigma^2\otimes\sigma^2,\nn\\
\Gamma^2&=&\ii\sigma^1\otimes I,\nn\\
\Gamma^3&=&\ii\sigma^3\otimes I.
\eea
For $d=8$, we can choose real Gamma matrices
\bea
\Gamma^0&=&\sigma^2\otimes\sigma^1\otimes I\otimes\sigma^2,\nn\\
\Gamma^1&=&\sigma^1\otimes I\otimes I\otimes\ii\sigma^2,\nn\\
\Gamma^2&=&\sigma^3\otimes I\otimes I\otimes\ii\sigma^2,\nn\\
\Gamma^3&=&\ii\sigma^2\otimes\sigma^2\otimes I\otimes\sigma^2,\nn\\
\Gamma^4&=&\ii\sigma^2\otimes\sigma^3\otimes\sigma^2\otimes\sigma^2,\nn\\
\Gamma^5&=&\ii\sigma^2\otimes\sigma^3\otimes\sigma^1\otimes I,\nn\\
\Gamma^6&=&\ii\sigma^2\otimes\sigma^3\otimes\sigma^3\otimes I,\nn\\
\Gamma^7&=&I\otimes I\otimes\ii\sigma^2\otimes\sigma^1.
\eea
For $d=9$, we can choose real Gamma matrices
\bea
\Gamma^0&=&\sigma^2\otimes\sigma^1\otimes I\otimes\sigma^2,\nn\\
\Gamma^1&=&\sigma^1\otimes I\otimes I\otimes\ii\sigma^2,\nn\\
\Gamma^2&=&\sigma^3\otimes I\otimes I\otimes\ii\sigma^2,\nn\\
\Gamma^3&=&\ii\sigma^2\otimes\sigma^2\otimes I\otimes\sigma^2,\nn\\
\Gamma^4&=&\ii\sigma^2\otimes\sigma^3\otimes\sigma^2\otimes\sigma^2,\nn\\
\Gamma^5&=&\ii\sigma^2\otimes\sigma^3\otimes\sigma^1\otimes I,\nn\\
\Gamma^6&=&\ii\sigma^2\otimes\sigma^3\otimes\sigma^3\otimes I,\nn\\
\Gamma^7&=&I\otimes I\otimes\ii\sigma^2\otimes\sigma^1,\nn\\
\Gamma^8&=&I\otimes I\otimes\ii\sigma^2\otimes\sigma^3.
\eea

For $d=10$, we can choose imaginary Gamma matrices
\bea
\Gamma^0&=&\sigma^2\otimes\sigma^2\otimes\sigma^1\otimes I\otimes\sigma^2,\nn\\
\Gamma^1&=&\sigma^2\otimes\sigma^1\otimes I\otimes I\otimes\ii\sigma^2,\nn\\
\Gamma^2&=&\sigma^2\otimes\sigma^3\otimes I\otimes I\otimes\ii\sigma^2,\nn\\
\Gamma^3&=&\sigma^2\otimes\ii\sigma^2\otimes\sigma^2\otimes I\otimes\sigma^2,\nn\\
\Gamma^4&=&\sigma^2\otimes\ii\sigma^2\otimes\sigma^3\otimes\sigma^2\otimes\sigma^2,\nn\\
\Gamma^5&=&\sigma^2\otimes\ii\sigma^2\otimes\sigma^3\otimes\sigma^1\otimes I,\nn\\
\Gamma^6&=&\sigma^2\otimes\ii\sigma^2\otimes\sigma^3\otimes\sigma^3\otimes I,\nn\\
\Gamma^7&=&\sigma^2\otimes I\otimes I\otimes\ii\sigma^2\otimes\sigma^1,\nn\\
\Gamma^8&=&\sigma^2\otimes I\otimes I\otimes\ii\sigma^2\otimes\sigma^3,\nn\\
\Gamma^9&=&\ii\sigma^1\otimes I\otimes I\otimes I\otimes I.
\eea
We can also choose real Gamma matrices
\bea
\Gamma^0&=&\sigma^1\otimes\sigma^2\otimes\sigma^1\otimes I\otimes\sigma^2,\nn\\
\Gamma^1&=&\sigma^1\otimes\sigma^1\otimes I\otimes I\otimes\ii\sigma^2,\nn\\
\Gamma^2&=&\sigma^1\otimes\sigma^3\otimes I\otimes I\otimes\ii\sigma^2,\nn\\
\Gamma^3&=&\sigma^1\otimes\ii\sigma^2\otimes\sigma^2\otimes I\otimes\sigma^2,\nn\\
\Gamma^4&=&\sigma^1\otimes\ii\sigma^2\otimes\sigma^3\otimes\sigma^2\otimes\sigma^2,\nn\\
\Gamma^5&=&\sigma^1\otimes\ii\sigma^2\otimes\sigma^3\otimes\sigma^1\otimes I,\nn\\
\Gamma^6&=&\sigma^1\otimes\ii\sigma^2\otimes\sigma^3\otimes\sigma^3\otimes I,\nn\\
\Gamma^7&=&\sigma^1\otimes I\otimes I\otimes\ii\sigma^2\otimes\sigma^1,\nn\\
\Gamma^8&=&\sigma^1\otimes I\otimes I\otimes\ii\sigma^2\otimes\sigma^3,\nn\\
\Gamma^9&=&\ii\sigma^2\otimes I\otimes I\otimes I\otimes I.
\eea

For $d=11$, we can choose imaginary Gamma matrices
\bea
\Gamma^0&=&\sigma^2\otimes\sigma^2\otimes\sigma^1\otimes I\otimes\sigma^2,\nn\\
\Gamma^1&=&\sigma^2\otimes\sigma^1\otimes I\otimes I\otimes\ii\sigma^2,\nn\\
\Gamma^2&=&\sigma^2\otimes\sigma^3\otimes I\otimes I\otimes\ii\sigma^2,\nn\\
\Gamma^3&=&\sigma^2\otimes\ii\sigma^2\otimes\sigma^2\otimes I\otimes\sigma^2,\nn\\
\Gamma^4&=&\sigma^2\otimes\ii\sigma^2\otimes\sigma^3\otimes\sigma^2\otimes\sigma^2,\nn\\
\Gamma^5&=&\sigma^2\otimes\ii\sigma^2\otimes\sigma^3\otimes\sigma^1\otimes I,\nn\\
\Gamma^6&=&\sigma^2\otimes\ii\sigma^2\otimes\sigma^3\otimes\sigma^3\otimes I,\nn\\
\Gamma^7&=&\sigma^2\otimes I\otimes I\otimes\ii\sigma^2\otimes\sigma^1,\nn\\
\Gamma^8&=&\sigma^2\otimes I\otimes I\otimes\ii\sigma^2\otimes\sigma^3,\nn\\
\Gamma^9&=&\ii\sigma^1\otimes I\otimes I\otimes I\otimes I,\nn\\
\Gamma^{10}&=&\ii\sigma^3\otimes I\otimes I\otimes I\otimes I.
\eea

For $d=12$, we can choose imaginary Gamma matrices
\bea
\Gamma^0&=&\sigma^2\otimes\sigma^1\otimes\sigma^2\otimes\sigma^1\otimes I\otimes\sigma^2,\nn\\
\Gamma^1&=&\sigma^2\otimes\sigma^1\otimes\sigma^1\otimes I\otimes I\otimes\ii\sigma^2,\nn\\
\Gamma^2&=&\sigma^2\otimes\sigma^1\otimes\sigma^3\otimes I\otimes I\otimes\ii\sigma^2,\nn\\
\Gamma^3&=&\sigma^2\otimes\sigma^1\otimes\ii\sigma^2\otimes\sigma^2\otimes I\otimes\sigma^2,\nn\\
\Gamma^4&=&\sigma^2\otimes\sigma^1\otimes\ii\sigma^2\otimes\sigma^3\otimes\sigma^2\otimes\sigma^2,\nn\\
\Gamma^5&=&\sigma^2\otimes\sigma^1\otimes\ii\sigma^2\otimes\sigma^3\otimes\sigma^1\otimes I,\nn\\
\Gamma^6&=&\sigma^2\otimes\sigma^1\otimes\ii\sigma^2\otimes\sigma^3\otimes\sigma^3\otimes I,\nn\\
\Gamma^7&=&\sigma^2\otimes\sigma^1\otimes I\otimes I\otimes\ii\sigma^2\otimes\sigma^1,\nn\\
\Gamma^8&=&\sigma^2\otimes\sigma^1\otimes I\otimes I\otimes\ii\sigma^2\otimes\sigma^3,\nn\\
\Gamma^9&=&\ii\sigma^2\otimes\sigma^2\otimes I\otimes I\otimes I\otimes I,\nn\\
\Gamma^{10}&=&\ii\sigma^1\otimes I\otimes I\otimes I\otimes I\otimes I,\nn\\
\Gamma^{11}&=&\ii\sigma^3\otimes I\otimes I\otimes I\otimes I\otimes I.
\eea

We can construct the Majorana representation for all $d=0,1,2,3,4\mod8$ but we omit the explicit form for higher $d$ here. The readers should already have found the laws.

\section{Charge-like and isospin-like U(1) symmetries}\label{app:further}

Now we take the internal charge-like $\U(1)_{\hat{q}}$ symmetry (known also as the symmetry of topological insulators) $\U_{\theta}:\psi\mapsto\e^{\ii \hat{q}\theta}\psi$ into account where $\hat{q}$ is the charge operator of $\psi$. This $\U(1)_{\hat{q}}$ symmetry contains the fermion parity $\Z_2^{\rF}$: $(-1)^{\rF}=\U_{\frac{\pi}{\hat{q}}}$.
Recall that $\rC:\psi\mapsto M_{\rC}\psi^*$, 
$\rR_j:\psi\mapsto M_{\rR_j}\psi(t,x_1,\dots,x_{j-1},-x_j,x_{j+1},\dots,x_{d-1})$,
and $\rT:\psi\mapsto M_{\rT}\psi(-t,x)$.

Since C is unitary and linear, C acts on $\U(1)_{\hat{q}}$ non-trivially,
\bea
\rC\U_{\theta}\rC^{-1}=\U_{-\theta}. 
\eea
\bea
\xymatrix{\psi\ar[r]^-{\rC}\ar[d]_{\U_{\theta}}&M_{\rC}\psi^*\ar[d]^{\U_{-\theta}}\\\e^{\ii \hat{q}\theta}\psi\ar[r]^-{\rC}&\e^{\ii \hat{q}\theta}M_{\rC}\psi^*}
\eea
Note that the charge operator of $\psi^*$ is $-\hat{q}$.

Since $\rR_j$ is unitary and linear,
$\rR_j$ acts on $\U(1)_{\hat{q}}$ trivially,
\bea
\rR_j\U_{\theta}\rR_j^{-1}=\U_{\theta}. 
\eea
\bea
\xymatrix{\psi\ar[r]^-{\rR_j}\ar[d]_{\U_{\theta}}&M_{\rR_j}\psi(t,x_1,\dots,x_{j-1},-x_j,x_{j+1},\dots,x_{d-1})\ar[d]^{\U_{\theta}}\\\e^{\ii \hat{q}\theta}\psi\ar[r]^-{\rR_j}&\e^{\ii \hat{q}\theta}M_{\rR_j}\psi(t,x_1,\dots,x_{j-1},-x_j,x_{j+1},\dots,x_{d-1}).}
\eea

Since T is anti-unitary and anti-linear,
T acts on $\U(1)_{\hat{q}}$ non-trivially,
\bea
\rT\U_{\theta}\rT^{-1}=\U_{-\theta}. 
\eea
\bea
\xymatrix{\psi\ar[r]^-{\rT}\ar[d]_{\U_{\theta}}&M_{\rT}\psi(-t,x)\ar[d]^{\U_{-\theta}}\\\e^{\ii \hat{q}\theta}\psi\ar[r]^-{\rT}&\e^{-\ii \hat{q}\theta}M_{\rT}\psi(-t,x).}
\eea

 For each $d$, the full group structure generated by C, R$_j$, T, and $\U_{\theta}$ which is the total group in the group extension
 \bea
1\to \U(1)_{\hat{q}}\to G\to \Z_2^{\rC}\times\Z_2^{\rR_j}\times\Z_2^{\rT}\to1
 \eea
 is $\tilde{G}_{\psi}\ltimes_{\Z_2^{\rF}} \U(1)_{\hat{q}}$ where $\tilde{G}_{\psi}$ is the group in \Table{tab:CPRT-group-dd-1} and \Table{tab:CPRT-group-dd-2}, R$_j$ acts on $\U(1)_{\hat{q}}$ trivially, while C and T act on $\U(1)_{\hat{q}}$ non-trivially.

Now we take the internal isospin-like $\U(1)_{\hat{s}}$ symmetry (known also as the symmetry of topological superconductors) $\U_{\theta}':\psi\mapsto\e^{\ii \hat{s}\theta}\psi$ into account where $\hat{s}$ is the spin operator of $\psi$. This $\U(1)_{\hat{s}}$ symmetry contains the fermion parity $\Z_2^{\rF}$: $(-1)^{\rF}=\U'_{\frac{\pi}{\hat{s}}}$.
Recall that $\rC:\psi\mapsto M_{\rC}\psi^*$, 
$\rR_j:\psi\mapsto M_{\rR_j}\psi(t,x_1,\dots,x_{j-1},-x_j,x_{j+1},\dots,x_{d-1})$,
and $\rT:\psi\mapsto M_{\rT}\psi(-t,x)$.

Since C is unitary and linear, C acts on $\U(1)_{\hat{s}}$ trivially,
\bea
\rC\U_{\theta}'\rC^{-1}=\U_{\theta}'. 
\eea
\bea
\xymatrix{\psi\ar[r]^-{\rC}\ar[d]_{\U_{\theta}'}&M_{\rC}\psi^*\ar[d]^{\U_{\theta}'}\\\e^{\ii \hat{s}\theta}\psi\ar[r]^-{\rC}&\e^{\ii \hat{s}\theta}M_{\rC}\psi^*}
\eea

Since $\rR_j$ is unitary and linear,
$\rR_j$ acts on $\U(1)_{\hat{s}}$ trivially,
\bea
\rR_j\U_{\theta}'\rR_j^{-1}=\U_{\theta}'. 
\eea
\bea
\xymatrix{\psi\ar[r]^-{\rR_j}\ar[d]_{\U_{\theta}'}&M_{\rR_j}\psi(t,x_1,\dots,x_{j-1},-x_j,x_{j+1},\dots,x_{d-1})\ar[d]^{\U_{\theta}'}\\\e^{\ii \hat{s}\theta}\psi\ar[r]^-{\rR_j}&\e^{\ii \hat{s}\theta}M_{\rR_j}\psi(t,x_1,\dots,x_{j-1},-x_j,x_{j+1},\dots,x_{d-1}).}
\eea

Since T is anti-unitary and anti-linear,
T acts on $\U(1)_{\hat{s}}$ trivially,
\bea
\rT\U_{\theta}'\rT^{-1}=\U_{\theta}'. 
\eea
\bea
\xymatrix{\psi\ar[r]^-{\rT}\ar[d]_{\U_{\theta}'}&M_{\rT}\psi(-t,x)\ar[d]^{\U_{\theta}'}\\\e^{\ii \hat{s}\theta}\psi\ar[r]^-{\rT}&\e^{-\ii \hat{s}\theta}M_{\rT}\psi(-t,x).}
\eea
Note that the spin operator of $\psi(-t,x)$ is $-\hat{s}$.

 For each $d$, the full group structure generated by C, R$_j$, T, and $\U_{\theta}'$ which is the total group in the group extension
 \bea
1\to \U(1)_{\hat{s}}\to G'\to \Z_2^{\rC}\times\Z_2^{\rR_j}\times\Z_2^{\rT}\to1
 \eea
 is $\tilde{G}_{\psi}\times_{\Z_2^{\rF}} \U(1)_{\hat{s}}$ where $\tilde{G}_{\psi}$ is the group in \Table{tab:CPRT-group-dd-1} and \Table{tab:CPRT-group-dd-2}, C, R$_j$, and T act on $\U(1)_{\hat{s}}$ trivially.

 Note that both the internal charge-like $\U(1)_{\hat{q}}$ symmetry and the internal isospin-like $\U(1)_{\hat{s}}$ symmetry commute with the symmetry CR$_j$T. This is consistent with \eqref{eq:U-CPT}.

\bibliography{CPTbib.bib}

\end{document}